\newif{\ifarxiv}
\arxivtrue%

\documentclass[orivec,envcountsect]{llncs}

\usepackage{silence}
\WarningFilter{thmtools}{LLNCS support disables automatic casing}

\usepackage{amsmath}
\usepackage{amssymb}

\usepackage{arydshln}
\usepackage{amsthm}
\usepackage{stmaryrd}
\usepackage{mathpartir}
\usepackage{upgreek}
\usepackage{microtype}

\usepackage{tikz}
\usetikzlibrary{positioning}

\usepackage{aliascnt}
\usepackage[colorlinks,urlcolor=black,citecolor=black,linkcolor=black]{hyperref}

\usepackage{paralist}
\usepackage{xspace}

\usepackage{thm-restate}
\usepackage{cleveref}

\SetSymbolFont{stmry}{bold}{U}{stmry}{m}{n}

\newcommand{\alphabet}{\Sigma}
\newcommand{\terms}{\ensuremath{{\mathcal{T}}}}

\newcommand{\KA}{\ensuremath{\mathsf{KA}}\xspace}
\newcommand{\BKA}{\ensuremath{\mathsf{BKA}}\xspace}
\newcommand{\CKA}{\ensuremath{\mathsf{CKA}}\xspace}
\newcommand{\AKA}{\ensuremath{\mathsf{T}}\xspace}

\newcommand{\equivt}[1]{\equiv_{\scriptscriptstyle#1}}
\newcommand{\leqqt}[1]{\leqq_{\scriptscriptstyle#1}}
\newcommand{\geqqt}[1]{\geqq_{\scriptscriptstyle#1}}
\newcommand{\semt}[2]{{\left\llbracket#2\right\rrbracket}_{\scriptscriptstyle#1}}

\newcommand{\equivbka}{\equivt{\BKA}}
\newcommand{\equivcka}{\equivt{\CKA}}
\newcommand{\equivaka}{\equivt{\AKA}}
\newcommand{\leqqka}{\leqqt{\KA}}
\newcommand{\leqqbka}{\leqqt{\BKA}}
\newcommand{\leqqcka}{\leqqt{\CKA}}
\newcommand{\leqqaka}{\leqqt{\AKA}}
\newcommand{\geqqaka}{\geqqt{\AKA}}
\newcommand{\sembka}{\semt{\BKA}}
\newcommand{\semcka}{\semt{\CKA}}
\newcommand{\semaka}{\semt{\AKA}}

\newcommand{\angl}[1]{\left\langle#1\right\rangle}
\newcommand{\pipe}{\;\;|\;\;}
\newcommand{\width}[1]{|#1|}

\newcommand{\lp}[1]{\mathbf{#1}}
\newcommand{\pl}[1]{\mathcal{#1}}
\newcommand{\restr}[1]{\mskip-1mu\upharpoonright_{#1}}

\newcommand{\naturals}{\mathbb{N}}
\newcommand{\sppom}{\mathsf{SP}}
\newcommand{\down}[1]{#1{\downarrow}}

\newcommand{\psplit}[1]{\mathrel{\Updelta_{#1}}}
\newcommand{\ssplit}[1]{\mathrel{\nabla_{#1}}}
\newcommand{\lefts}{\ell}
\newcommand{\rights}{r}

\newcommand{\N}{\ensuremath{\mathsf{N}}\xspace}

\newcommand{\pc}{\odot}
\newcommand{\ic}{\otimes}

\newcommand{\ls}[1]{\mathfrak{#1}}

\title{Concurrent Kleene Algebra: \texorpdfstring{\\}{} Free Model and Completeness}

\author{%
  Tobias Kapp\'{e}\and
  Paul Brunet\and
  Alexandra Silva\and
  Fabio Zanasi%
}
\institute{University College London}

\ifarxiv%
\bibliographystyle{plainurl}
\else%
\bibliographystyle{splncs03}
\fi%

\ifarxiv%
\newcommand{\slpaper}{\cite{laurence-struth-2017-arxiv}}
\else%
\newcommand{\slpaper}{\cite{laurence-struth-2017-arxiv-fixedurl}}
\fi%

\begin{document}

\maketitle

\begin{abstract}
Concurrent Kleene Algebra (CKA) was introduced by Hoare, Moeller, Struth and Wehrman in 2009 as a framework to reason about concurrent programs.
We prove that the axioms for CKA with bounded parallelism are complete for the semantics proposed in the original paper; consequently, these semantics are the free model for this fragment.
This result settles a conjecture of Hoare and collaborators.
Moreover, the technique developed to this end allows us to establish a Kleene Theorem for CKA, extending an earlier Kleene Theorem for a fragment of CKA\@.
 \end{abstract}

\section{Introduction}%
\label{section:introduction}

Concurrent Kleene Algebra (\CKA)~\cite{hoare-moeller-struth-wehrman-2009} is a mathematical formalism which extends Kleene Algebra (\KA) with a parallel composition operator, in order to express concurrent program behaviour.%
\footnote{In its original formulation, \CKA also features an operator (\emph{parallel star}) for unbounded parallelism: in harmony with several recent works~\cite{jipsen-moshier-2016,kappe-brunet-luttik-silva-zanasi-2017}, we study the variant of \CKA without parallel star, sometimes called ``weak'' \CKA.}
In spite of such a seemingly simple addition, extending the existing \KA toolkit (notably, completeness) to the setting of \CKA turned out to be a challenging task.
A lot of research happened since the original paper, both foundational~\cite{laurence-struth-2014,jipsen-moshier-2016} and on how \CKA could be used to reason about important verification tasks in concurrent systems~\cite{horn-kroening-2015,hoare-staden-moeller-struth-zhu-2016}.
However, and despite several conjectures~\cite{hoare-staden-moeller-struth-zhu-2016,jipsen-moshier-2016}, the question of the characterisation of the free \CKA and the completeness of the axioms remained open, making it impractical to use \CKA in verification tasks.
This paper settles these two open questions.
We answer positively the conjecture that the free model of \CKA is formed by  series parallel pomset languages, downward-closed under Gischer's subsumption order~\cite{gischer-1988} --- a generalisation of regular languages to sets of partially ordered words.
To this end, we prove that the original axioms proposed in~\cite{hoare-moeller-struth-wehrman-2009} are indeed complete.

Our proof of completeness is based on extending an existing completeness result that establishes series-parallel rational pomset languages as the free Bi-Kleene Algebra (\BKA)~\cite{laurence-struth-2014}.
The extension to the existing result for \BKA provides a clear understanding of the difficulties introduced by the presence of the exchange axiom and shows how to separate concerns between \CKA and \BKA, a technique also useful elsewhere.
For one, our construction also provides an extension of (half of) Kleene's theorem for \BKA~\cite{kappe-brunet-luttik-silva-zanasi-2017} to \CKA, establishing pomset automata as an operational model for \CKA and opening the door to decidability procedures similar to those previously studied for \KA. 
Furthermore, it reduces deciding the equational theory of \CKA to deciding the equational theory of \BKA. 

\BKA is defined as \CKA with the only (but significant) omission of the \emph{exchange law}, $(e \parallel f) \cdot (g \parallel h) \leqqcka (e \cdot g) \parallel (f \cdot h)$.
The exchange law is the core element of \CKA as it softens true concurrency: it states that when two sequentially composed programs (i.e., $e \cdot g$ and $f \cdot h$) are composed in parallel, they can be implemented by running their heads in parallel, followed by running their tails in parallel (i.e., $e \parallel f$, then $g \parallel h$).
The exchange law allows the implementer of a \CKA expression to interleave threads at will, without violating the specification.

To illustrate the use of the exchange law, consider a protocol with three actions: query a channel $c$, collect an answer from the same channel, and print an unrelated message $m$ on screen.
The specification for this protocol requires the query to happen before reception of the message, but the printing action being independent, it may be executed concurrently.
We will write this specification as~$\left(q(c)\cdot r(c)\right)\parallel p(m)$, with the operator $\cdot$ denoting sequential composition.
However, if one wants to implement this protocol in a sequential programming language, a total ordering of these events has to be introduced.
Suppose we choose to implement this protocol by printing $m$ while we wait to receive an answer.
This implementation can be written $q(c)\cdot p(m)\cdot r(c)$.
Using the laws of \CKA, we can prove that $q(c)\cdot p(m)\cdot r(c)\leqqcka\left(q(c)\cdot r(c)\right)\parallel p(m)$, which we interpret as the fact that this implementation respects the specification.
Intuitively, this means that the specification lists the necessary dependencies, but the implementation can introduce more.

Having a complete axiomatisation of \CKA has two main benefits.
First, it allows one to get certificates of correctness.
Indeed, if one wants to use \CKA for program verification, the decision procedure presented in~\cite{brunet-pous-struth-2017} may be used to test program equivalence.
If the test gives a negative answer, this algorithm provides a counter-example.
However if the answer is positive, no meaningful witness is produced.
With the completeness result presented here, that is constructive in nature, one could generate an axiomatic proof of equivalence in these cases.
Second, it gives one a simple way of checking when the aforementioned procedure applies.
By construction, we know that two terms are semantically equivalent whenever they are equal in every concurrent Kleene algebra, that is any model of the axioms of \CKA. 
This means that if we consider a specific semantic domain, one simply needs to check that the axioms of \CKA hold in there to know that the decision procedure of~\cite{brunet-pous-struth-2017} is sound in this model.

While this paper was in writing, a manuscript with the same result appeared~\slpaper{}.
Among other things, the proof presented here is different in that it explicitly shows how to syntactically construct terms that express certain pomset languages, as opposed to showing that such terms must exist by reasoning on a semantic level.
We refer to~\Cref{section:discussion-further-work} for a more extensive comparison.

The remainder of this paper is organised as follows. In Section~\ref{section:overview}, we give an informal overview of the completeness proof.
In Section~\ref{section:preliminaries}, we introduce the necessary concepts, notation and lemmas.
In Section~\ref{section:axiomatising}, we work out the proof.
We discuss the result in a broader perspective and outline further work in Section~\ref{section:discussion-further-work}.

\section{Overview of the Completeness Proof}%
\label{section:overview}

We start with an overview of the steps necessary to arrive at the main result.
As mentioned, our strategy in tackling \CKA-completeness is to build on the existing \BKA-completeness result.
Following an observation by Laurence and Struth, we identify \emph{downward-closure} (under Gischer's subsumption order~\cite{gischer-1988}) as the feature that distinguishes the pomsets giving semantics to \BKA-expressions from those associated with \CKA-expressions.
In a slogan,
\begin{center}
  \CKA-semantics = \BKA-semantics + downward-closure.
\end{center}
This situation is depicted in the upper part of the commuting diagram in \Cref{figure:closure-step}.
Intuitively, downward-closure can be thought of as the semantic outcome of adding the exchange axiom, which distinguishes \CKA from \BKA. 
Thus, if $a$ and $b$ are events that can happen in parallel according to the \BKA-semantics of a term, then $a$ and $b$ may also be ordered in the \CKA-semantics of that same term.

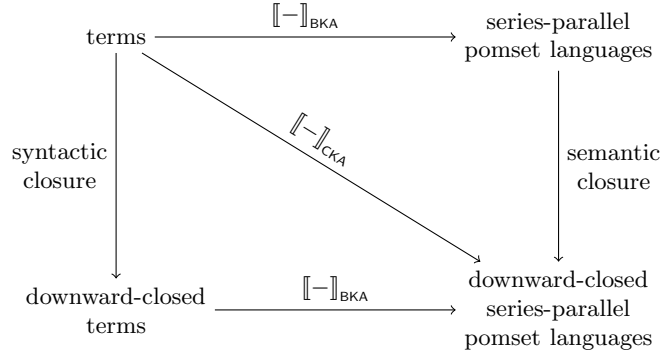
\begin{figure}[h!]
\centering
\begin{tikzpicture}
\node (terms) {terms};
\node[align=center,right=4cm of terms] (sppomlang) {series-parallel \\ pomset languages};
\node[align=center,below=3cm of terms] (dcterms) {downward-closed \\ terms};
\node[align=center] at (dcterms -| sppomlang) (dcsppomlang) {downward-closed \\ series-parallel \\ pomset languages};
\draw[->] (terms) edge node[above] {$\sembka{-}$} (sppomlang);
\draw[->] (sppomlang) edge node[right,align=center] {semantic \\ closure} (dcsppomlang);
\draw[->] (terms) edge node[sloped,above] {$\semcka{-}$} (dcsppomlang);
\draw[->] (terms) edge node[left,align=center] {syntactic \\ closure} (dcterms);
\draw[->] (dcterms) edge node[above] {$\sembka{-}$} (dcsppomlang);
\end{tikzpicture}
\caption{The connection between \BKA and \CKA semantics mediated by closure.}%
\label{figure:closure-step}
\end{figure}

The core of our \CKA-completeness proof will be to construct a syntactic counterpart to the semantic closure.
Concretely, we shall build a function that maps a \CKA term $e$ to an equivalent term $\down{e}$, called the (syntactic) \emph{closure} of $e$.
The lower part of the commuting diagram in \Cref{figure:closure-step} shows the property that $\down{e}$ must satisfy in order to deserve the name of closure: its \BKA semantics has to be the same as the \CKA semantics of $e$.

\begin{example}
Consider $e = a \parallel b$, whose \CKA-semantics prescribe that $a$ and $b$ are events that may happen in parallel.
One closure of this term would be $\down{e} = a \parallel b + a \cdot b + b \cdot a$, whose \BKA-semantics stipulate that either $a$ and $b$ execute purely in parallel, or $a$ precedes $b$, or $b$ precedes $a$ --- thus matching the optional parallelism of $a$ and $b$.
For a more non-trivial example, take $e = a^\star \parallel b^\star$, which represents that finitely many repetitions of $a$ and $b$ occur, possibly in parallel.
A closure of this term would be $\down{e} = {(a^\star \parallel b^\star)}^\star$: finitely many repetitions of $a$ and $b$ occur truly in parallel, which is repeated indefinitely.
\end{example}

In order to find $\down{e}$ systematically, we are going to construct it in stages, through a completely syntactic procedure where each transformation has to be valid according to the axioms. There are three main stages.
\begin{enumerate}[(i)]
    \item
    We note that, not unexpectedly, the hardest case for computing the closure of a term is when $e$ is a parallel composition, i.e., when $e = e_0 \parallel e_1$ for some \CKA terms $e_0$ and $e_1$.
    For the other operators, the closure of the result can be obtained by applying the same operator to the closures of its arguments.
    For instance, $\down{\left(e+f\right)}= \down{e} + \down{f}$.
    This means that we can focus on calculating the closure for the particular case of parallel composition.

    \item
    We construct a \emph{preclosure} of such terms $e$, whose \BKA semantics contains all but possibly the sequentially composed pomsets of the \CKA semantics of $e$.
    Since every sequentially composed pomset decomposes (uniquely) into non-sequential pomsets, we can use the preclosure as a basis for induction.

    \item
    We extend this preclosure of $e$ to a proper closure, by leveraging the fixpoint axioms of \KA to solve a system of linear inequations.
    This system encodes ``stringing together'' non-sequential pomsets to build all pomsets in $e$.
\end{enumerate}

As a straightforward consequence of the closure construction, we obtain a completeness theorem for \CKA, which establishes the set of closed series-rational pomset languages as the free \CKA. 

\section{Preliminaries}%
\label{section:preliminaries}

We fix a finite set of symbols $\alphabet$, the \emph{alphabet}.
We use the symbols $a$, $b$ and $c$ to denote elements of $\alphabet$.
The two-element set $\{0,  1\}$ is denoted by $2$.
Given a set $S$, the set of subsets (\emph{powerset}) of $S$ is denoted by $2^S$.

In the interest of readability, the proofs for technical lemmas in this section
\ifarxiv%
are deferred to \Cref{appendix:proofs-preliminaries}.
\else%
can be found in the full version~\cite{brunet-kappe-silva-zanasi-2017-arxiv}.
\fi%

\subsection{Pomsets}

A trace of a sequential program can be modelled as a word, where each letter represents an atomic event, and the order of the letters in the word represents the order in which the events took place.
Analogously, a trace of a concurrent program can be thought of as word where letters are partially ordered, i.e., there need not be a causal link between events.
In literature, such a partially ordered word is commonly called a \emph{partial word}~\cite{grabowski-1981}, or \emph{partially ordered multiset} (\emph{pomset}, for short)~\cite{gischer-1988}; we use the latter term.

A formal definition of pomsets requires some work, because the partial order should order \emph{occurrences} of events rather than the events themselves.
For this reason, we first define a labelled poset.

\begin{definition}%
\label{definition:lp}
A \emph{labelled poset} is a tuple $\angl{S, \leq, \lambda}$, where $\angl{S, \leq}$ is a partially ordered set (i.e., $S$ is a set and $\leq$ is a partial order on $S$), in which $S$ is called the \emph{carrier} and $\leq$ is the \emph{order}; $\lambda: S \to \alphabet$ is a function called the \emph{labelling}.
\end{definition}
We denote labelled posets with lower-case bold symbols $\lp{u}$, $\lp{v}$, et cetera.
Given a labelled poset $\lp{u}$, we write $S_\lp{u}$ for its carrier, $\leq_\lp{u}$ for its order and $\lambda_\lp{u}$ for its labelling.
We write $\lp{1}$ for the empty labelled poset.
We say that two labelled posets are \emph{disjoint} if their carriers are disjoint.

Disjoint labelled posets can be composed parallelly and sequentially; parallel composition simply juxtaposes the events, while sequential composition imposes an ordering between occurrences of events originating from the left operand and those originating from the right operand.

\begin{definition}%
\label{definition:lp-composition}
Let $\lp{u}$ and $\lp{v}$ be disjoint.
We write $\lp{u} \parallel \lp{v}$ for the \emph{parallel composition} of $\lp{u}$ and $\lp{v}$, which is the labelled poset with the carrier $S_{\lp{u} \cup \lp{v}} = S_\lp{u} \cup S_\lp{v}$, the order $\leq_{\lp{u} \parallel \lp{v}}\ =\ \leq_\lp{u} \cup\ \leq_\lp{v}$ and the labeling $\lambda_{\lp{u} \parallel \lp{v}}$ defined by
\[
    \lambda_{\lp{u} \parallel \lp{v}}(x) =
    \begin{cases}
      \lambda_\lp{u}(x)&x \in S_\lp{u};\\
      \lambda_\lp{v}(x)&x \in S_\lp{v}.
    \end{cases}
\]

Similarly, we write $\lp{u} \cdot \lp{v}$ for the \emph{sequential composition} of $\lp{u}$ and $\lp{v}$, that is, labelled poset with the carrier $S_{\lp{u} \cup \lp{v}}$ and the partial order
\[\leq_{\lp{u} \cdot \lp{v}}\ =\ \leq_\lp{u} \cup \leq_\lp{v} \cup\ (S_\lp{u} \times S_\lp{v}),\]
as well as the labelling $\lambda_{\lp{u} \cdot \lp{v}} = \lambda_{\lp{u} \parallel \lp{v}}$.
\end{definition}
\noindent
Note that $\lp{1}$ is neutral for sequential and parallel composition, in the sense that we have $\lp{1} \parallel \lp{u} = \lp{1} \cdot \lp{u} = \lp{u} = \lp{u} \cdot \lp{1} = \lp{u} \parallel \lp{1}$.

There is a natural ordering between labelled posets with regard to concurrency.

\begin{definition}%
\label{definition:lp-relations}
Let $\lp{u}, \lp{v}$ be labelled posets.
A \emph{subsumption} from $\lp{u}$ to $\lp{v}$ is a bijection $h: S_\lp{u} \to S_\lp{v}$ that preserves order and labels, i.e., $u \leq_\lp{u} u'$ implies that $h(u) \leq_\lp{v} h(u')$, and $\lambda_\lp{v} \circ h = \lambda_\lp{u}$.
We simplify and write $h: \lp{u} \to \lp{v}$ for a subsumption from $\lp{u}$ to $\lp{v}$.
If such a subsumption exists, we write $\lp{v} \sqsubseteq \lp{u}$.
Furthermore, $h$ is an \emph{isomorphism} if both $h$ and its inverse $h^{-1}$ are subsumptions.
If there exists an isomorphism from $\lp{u}$ to $\lp{v}$ we write $\lp{u} \cong \lp{v}$.
\end{definition}

Intuitively, if $\lp{u} \sqsubseteq \lp{v}$, then $\lp{u}$ and $\lp{v}$ both order the same set of (occurrences of) events, but $\lp{u}$ has more causal links, or ``is more sequential'' than $\lp{v}$.
One easily sees that $\sqsubseteq$ is a preorder on labelled posets of finite carrier.

Since the actual contents of the carrier of a labelled poset do not matter, we can abstract from them using isomorphism.
This gives rise to pomsets.

\begin{definition}%
\label{definition:pomset}
A \emph{pomset} is an isomorphism class of labelled posets, i.e., the class $[\lp{v}] \triangleq \{ \lp{u} : \lp{u} \cong \lp{v} \}$ for some labelled poset $\lp{v}$.
Composition lifts to pomsets: we write $[\lp{u}] \parallel [\lp{v}]$ for $[\lp{u} \parallel \lp{v}]$ and $[\lp{u}] \cdot [\lp{v}]$ for $[\lp{u} \cdot \lp{v}]$.
Similarly, subsumption also lifts to pomsets: we write $[\lp{u}] \sqsubseteq [\lp{v}]$, precisely when $\lp{u} \sqsubseteq \lp{v}$.
\end{definition}
We denote pomsets with upper-case symbols $U$, $V$, et cetera.
The \emph{empty pomset}, i.e., $[\lp{1}] = \{ \lp{1} \}$, is denoted by $1$; this pomset is neutral for sequential and parallel composition.
To ensure that $[\lp{v}]$ is a set, we limit the discussion to labelled posets whose carrier is a subset of some set $\mathbb{S}$.
The labelled posets in this paper have finite carrier; it thus suffices to choose $\mathbb{S} = \naturals$ to represent all pomsets with finite (or even countably infinite) carrier.

Composition of pomsets is well-defined: if $\lp{u}$ and $\lp{v}$ are not disjoint, we can find $\lp{u}', \lp{v}'$ disjoint from $\lp{u}, \lp{v}$ respectively such that $\lp{u} \cong \lp{u}'$ and $\lp{v} \cong \lp{v}'$.
The choice of representative does not matter, for if $\lp{u} \cong \lp{u}'$ and $\lp{v} \cong \lp{v'}$, then $\lp{u} \cdot \lp{v} \cong \lp{u}' \cdot \lp{v}'$.
Subsumption of pomsets is also well-defined: if $\lp{u}' \cong \lp{u} \sqsubseteq \lp{v} \cong \lp{v}'$, then $\lp{u}' \sqsubseteq \lp{v}'$.
One easily sees that $\sqsubseteq$ is a partial order on finite pomsets, and that sequential and parallel composition are monotone with respect to $\sqsubseteq$, i.e., if $U \sqsubseteq W$ and $V \sqsubseteq X$, then $U \cdot V \sqsubseteq W \cdot X$ and $U \parallel V \sqsubseteq W \parallel X$.
Lastly, we note that both types of composition are associative, both on the level of pomsets and labelled posets; we therefore omit parentheses when no ambiguity is likely.

\subsubsection{Series-parallel pomsets}

If $a \in \alphabet$, we can construct a labelled poset with a single element labelled by $a$; indeed, since any labelled poset thus constructed is isomorphic, we also use $a$ to denote this isomorphism class; such a pomset is called a \emph{primitive pomset}.
A pomset built from primitive pomsets and sequential and parallel composition is called \emph{series-parallel}; more formally:
\begin{definition}%
\label{definition:pomset-sp}
The set of \emph{series-parallel} pomsets, denoted $\sppom(\alphabet)$, is the smallest set such that $1 \in \sppom(\alphabet)$ as well as $a \in \sppom(\alphabet)$ for every $a \in \alphabet$, and is closed under parallel and sequential composition.
\end{definition}

We elide the sequential composition operator when we explicitly construct a pomset from primitive pomsets, i.e., we write $ab$ instead of $a \cdot b$ for the pomset obtained by sequentially composing the (primitive) pomsets $a$ and $b$.
In this notation, sequential composition takes precedence over parallel composition.

All pomsets encountered in this paper are series-parallel.
A useful feature of series-parallel pomsets is that we can deconstruct them in a standard fashion~\cite{gischer-1988}.

\begin{lemma}%
\label{lemma:pomset-unique-decomposition}
Let $U \in \sppom(\alphabet)$.
Then \emph{exactly one} of the following is true: either
\begin{inparaenum}[(i)]
    \item $U = 1$, or
    \item $U = a$ for some $a \in \alphabet$, or
    \item $U = U_0 \cdot U_1$ for $U_0, U_1 \in \sppom(\alphabet) \setminus \{ 1 \}$, or
    \item $U = U_0 \parallel U_1$ for $U_0, U_1 \in \sppom(\alphabet) \setminus \{ 1 \}$.
\end{inparaenum}
\end{lemma}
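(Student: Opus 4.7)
The proof has two parts: existence (at least one of the four alternatives holds) and mutual exclusivity (at most one holds). Existence follows by a straightforward induction on the definition of $\sppom(\alphabet)$: the base cases give (i) and (ii), and if $U = U_0 \cdot U_1$ (resp.\ $U_0 \parallel U_1$) with some factor equal to $1$, absorption by the neutral element reduces the problem to the other factor, to which the induction hypothesis applies; otherwise we land directly in case (iii) or (iv).

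\textbf{Uniqueness.} The cases are separated by carrier size and order structure. Fix any representative $\lp{u}$ of $U$. Case (i) is isolated because $\lp{1}$ has empty carrier while a primitive pomset has a one-element carrier, and a composition $U_0 \star U_1$ with $U_0, U_1 \neq 1$ (for $\star \in \{\cdot, \parallel\}$) has carrier of size at least two. Case (ii) is similarly isolated: the singleton carrier of $a$ cannot coincide with the carrier of a composition where both factors are non-empty. Hence the only subtle point is showing that (iii) and (iv) are mutually exclusive.

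\textbf{Separating sequential from parallel.} Suppose towards contradiction that $U = U_0 \cdot U_1 = V_0 \parallel V_1$ with all four factors distinct from $1$. Choosing suitably disjoint representatives, there is a labelled poset $\lp{u}$ representing $U$ whose carrier is partitioned both as $S_{\lp{u}_0} \sqcup S_{\lp{u}_1}$ and as $S_{\lp{v}_0} \sqcup S_{\lp{v}_1}$, and whose order is, respectively, $\leq_{\lp{u}_0} \cup \leq_{\lp{u}_1} \cup\, (S_{\lp{u}_0} \times S_{\lp{u}_1})$ and $\leq_{\lp{v}_0} \cup \leq_{\lp{v}_1}$. Pick $x \in S_{\lp{u}_0}$ and $y \in S_{\lp{u}_1}$; then $x \leq_\lp{u} y$, so $x$ and $y$ must lie in the same parallel part, say $S_{\lp{v}_0}$. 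A short argument then shows $S_{\lp{u}_1} \subseteq S_{\lp{v}_0}$ and dually $S_{\lp{v}_1} \subseteq S_{\lp{u}_0}$, so choosing any $y' \in S_{\lp{v}_1}$ and $x' \in S_{\lp{u}_1}$ yields $y' \in S_{\lp{u}_0}$ with $y' \leq_\lp{u} x'$, while $y'$ and $x'$ lie in different parallel components and are therefore incomparable, a contradiction.

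\textbf{Main obstacle.} The only nontrivial step is the last one above; the rest is a bookkeeping exercise. The delicate point is to remember that we work up to isomorphism, so we must begin by choosing a single representative $\lp{u}$ compatible with \emph{both} decompositions simultaneously (using disjointness of representatives in Definition~\ref{definition:lp-composition}), after which the order-theoretic contradiction is immediate. Note that the lemma only asserts uniqueness of the \emph{form}, not of the factors in (iii) or (iv); associativity of both compositions means the factors themselves are not unique, so nothing more needs to be shown.
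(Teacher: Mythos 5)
The paper does not prove this lemma; it is cited as a standard deconstruction result due to Gischer~\cite{gischer-1988}, so there is no in-text argument to compare against. Judged on its own terms, your proof is correct.

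The existence half by structural induction on the derivation $U \in \sppom(\alphabet)$, absorbing unit factors and recursing, is sound. For uniqueness, separating (i) and (ii) from (iii) and (iv) by carrier cardinality (empty, singleton, versus at least two) is exactly right. The heart of the matter is showing (iii) and (iv) exclude each other, and your argument works: after fixing one representative $\lp{u}$ carrying both partitions $S_{\lp{u}_0} \sqcup S_{\lp{u}_1}$ and $S_{\lp{v}_0} \sqcup S_{\lp{v}_1}$, fixing $x \in S_{\lp{u}_0} \cap S_{\lp{v}_0}$ (say) forces $S_{\lp{u}_1} \subseteq S_{\lp{v}_0}$ because every $y \in S_{\lp{u}_1}$ is comparable to $x$. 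One small wording nit: the next containment $S_{\lp{v}_1} \subseteq S_{\lp{u}_0}$ is not obtained ``dually'' but by plain set complementation ($S_{\lp{v}_1}$ is disjoint from $S_{\lp{v}_0} \supseteq S_{\lp{u}_1}$, hence lands in $S_{\lp{u}_0}$); either way the resulting pair $y' \in S_{\lp{v}_1}, x' \in S_{\lp{u}_1}$ is sequentially comparable but parallelly incomparable, giving the contradiction. An even slightly shorter route is available: by the same comparability argument with the roles of the indices switched, one also gets $S_{\lp{u}_0} \subseteq S_{\lp{v}_0}$, so the whole carrier sits in $S_{\lp{v}_0}$ and $S_{\lp{v}_1}$ is empty, contradicting $V_1 \neq 1$. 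Your remark that only the \emph{form} of the decomposition is claimed unique (not the factors, which associativity makes non-unique) is a useful clarification and is consistent with the lemma statement.

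Finally, be aware that the step ``choose a single representative $\lp{u}$ compatible with both decompositions'' does need the transport of the second partition along the isomorphism witnessing $[\lp{u}_0 \cdot \lp{u}_1] = [\lp{v}_0 \parallel \lp{v}_1]$; you flag this as the delicate point and that is appropriate, but in a polished write-up you would want to spell out that you pull $S_{\lp{v}_0}, S_{\lp{v}_1}$ back along that isomorphism to obtain the second partition on $S_\lp{u}$.
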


In the sequel, it will be useful to refer to pomsets that are \emph{not} of the third kind above, i.e., cannot be written as $U_0 \cdot U_1$ for $U_0, U_1 \in \sppom(\alphabet) \setminus \{ 1 \}$, as \emph{non-sequential} pomsets.
\Cref{lemma:pomset-unique-decomposition} gives a normal form for series-parallel pomsets, as follows.

\begin{corollary}%
\label{lemma:pomset-normal-form}
A pomset $U \in \sppom(\alphabet)$ can be uniquely decomposed as $U = U_0 \cdot U_1 \cdots U_{n-1}$, where for all $0 \leq i < n$, $U_i$ is series parallel and non-sequential.
\end{corollary}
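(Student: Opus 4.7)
The plan is to proceed by strong induction on $|S_U|$, with \Cref{lemma:pomset-unique-decomposition} driving the case analysis at each step.

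For \emph{existence}, I would dispatch the four cases given by \Cref{lemma:pomset-unique-decomposition}. If $U = 1$, take $n = 0$. If $U = a$, take $n = 1$ with $U_0 = a$, noting that a singleton is not a non-trivial sequential composition. If $U = V_0 \parallel V_1$ with $V_0, V_1 \neq 1$, the exclusivity clause of \Cref{lemma:pomset-unique-decomposition} forbids $U$ from also being sequential, so $U$ itself is non-sequential and $n = 1$, $U_0 = U$ works. If $U = V_0 \cdot V_1$ with $V_0, V_1 \neq 1$, apply the induction hypothesis to the strictly smaller $V_0$ and $V_1$ and concatenate the resulting decompositions.

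For \emph{uniqueness}, suppose $U = U_0 \cdots U_{n-1} = W_0 \cdots W_{m-1}$, with all factors non-sequential and distinct from $1$. The case $U = 1$ forces $n = m = 0$. Otherwise the crux is to show $U_0 = W_0$; cancellation then leaves $U_1 \cdots U_{n-1} = W_1 \cdots W_{m-1}$ on a smaller carrier and closes the induction. To identify the leftmost factor, I would work with the \emph{sequential cuts} of $U$, i.e., partitions $(A,B)$ of $S_U$ such that every element of $A$ is strictly below every element of $B$. A short argument shows such cuts are totally ordered by inclusion of $A$: if $x \in A_1 \setminus A_2$ and $y \in A_2 \setminus A_1$, then $x < y$ from the first cut and $y < x$ from the second, a contradiction. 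The decomposition $U_0 \cdots U_{n-1}$ then corresponds to a chain of cuts which is maximal precisely because each factor is non-sequential, and likewise for $W_0 \cdots W_{m-1}$; since a maximal chain in a finite linear order is unique, the two chains coincide, and in particular $U_0 = W_0$.

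The main obstacle is the uniqueness half. At the level of pairs, sequential decomposition is genuinely ambiguous (for example, $abc = a \cdot (bc) = (ab) \cdot c$), and only the non-sequentiality constraint on each factor restores uniqueness. The key technical observation needed is that the sequential cuts of a pomset form a linear order, so that the decomposition into non-sequential factors is governed by the unique maximal chain of cuts; once this is in place, the rest of the argument is routine bookkeeping.
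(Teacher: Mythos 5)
Your proof is correct. The paper presents this as a corollary of \Cref{lemma:pomset-unique-decomposition} and supplies no proof, so there is no official argument to compare against; but what you have written is a sound justification. The existence half is the routine structural induction you describe. For uniqueness, the key idea --- that the sequential cuts of a pomset are linearly ordered by inclusion of the lower part, and that a decomposition into non-empty non-sequential factors is exactly the full chain of cuts --- is the right one, and your argument for both halves of that claim (linearity of cuts, and maximality because a strictly split factor would itself be sequential) is valid. Two small remarks. First, you correctly add the side condition that the factors are distinct from $1$; the corollary as stated in the paper omits it, but it is needed for uniqueness, since otherwise one could always pad with extra $1$'s. Second, once you know the chain of cuts determines the decomposition, the factors $U_i$ are recovered directly as restrictions of $U$ to the successive differences $A_{i+1} \setminus A_i$, so the extra induction-with-cancellation scaffolding is not needed; the cut argument alone already yields uniqueness in one stroke.
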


\subsubsection{Factorisation}

We now go over some lemmas on pomsets that will allow us to factorise pomsets later on.
First of all, one easily shows that subsumption is irrelevant on empty and primitive pomsets, as witnessed by the following lemma.
\begin{restatable}{lemma}{pomsetsubsumptionbase}%
\label{lemma:pomset-subsumption-base}
Let $U$ and $V$ be pomsets such that $U \sqsubseteq V$ or $V \sqsubseteq U$.
If $U$ is empty or primitive, then $U = V$.
\end{restatable}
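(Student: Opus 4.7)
The plan is to unfold the definitions and observe that subsumption is a bijection, so the cardinalities of the carriers are equal. When the carrier has size zero or one, the order structure is forced to be trivial and the labels are pinned down, which means the pomsets must coincide.

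First, I would pick representatives $\lp{u}$ and $\lp{v}$ of $U$ and $V$, and by the hypothesis fix a subsumption $h$ between them. The two cases $U \sqsubseteq V$ and $V \sqsubseteq U$ are symmetric for the purposes of this argument, since in either direction $h$ is a bijection between $S_\lp{u}$ and $S_\lp{v}$ that preserves labels (preservation of order is not used here). So it suffices to treat, say, the case $h \colon \lp{v} \to \lp{u}$.

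Now split on the form of $U$. If $U$ is empty, then $S_\lp{u} = \emptyset$, and bijectivity of $h$ forces $S_\lp{v} = \emptyset$ as well, so $\lp{v} \cong \lp{1}$ and hence $V = 1 = U$. If $U$ is primitive, say $U = a$ with $\lp{u}$ having a single carrier element $x$ such that $\lambda_\lp{u}(x) = a$, then bijectivity of $h$ forces $S_\lp{v}$ to be a singleton $\{y\}$; the order on $\lp{v}$ is necessarily the trivial one; and by label preservation we have $\lambda_\lp{v}(y) = \lambda_\lp{u}(h(y)) = \lambda_\lp{u}(x) = a$. Hence $\lp{v} \cong \lp{u}$ and $V = a = U$.

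There is no genuine obstacle here: the entire content of the lemma rests on the fact that subsumption is required to be a \emph{bijection}, so cardinality is invariant under $\sqsubseteq$, and on carriers of size at most one the order-preservation and label-preservation conditions become trivial or immediate. The only minor bookkeeping is to make sure the argument is independent of which direction of $\sqsubseteq$ is assumed, which is handled by noting that only bijectivity and label-preservation are used.
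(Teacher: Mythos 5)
Your proof is correct and takes essentially the same route as the paper: the paper also factors the argument through the bijectivity of the subsumption witness, handling the carrier-size-zero and carrier-size-one cases separately (in two auxiliary lemmas) by observing that a bijection forces the carriers to have equal cardinality and that order on such small carriers is trivial. Your added remark that only bijectivity and label-preservation are used, making the two directions of $\sqsubseteq$ symmetric, is a correct and slightly tidier observation than the paper's more mechanical case split.
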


We can also consider how pomset composition and subsumption relate.
It is not hard to see that if a pomset is subsumed by a sequentially composed pomset, then this sequential composition also appears in the subsumed pomset.
A similar statement holds for pomsets that subsume a parallel composition.
\begin{restatable}[Factorisation]{lemma}{pomsetfactorisesubsumption}%
\label{lemma:pomset-factorise-subsumption}
Let $U$, $V_0$, and $V_1$ be pomsets such that $U$ is subsumed by $V_0 \cdot V_1$.
Then there exist pomsets $U_0$ and $U_1$ such that:
\[U = U_0 \cdot U_1,\, U_0 \sqsubseteq V_0,\ and\ U_1 \sqsubseteq V_1.\]

Also, if $U_0$, $U_1$ and $V$ are pomsets such that $U_0 \parallel U_1 \sqsubseteq V$, then there exist pomsets $V_0$ and $V_1$ such that:
\[V = V_0 \parallel V_1,\,U_0 \sqsubseteq V_0,\,and~U_1 \sqsubseteq V_1.\]
\end{restatable}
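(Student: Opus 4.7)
The plan is to handle both parts by fixing representative labelled posets and using the subsumption map to split a carrier, from which the desired factors are obtained as restrictions.

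For the first part, suppose $U \sqsubseteq V_0 \cdot V_1$, witnessed by a subsumption $h : \lp{w} \to \lp{u}$, where $\lp{w}$ is a representative of $V_0 \cdot V_1$ built by sequentially composing disjoint representatives $\lp{v}_0, \lp{v}_1$ of $V_0, V_1$, and $\lp{u}$ is a representative of $U$. The carrier of $\lp{u}$ then partitions as $S_0 \sqcup S_1$, with $S_i = h(S_{\lp{v}_i})$. I would take $U_i$ to be the class of $\lp{u}\restr{S_i}$, so that $h$ restricts to a subsumption $\lp{v}_i \to \lp{u}\restr{S_i}$, giving $U_i \sqsubseteq V_i$. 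The remaining content is the equality $\lp{u} = \lp{u}\restr{S_0} \cdot \lp{u}\restr{S_1}$: one shows that $\leq_\lp{u}$ decomposes as $\leq_{\lp{u}\restr{S_0}} \cup \leq_{\lp{u}\restr{S_1}} \cup (S_0 \times S_1)$. The inclusion of $S_0 \times S_1$ in $\leq_\lp{u}$ follows because $h$ preserves order and every element of $S_{\lp{v}_0}$ sits below every element of $S_{\lp{v}_1}$ in $\lp{w}$; conversely, no element of $S_1$ can be $\leq_\lp{u}$-below an element of $S_0$, since antisymmetry would otherwise identify the two.

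The second part is analogous. Given a subsumption $h : \lp{v} \to \lp{u}_0 \parallel \lp{u}_1$ witnessing $U_0 \parallel U_1 \sqsubseteq V$, I would partition $S_\lp{v}$ as $T_0 \sqcup T_1$ with $T_i = h^{-1}(S_{\lp{u}_i})$ and take $V_i$ to be the class of $\lp{v}\restr{T_i}$; the subsumptions $U_i \sqsubseteq V_i$ again come from restricting $h$. The key observation for the equality $\lp{v} = \lp{v}\restr{T_0} \parallel \lp{v}\restr{T_1}$ is that no element of $S_{\lp{u}_0}$ is order-related to any element of $S_{\lp{u}_1}$ in $\lp{u}_0 \parallel \lp{u}_1$; since $h$ preserves order, no element of $T_0$ can be $\leq_\lp{v}$-related to an element of $T_1$ in either direction, and $\leq_\lp{v}$ is exactly $\leq_{\lp{v}\restr{T_0}} \cup \leq_{\lp{v}\restr{T_1}}$.

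The main point of care throughout is the direction convention of Definition~\ref{definition:lp-relations}: a witness of $\lp{v} \sqsubseteq \lp{u}$ is a map \emph{from} $\lp{u}$ \emph{to} $\lp{v}$. In each part the side whose decomposition is already known --- $V_0 \cdot V_1$ in the first, $U_0 \parallel U_1$ in the second --- dictates the splitting of the opposite carrier, via the image of $h$ in the first case and the preimage in the second. Apart from this bookkeeping, the rest is a routine check that orders reassemble correctly, together with the observation that the whole construction is well-defined on isomorphism classes.
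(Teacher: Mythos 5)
Your proposal is correct and follows essentially the same route as the paper: split the carrier of the subsumed poset according to the two components of the subsuming one (paper: by WLOG identifying the subsumption with the identity; you: by explicitly transporting along $h$), then verify that the resulting restrictions reassemble sequentially (resp.\ parallelly) and that the restriction of the subsumption map witnesses the two component subsumptions. The only difference is cosmetic bookkeeping around whether $h$ is taken to be the identity.
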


The next lemma can be thought of as a generalisation of Levi's lemma~\cite{levi-1944}, a well-known statement about words, to pomsets.
It says that if a sequential composition is subsumed by another (possibly longer) sequential composition, then there must be a pomset ``in the middle'', describing the overlap between the two; this pomset gives rise to a factorisation.

\begin{restatable}{lemma}{pomsetlevigeneralised}%
\label{lemma:pomset-levi-generalised}
Let $U$ and $V$ be pomsets, and let $W_0, W_1, \dots, W_{n-1}$ with $n > 0$ be non-empty pomsets such that $U \cdot V \sqsubseteq W_0 \cdot W_1 \cdots W_{n-1}$.
There exists an $m < n$ and pomsets $Y, Z$ such that:
\[Y \cdot Z \sqsubseteq W_m,\,U \sqsubseteq W_0 \cdot W_1 \cdots W_{m-1} \cdot Y,\ and\ V \sqsubseteq Z \cdot W_{m+1} \cdot W_{m+2} \cdots W_n.\]
Moreover, if $U$ and $V$ are series-parallel, then so are $Y$ and $Z$.
\end{restatable}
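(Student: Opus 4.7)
The plan is induction on $n$. The base case $n=1$ is immediate: from $U \cdot V \sqsubseteq W_0$, take $m = 0$, $Y = U$, and $Z = V$; the conditions on the prefix and suffix of $W$ then hold vacuously, and if $U, V$ are series-parallel so are $Y, Z$ by construction. For the inductive step $n > 1$, first apply the Factorisation lemma (\Cref{lemma:pomset-factorise-subsumption}) to the subsumption $U \cdot V \sqsubseteq W_0 \cdot (W_1 \cdots W_{n-1})$, obtaining pomsets $P, Q$ with $U \cdot V = P \cdot Q$ (an equality of pomsets), $P \sqsubseteq W_0$, and $Q \sqsubseteq W_1 \cdots W_{n-1}$.

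The key intermediate step is a Levi-type lemma for pomsets: whenever $U \cdot V = P \cdot Q$, either $P = U \cdot X$ and $V = X \cdot Q$, or $U = P \cdot X$ and $Q = X \cdot V$, for some pomset $X$. To see this, fix an isomorphism identifying the carriers of the two sides and partition the common carrier into four classes, according to whether each element comes from $U$ or $V$ on the left and from $P$ or $Q$ on the right. Any element in the ``$U$-$Q$'' class must precede every element in the ``$V$-$P$'' class by the structure of $U \cdot V$, yet succeed it by that of $P \cdot Q$; antisymmetry therefore forces one of these two classes to be empty. Taking $X$ to be the sub-pomset on the other (possibly empty) class yields the two cases of the dichotomy.

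In the first case ($P = U \cdot X$, $V = X \cdot Q$), set $m = 0$, $Y = U$, $Z = X$: then $Y \cdot Z = P \sqsubseteq W_0$, $U = Y$, and $V = X \cdot Q \sqsubseteq Z \cdot W_1 \cdots W_{n-1}$ by monotonicity of $\cdot$ under $\sqsubseteq$. In the second case ($U = P \cdot X$, $Q = X \cdot V$), apply the induction hypothesis to $X \cdot V \sqsubseteq W_1 \cdots W_{n-1}$ to obtain some $m'$ and pomsets $Y, Z$; then $m := m' + 1$ works, since $U = P \cdot X \sqsubseteq W_0 \cdot W_1 \cdots W_{m'} \cdot Y$ by combining $P \sqsubseteq W_0$ with the subsumption for $X$ supplied by the hypothesis.

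For the series-parallel clause, the crucial observation is that a non-sequential SP pomset (in the sense of \Cref{lemma:pomset-normal-form}) admits no proper sequential factorisation: in $U_0 \parallel U_1$ with $U_0, U_1 \neq 1$, elements from different sides are incomparable, so no nontrivial sequential split can respect the order. Consequently, any sequential factorisation $U = P \cdot X$ of a series-parallel $U$ must align with the boundaries of its unique non-sequential decomposition, so both $P$ and $X$ remain SP; similarly for $V = X \cdot Q$. This keeps the intermediate $X$ series-parallel, so the induction hypothesis can be invoked with its SP clause in force and the resulting $Y, Z$ inherit series-parallelism. The main obstacle I anticipate is making the pomset Levi lemma fully rigorous --- tracking the four-way partition of carriers and verifying that the surviving sub-pomset $X$ really implements the claimed sequential factorisations on both sides.
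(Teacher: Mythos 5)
Your proof is correct and follows essentially the same route as the paper's: the paper packages your factorisation-plus-Levi dichotomy into a standalone two-factor lemma (\Cref{lemma:pomset-levi}) proved by the very same four-way carrier partition and antisymmetry argument, then runs the identical induction on $n$ with the same two cases and the same choice of $m$. The only cosmetic difference is in the series-parallel clause, where the paper argues more directly that the interpolant $X$ is a restriction of the \N-free pomset $U\cdot V$ and hence itself \N-free, so SP by \Cref{lemma:pomset-sp-nfree}, rather than invoking uniqueness of the non-sequential decomposition as you do --- both arguments are sound.
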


Levi's lemma also has an analogue for parallel composition.

\begin{restatable}{lemma}{pomsetleviparallel}%
\label{lemma:pomset-levi-parallel}
Let $U, V, W, X$ be pomsets such that $U \parallel V = W \parallel X$.
There exist pomsets $Y_0, Y_1, Z_0, Z_1$ such that
\[U = Y_0 \parallel Y_1,\, V = Z_0 \parallel Z_1,\, W = Y_0 \parallel Z_0,\ and\ X = Y_1 \parallel Z_1.\]
\end{restatable}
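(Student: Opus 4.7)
The plan is to unfold the pomset equality into a statement about labelled posets and then decompose the common carrier into a $2\times 2$ grid. Concretely, pick representatives $\lp{u},\lp{v}$ of $U,V$ and $\lp{w},\lp{x}$ of $W,X$, chosen to be pairwise disjoint, and let $h\colon\lp{u}\parallel\lp{v}\to\lp{w}\parallel\lp{x}$ be an isomorphism witnessing the equality $U\parallel V=W\parallel X$ on the level of pomsets. Partition the carrier $S_\lp{w}\cup S_\lp{x}$ into four pieces according to both the source and target partitions:
\[
A=h(S_\lp{u})\cap S_\lp{w},\quad B=h(S_\lp{u})\cap S_\lp{x},\quad C=h(S_\lp{v})\cap S_\lp{w},\quad D=h(S_\lp{v})\cap S_\lp{x}.
\]
These are disjoint, and $S_\lp{w}=A\cup C$, $S_\lp{x}=B\cup D$, while $h^{-1}(A\cup B)=S_\lp{u}$ and $h^{-1}(C\cup D)=S_\lp{v}$.

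The key observation is that any two elements lying in \emph{different} pieces are incomparable in $\lp{w}\parallel\lp{x}$. For a pair in $A\times D$ (or $B\times C$) this is because they sit on opposite sides of the target parallel composition, whose order has no cross edges. For a pair in $A\times B$ (or $C\times D$) their preimages under $h$ both lie on the same side of the source parallel composition, but again on opposite sides of the target, so neither source nor target order relates them. For a pair in $A\times C$ (or $B\times D$) the preimages under $h$ lie on opposite sides of the source, hence are incomparable in the source, and $h$ being order-preserving propagates this to the target.

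Now let $\lp{y}_0,\lp{y}_1,\lp{z}_0,\lp{z}_1$ be the labelled posets obtained by restricting $\leq_{\lp{w}\parallel\lp{x}}$ and $\lambda_{\lp{w}\parallel\lp{x}}$ to $A,B,C,D$ respectively. By the incomparability observation, $\lp{w}=\lp{y}_0\parallel\lp{z}_0$ and $\lp{x}=\lp{y}_1\parallel\lp{z}_1$ as labelled posets on the nose. For $\lp{u}$: the restriction of $h$ to $S_\lp{u}$ is a bijection onto $A\cup B$; it preserves order in both directions because $h$ is an isomorphism and the order on $\lp{u}$ is precisely the restriction of $\leq_{\lp{u}\parallel\lp{v}}$ to $S_\lp{u}$, while the order on $\lp{y}_0\parallel\lp{y}_1$ is the restriction of $\leq_{\lp{w}\parallel\lp{x}}$ to $A\cup B$ (using again that $A,B$ are incomparable). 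Hence $\lp{u}\cong\lp{y}_0\parallel\lp{y}_1$, and symmetrically $\lp{v}\cong\lp{z}_0\parallel\lp{z}_1$. Taking pomsets $Y_i=[\lp{y}_i]$, $Z_i=[\lp{z}_i]$ yields the four required equalities.

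I expect no serious obstacle here; the content of the lemma is really the combinatorial fact that an isomorphism between two parallel compositions induces a pullback-style $2\times 2$ partition. The only care required is the bookkeeping to verify that the restrictions of the orders and labellings agree on the nose, and to handle the choice of disjoint representatives cleanly so that the parallel compositions are literally defined (and not merely well-defined up to isomorphism). This can be discharged by appealing to the fact, noted just before the lemma, that composition and subsumption of pomsets do not depend on the choice of representative.
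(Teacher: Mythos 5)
Your proposal is correct and takes essentially the same approach as the paper: both decompose the shared carrier into the $2\times 2$ grid obtained by intersecting the $U$/$V$ partition with the $W$/$X$ partition, and then verify the four parallel decompositions via incomparability across pieces. The only cosmetic difference is that the paper normalises representatives so that $\lp{u}\parallel\lp{v}=\lp{w}\parallel\lp{x}$ on the nose (making the isomorphism the identity and the pieces literal carrier intersections), whereas you keep all four representatives pairwise disjoint and carry the isomorphism $h$ explicitly.
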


The final lemma is useful when we have a sequentially composed pomset subsumed by a parallelly composed pomset.
It tells us that we can factor the involved pomsets to find subsumptions between smaller pomsets.
This lemma first appeared in~\cite{gischer-1988}, where it is called the interpolation lemma.

\begin{restatable}[Interpolation]{lemma}{pomsetinterpolation}%
\label{lemma:pomset-interpolation}
Let $U, V, W, X$ be pomsets such that $U \cdot V$ is subsumed by $W \parallel X$.
Then there exist pomsets $W_0, W_1, X_0, X_1$ such that
\[W_0 \cdot W_1 \sqsubseteq W,\, X_0 \cdot X_1 \sqsubseteq X,\, U \sqsubseteq W_0 \parallel X_0,\ and\ V \sqsubseteq W_1 \parallel X_1.\]
Moreover, if $W$ and $X$ are series-parallel, then so are $W_0$, $W_1$, $X_0$ and $X_1$.
\end{restatable}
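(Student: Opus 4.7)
The plan is to fix disjoint labelled-poset representatives for $U$, $V$, $W$, $X$ and, from the hypothesis $U \cdot V \sqsubseteq W \parallel X$, pick a witnessing subsumption $h\colon W \parallel X \to U \cdot V$ (reusing the same letters for the representatives). I would then slice $S_W \cup S_X$ along two independent axes: each element $y$ sits in exactly one of four cells, recording whether $y$ lies in $S_W$ or in $S_X$ (left or right of $\parallel$) and whether $h(y)$ lies in $S_U$ or in $S_V$ (left or right of $\cdot$). Take $W_0, W_1, X_0, X_1$ to be the labelled sub-posets induced by these four cells, with subscript $0$ for the $U$-image side and $1$ for the $V$-image side.

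For $W_0 \cdot W_1 \sqsubseteq W$ I would take the identity on $S_W$ as a candidate subsumption from $W$ to $W_0 \cdot W_1$. The only way this can fail is a pair $y \leq_W y'$ with $y \in S_{W_1}$ and $y' \in S_{W_0}$, since sequentially composing adds precisely the edges in $S_{W_0} \times S_{W_1}$ on top of the two induced orders. But then $h(y) \in S_V$ and $h(y') \in S_U$ with $h(y) \leq_{U \cdot V} h(y')$, which is impossible, because the order of $U \cdot V$ contains no edges from $S_V$ back to $S_U$. The argument for $X_0 \cdot X_1 \sqsubseteq X$ is identical. For $U \sqsubseteq W_0 \parallel X_0$, restrict $h$ to the union of the two $0$-cells; by construction this restriction is a bijection onto $S_U$, preserves labels, and preserves order because $\leq_{W_0 \parallel X_0}$ is contained in $\leq_{W \parallel X}$ and $h$ preserves the latter. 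The case $V \sqsubseteq W_1 \parallel X_1$ is symmetric.

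For the moreover clause, I would argue by structural induction that any induced sub-poset of a series-parallel pomset is itself series-parallel: if $W = W' \cdot W''$ (respectively $W' \parallel W''$), then a subset $S \subseteq S_W$ induces $(S \cap S_{W'}) \cdot (S \cap S_{W''})$ (respectively $(S \cap S_{W'}) \parallel (S \cap S_{W''})$), and the inductive hypothesis applied to each part finishes the case. Since $W_0, W_1, X_0, X_1$ are by construction induced sub-pomsets of the series-parallel $W$ and $X$, each is therefore series-parallel; alternatively one may invoke the classical Valdes--Tarjan--Lawler characterisation of series-parallelness as N-freeness, a property trivially inherited by induced sub-pomsets.

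The main thing to keep straight is the direction of $\sqsubseteq$ and hence where to cut: the subsumption $h$ flows from the parallel side $W \parallel X$ to the sequential side $U \cdot V$, so the carving has to happen on $W \parallel X$, and the essential use of the hypothesis is precisely the asymmetry of sequential composition that forbids $V$-to-$U$ edges. After that, the four verifications amount to routine bookkeeping, and the only non-trivial pomset fact invoked is the preservation of series-parallelness under induced sub-pomsets.
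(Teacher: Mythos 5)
Your proposal is correct and follows essentially the same route as the paper: the paper also partitions the carrier of $W\parallel X$ into four cells by intersecting $S_\lp{w}$ and $S_\lp{x}$ with $S_\lp{u}$ and $S_\lp{v}$ (it normalises so that the witnessing subsumption is the identity, whereas you carry $h$ explicitly, but this is a cosmetic difference), then verifies the four subsumptions by the same case analysis, and establishes the series-parallel claim via the \N-freeness characterisation of series-parallel pomsets exactly as in your alternative argument.
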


On a semi-formal level, the interpolation lemma can be understood as follows.
If $U \cdot V \sqsubseteq W \parallel X$, then the events in $W$ are partitioned between those that end up in $U$, and those that end up in $V$; these give rise to the ``sub-pomsets'' $W_0$ and $W_1$ of $W$, respectively.
Similarly, $X$ partitions into ``sub-pomsets'' $X_0$ and $X_1$.
We refer to \Cref{figure:interpolation} for a graphical depiction of this situation.

Now, if $y$ precedes $z$ in $W_0 \parallel X_0$, then $y$ must precede $z$ in $W \parallel X$, and therefore also in $U \cdot V$.
Since $y$ and $z$ are both events in $U$, it then follows that $y$ precedes $z$ in $U$, establishing that $U \sqsubseteq W_0 \parallel X_0$.
Furthermore, if $y$ precedes $z$ in $W$, then we can exclude the case where $y$ is in $W_1$ and $z$ in $W_0$, for then $z$ precedes $y$ in $U \cdot V$, contradicting that $y$ precedes $z$ in $U \cdot V$.
Accordingly, either $y$ and $z$ both belong to $W_0$ or $W_1$, or $y$ is in $W_0$ while $z$ is in $W_1$; in all of these cases, $y$ must precede $z$ in $W_0 \cdot W_1$.
The other subsumptions hold analogously.
\begin{figure}
    \centering
    \begin{tikzpicture}
    \draw (0,0) rectangle (1,2.2);
    \draw (1.2,0) rectangle (2.2,2.2);
    \draw (3.2,0) rectangle (5.4,1);
    \draw (3.2,1.2) rectangle (5.4,2.2);
    \draw[dotted] (0.1, 0.1) rectangle (0.9,1.05);
    \draw[dotted] (0.1, 1.15) rectangle (0.9,2.1);
    \draw[dotted] (1.3, 0.1) rectangle (2.1,1.05);
    \draw[dotted] (1.3, 1.15) rectangle (2.1,2.1);
    \draw[dotted] (3.3, 0.1) rectangle (4.25,0.9);
    \draw[dotted] (4.35, 0.1) rectangle (5.3,0.9);
    \draw[dotted] (3.3, 1.3) rectangle (4.25,2.1);
    \draw[dotted] (4.35, 1.3) rectangle (5.3,2.1);
    \node[anchor=center] at (0.50,0.55) {$W_0$};
    \node[anchor=center] at (0.50,1.6) {$X_0$};
    \node[anchor=center] at (1.7,0.55) {$W_1$};
    \node[anchor=center] at (1.7,1.6) {$X_1$};
    \node[anchor=center] at (3.75,0.50) {$W_0$};
    \node[anchor=center] at (4.85,0.50) {$W_1$};
    \node[anchor=center] at (3.75,1.7) {$X_0$};
    \node[anchor=center] at (4.85,1.7) {$X_1$};
    \node[anchor=center] (subsume) at (2.7, 1.1) {$\sqsubseteq$};
    \node[anchor=center] (U) at (0.5, 2.5) {$U$};
    \node[anchor=center] (V) at (1.7, 2.5) {$V$};
    \node[anchor=center] (W) at (5.7, 0.5) {$W$};
    \node[anchor=center] (X) at (5.7, 1.7) {$X$};
    \end{tikzpicture}
    \caption{Splitting pomsets in the interpolation lemma}\label{figure:interpolation}
\end{figure}
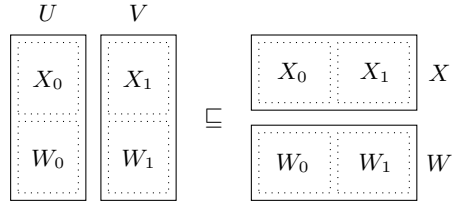

\subsubsection{Pomset languages}

The semantics of \BKA and \CKA are given in terms of sets of series-parallel pomsets.
\begin{definition}%
\label{definition:pl}
A subset of $\sppom(\alphabet)$ is referred to as a \emph{pomset language}.
\end{definition}

As a convention, we denote pomset languages by the symbols $\pl{U}$, $\pl{V}$, et cetera.
Sequential and parallel composition of pomsets extends to pomset languages in a pointwise manner, i.e.,
\[\pl{U} \cdot \pl{V} \triangleq \{ U \cdot V : U \in \pl{U}, V \in \pl{V} \} \]
and similarly for parallel composition.
Like languages of words, pomset languages have a Kleene star operator, which is similarly defined, i.e., $\pl{U}^\star \triangleq \bigcup_{n \in \naturals} \pl{U}^n$, where the $n^{th}$ power of $\pl{U}$ is inductively defined as $\pl{U}^0 \triangleq \{ 1 \}$ and $\pl{U}^{n+1} \triangleq \pl{U}^n \cdot \pl{U}$.

A pomset language $\pl{U}$ is \emph{closed under subsumption} (or simply \emph{closed}) if whenever $U \in \pl{U}$ with $U' \sqsubseteq U$ and $U' \in \sppom(\alphabet)$, it holds that $U' \in \pl{U}$.
The \emph{closure under subsumption} (or simply \emph{closure}) of a pomset language $\pl{U}$, denoted $\down{\pl{U}}$, is defined as the smallest pomset language that contains $\pl{U}$ and is closed, i.e.,
\[\down{\pl{U}} \triangleq \{ U' \in \sppom(\alphabet) : \exists U \in \pl{U}.\ U' \sqsubseteq U \}\]
Closure relates to union, sequential composition and iteration as follows.

\begin{lemma}%
\label{lemma:pl-composition-vs-closure}
Let $\pl{U}, \pl{V}$ be pomset languages; then:
\[\down{(\pl{U} \cup \pl{V})} = \down{\pl{U}} \cup \down{\pl{V}},\, \down{(\pl{U} \cdot \pl{V})} = \down{\pl{U}} \cdot \down{\pl{V}},\ and\ \down{\pl{U}^\star} = \down{\pl{U}}^\star.\]
\end{lemma}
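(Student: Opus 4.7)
The plan is to prove the three equalities in order, with each subsequent one building on the previous.

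For the union case $\down{(\pl{U} \cup \pl{V})} = \down{\pl{U}} \cup \down{\pl{V}}$, I would simply unfold the definition of closure. Any $W \in \down{(\pl{U}\cup\pl{V})}$ arises as $W \sqsubseteq X$ for some $X \in \pl{U} \cup \pl{V}$; depending on which set $X$ belongs to, $W$ lands in $\down{\pl{U}}$ or $\down{\pl{V}}$. The reverse inclusion is immediate from the monotonicity of $\down{(-)}$ with respect to $\subseteq$. The same argument extends to arbitrary unions, which I will use silently below.

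For the sequential case $\down{(\pl{U} \cdot \pl{V})} = \down{\pl{U}} \cdot \down{\pl{V}}$, the $\supseteq$ direction follows because sequential composition is monotone with respect to $\sqsubseteq$: if $W_0 \sqsubseteq U$ with $U \in \pl{U}$ and $W_1 \sqsubseteq V$ with $V \in \pl{V}$, then $W_0 \cdot W_1 \sqsubseteq U \cdot V \in \pl{U} \cdot \pl{V}$. The $\subseteq$ direction is where the real content lies: given $W \in \down{(\pl{U} \cdot \pl{V})}$, pick $U \in \pl{U}$ and $V \in \pl{V}$ with $W \sqsubseteq U \cdot V$. Apply the Factorisation lemma (\Cref{lemma:pomset-factorise-subsumption}) to obtain $W = W_0 \cdot W_1$ with $W_0 \sqsubseteq U$ and $W_1 \sqsubseteq V$; hence $W_0 \in \down{\pl{U}}$ and $W_1 \in \down{\pl{V}}$, giving $W \in \down{\pl{U}} \cdot \down{\pl{V}}$.

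For the Kleene star case $\down{\pl{U}^\star} = \down{\pl{U}}^\star$, I would first establish by induction on $n$ that $\down{\pl{U}^n} = \down{\pl{U}}^n$: the base $n=0$ amounts to $\down{\{1\}} = \{1\}$, which holds since $1$ is primitive-free and subsumption-minimal (\Cref{lemma:pomset-subsumption-base} implies nothing is strictly subsumed by or subsumes $1$ except $1$ itself), and the inductive step uses the sequential case already proven: $\down{\pl{U}^{n+1}} = \down{(\pl{U}^n \cdot \pl{U})} = \down{\pl{U}^n} \cdot \down{\pl{U}} = \down{\pl{U}}^n \cdot \down{\pl{U}} = \down{\pl{U}}^{n+1}$. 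Combining this with the (arbitrary-union extension of the) first part yields
\[
  \down{\pl{U}^\star} = \down{\textstyle\bigcup_{n \in \naturals} \pl{U}^n} = \bigcup_{n \in \naturals} \down{\pl{U}^n} = \bigcup_{n \in \naturals} \down{\pl{U}}^n = \down{\pl{U}}^\star.
\]

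The only genuine obstacle is the $\subseteq$ direction of the sequential case, which is not just formal bookkeeping but relies on the Factorisation lemma to split a subsumption of a sequential composition into componentwise subsumptions; everything else is either an easy definition chase or an induction that immediately reduces to that case.
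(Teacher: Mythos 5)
Your proof is correct and follows essentially the same route as the paper: the union case by direct unfolding (also noting it extends to arbitrary unions), the sequential case via the Factorisation lemma for the nontrivial inclusion, and the star case by reducing to the previous two. The one place you are slightly more explicit than the paper is in spelling out the induction on $n$ and the base case $\down{\{1\}} = \{1\}$ via \Cref{lemma:pomset-subsumption-base}; the paper compresses this into the iterated-product notation, but it is the same argument.
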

\begin{proof}
The first claim holds for infinite unions, too, and follows immediately from the definition of closure.

For the second claim, suppose that $U \in \pl{U}$ and $V \in \pl{V}$, and that $W \sqsubseteq U \cdot V$.
By \Cref{lemma:pomset-factorise-subsumption}, we find pomsets $W_0$ and $W_1$ such that $W = W_0 \cdot W_1$, with $W_0 \sqsubseteq U$ and $W_1 \sqsubseteq V$.
It then holds that $W_0 \in \down{\pl{U}}$ and $W_1 \in \down{\pl{V}}$, meaning that $W = W_0 \cdot W_1 \in \down{\pl{U}} \cdot \down{\pl{V}}$.
This shows that $\down{(\pl{U} \cdot \pl{V})} \sqsubseteq \down{\pl{U}} \cdot \down{\pl{V}}$.
Proving the reverse inclusion is a simple matter of unfolding the definitions.

For the third claim, we can calculate directly using the first and second parts of this lemma:
\[
\down{\pl{U}^\star}
    = \down{\Bigl( \bigcup_{n \in \naturals} \underbrace{\pl{U} \cdot \pl{U} \cdots \pl{U}}_{n\,\mathrm{\scriptstyle{}times}}\Bigr)}
    = \bigcup_{n \in \naturals} \down{\Bigl(\underbrace{\pl{U} \cdot \pl{U} \cdots \pl{U}}_{n\,\mathrm{\scriptstyle{}times}}\Bigr)}
    = \bigcup_{n \in \naturals} \underbrace{\down{\pl{U}} \cdot \down{\pl{U}} \cdots \down{\pl{U}}}_{n\,\mathrm{\scriptstyle{}times}}
    = \down{\pl{U}}^\star
    \qedhere
\]
\end{proof}

\subsection{Concurrent Kleene Algebra}

We now consider two extensions of Kleene Algebra (\KA), known as \emph{Bi-Kleene Algebra} (\BKA) and \emph{Concurrent Kleene Algebra} (\CKA).
Both extend \KA with an operator for parallel composition and thus share a common syntax.

\begin{definition}%
\label{definition:syntax}
The set $\terms$ is the smallest set generated by the grammar
\[e, f ::= 0 \pipe 1 \pipe a \in \alphabet \pipe e + f \pipe e \cdot f \pipe e \parallel f \pipe e^\star\]
\end{definition}

The \BKA-semantics of a term is a straightforward inductive application of the operators on the level of pomset languages.
The \CKA-semantics of a term is the \BKA-semantics, downward-closed under the subsumption order; the \CKA-semantics thus includes all possible sequentialisations.

\begin{definition}%
\label{definition:semantics}
The function $\sembka{-}: \terms \to 2^{\sppom(\alphabet)}$ is defined as follows:
\begin{align*}
\sembka{0} &\triangleq \emptyset & \sembka{e + f}         &\triangleq \sembka{e} \cup \sembka{f}      & \sembka{e^\star} &\triangleq \sembka{e}^\star  \\
\sembka{1} &\triangleq \{ 1 \}   & \sembka{e \cdot f}     &\triangleq \sembka{e} \cdot \sembka{f}                                                \\
\sembka{a} &\triangleq \{ a \}   & \sembka{e \parallel f} &\triangleq \sembka{e} \parallel \sembka{f}
\end{align*}
Finally, $\semcka{-}: \terms \to 2^{\sppom(\alphabet)}$ is defined as $\semcka{e} \triangleq \down{\sembka{e}}$.
\end{definition}
Following Lodaya and Weil~\cite{lodaya-weil-2000}, if $\pl{U}$ is a pomset language such that $\pl{U} = \sembka{e}$ for some $e \in \terms$, we say that the language $\pl{U}$ is \emph{series-rational}.
Note that if $\pl{U}$ is such that $\pl{U} = \semcka{e}$ for some term $e \in \terms$, then $\pl{U}$ is closed by definition.

To axiomatise semantic equivalence between terms, we build the following relations, which match the axioms proposed in~\cite{laurence-struth-2014}.
The axioms of \CKA as defined in~\cite{hoare-moeller-struth-wehrman-2009} come from a double quantale structure mediated by the exchange law; these imply the ones given here.
The converse implication does not hold; in particular, our syntax does not include an infinitary greatest lower bound operator.
However, \BKA (as defined in this paper) does have a \emph{finitary} greatest lower bound~\cite{laurence-struth-2014}, and by the existence of closure, so does \CKA. 
\begin{definition}%
\label{definition:equivalence}
The relation $\equivbka$ is the smallest congruence on $\terms$ (with respect to all operators) such that for all $e, f, g \in \terms$:
\begin{mathpar}
e + 0 \equivbka e \and
e + e \equivbka e \and
e + f \equivbka f + e \and
e + (f + g) \equivbka (f + g) + h \\
e \cdot 1 \equivbka e\and
1 \cdot e\equivbka e \and
e \cdot (f \cdot g) \equivbka (e \cdot f) \cdot g \\
e \cdot 0 \equivbka 0 \equivbka 0 \cdot e \and
e \cdot (f + g) \equivbka e \cdot f + e \cdot h \and
(e + f) \cdot g \equivbka e \cdot g + f \cdot g \\
e \parallel f \equivbka f \parallel e \and
e \parallel 1 \equivbka e \and
e \parallel (f \parallel g) \equivbka (e \parallel f) \parallel g \\
e \parallel 0 \equivbka 0 \and
e \parallel (f + g) \equivbka e \parallel f + e \parallel g \and
1 + e \cdot e^\star \equivbka e^\star \\
e + f \cdot g \leqqbka g \implies f^\star \cdot e \leqqbka g
\end{mathpar}
in which we use $e \leqqbka f$ as a shorthand for $e + f \equivbka f$.
The final (conditional) axiom is referred to as the \emph{least fixpoint axiom}.

The relation $\equivcka$ is the smallest congruence on $\terms$ that satisfies the rules of $\equivbka$, and furthermore satisfies the \emph{exchange law} for all $e, f, g, h \in \terms$:
\[ (e \parallel f) \cdot (g \parallel h) \leqqcka (e \cdot g) \parallel (f \cdot h) \]
where we similarly use $e \leqqcka f$ as a shorthand for $e + f \equivcka f$.
\end{definition}

We can see that $\equivbka$ includes the familiar axioms of \KA, and stipulates that $\parallel$ is commutative and associative with unit $1$ and annihilator $0$, as well as distributive over $+$.
When using \CKA to model concurrent program flow, the exchange law models sequentialisation: if we have two programs, the first of which executes $e$ followed by $g$, and the second of which executes $f$ followed by $h$, then we can sequentialise this by executing $e$ and $f$ in parallel, followed by executing $g$ and $h$ in parallel.

We use the symbol \AKA in statements that are true for $\AKA \in \{ \BKA, \CKA \}$.
The relation $\equivaka$ is sound for equivalence of terms under \AKA~\cite{jipsen-moshier-2016}. 

\begin{lemma}%
\label{lemma:soundness}
Let $e, f \in \terms$. If $e \equivaka f$, then $\semaka{e} = \semaka{f}$.
\end{lemma}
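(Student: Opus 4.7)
The proof proceeds by induction on the derivation of $e \equivaka f$. Since $\equivaka$ is defined as the smallest congruence satisfying a finite list of axioms (plus the conditional least fixpoint rule and, for \CKA, the exchange law), the inductive cases for reflexivity, symmetry, transitivity and the congruence rules are immediate from the fact that the operators on pomset languages are well-defined. It therefore suffices to verify that each axiom is semantically sound under $\semaka{-}$.

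For the \BKA axioms, the plan is to note that pomset languages under $(\cup, \cdot, \parallel, {}^\star, \emptyset, \{1\})$ form a structure in which all the purely equational axioms hold by routine calculation on pomsets: associativity, commutativity and units for $+$ and $\parallel$ follow from the set-theoretic definitions, the monoid laws for $\cdot$ follow because $\lp{1}$ is neutral and sequential composition of pomsets is associative, distributivity is a pointwise consequence of the definitions of $\cdot$ and $\parallel$ on languages, and $\emptyset$ annihilates both products because no pomset can be composed with a nonexistent one. For the least fixpoint axiom, if $\semaka{e + f \cdot g} \subseteq \semaka{g}$, an easy induction on $n$ shows $\semaka{f}^{n} \cdot \semaka{e} \subseteq \semaka{g}$, from which $\semaka{f^\star \cdot e} = \bigcup_n \semaka{f}^n \cdot \semaka{e} \subseteq \semaka{g}$. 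For the \CKA case, these arguments transfer via $\semcka{-} = \down{\sembka{-}}$ using \Cref{lemma:pl-composition-vs-closure}, which ensures that closure commutes with $\cup$, $\cdot$, and ${}^\star$ (and, by an analogous argument using \Cref{lemma:pomset-factorise-subsumption} for parallel composition, with $\parallel$). Consequently $\semcka{-}$ satisfies the same homomorphic equations as $\sembka{-}$, so the \BKA axioms remain sound in the \CKA semantics.

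The one genuinely new obligation is the exchange law, which only makes sense under $\semcka{-}$. Given any pomset in $\semcka{(e \parallel f) \cdot (g \parallel h)}$, by definition it is subsumed by some $(U \parallel V) \cdot (W \parallel X)$ with $U \in \sembka{e}$, $V \in \sembka{f}$, $W \in \sembka{g}$, $X \in \sembka{h}$. The key pomset-theoretic observation is that $(U \parallel V) \cdot (W \parallel X) \sqsubseteq (U \cdot W) \parallel (V \cdot X)$, because the left-hand side contains every order edge of the right-hand side plus the extra edges forcing all of $V$ to precede all of $W$ and all of $U$ to precede all of $X$. By transitivity of $\sqsubseteq$ and the fact that $(U \cdot W) \parallel (V \cdot X) \in \sembka{(e \cdot g) \parallel (f \cdot h)}$, the original pomset lies in $\semcka{(e \cdot g) \parallel (f \cdot h)}$, as required.

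I expect the main obstacle to be bookkeeping rather than ideas: one must be careful that the closure operator interacts correctly with every operator before transporting \BKA soundness to \CKA, and the conditional least fixpoint rule requires separate attention (the induction on $n$ mentioned above) because it is not a purely equational axiom. The exchange law itself, though conceptually the most interesting, reduces cleanly to a single subsumption between pomsets and poses no real difficulty once the right labelled-poset picture is in hand.
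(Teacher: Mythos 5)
The paper does not prove this lemma itself; it simply cites Jipsen and Moshier. Your overall strategy -- induction on the derivation, routine verification of the purely algebraic identities on pomset languages, an $n$-indexed induction for the conditional fixpoint rule, and a single pomset subsumption $(U \parallel V) \cdot (W \parallel X) \sqsubseteq (U \cdot W) \parallel (V \cdot X)$ for the exchange law -- is the standard one and is sound in outline. The exchange-law case in particular is correct as you wrote it.

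However, there is a genuine error in the transfer to \CKA. You assert, by ``an analogous argument'' to \Cref{lemma:pl-composition-vs-closure}, that closure commutes with $\parallel$, so that $\semcka{-}$ is a homomorphism for all operators. This is false: taking $e = a$ and $f = b$, one has
\[
\semcka{a \parallel b} \;=\; \down{\{a \parallel b\}} \;=\; \{a \parallel b,\; ab,\; ba\},
\qquad
\semcka{a} \parallel \semcka{b} \;=\; \{a\} \parallel \{b\} \;=\; \{a \parallel b\},
\]
which are not equal. Indeed, the entire paper hinges on this asymmetry: $\parallel$ is precisely the operator for which closure does \emph{not} push inward, which is why the construction of $\down{e}$ is nontrivial only for $e = e_0 \parallel e_1$. \Cref{lemma:pl-composition-vs-closure} deliberately omits $\parallel$, and \Cref{lemma:pomset-factorise-subsumption} factorises the pomset that \emph{subsumes} a parallel composition, not the one subsumed by it, so it does not supply the analogue you want.

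The consequence is a gap in the congruence case for $\parallel$ under $\semcka{-}$: you must show that $\semcka{e'} = \semcka{f'}$ implies $\semcka{e' \parallel g} = \semcka{f' \parallel g}$, and this does not follow from ``the operator is well-defined'' since $\semcka{e' \parallel g}$ is not a function of $\semcka{e'}$ and $\semcka{g}$ via $\parallel$. The fix is short but needs to be stated: one proves instead that $\down{(\pl{U} \parallel \pl{V})} = \down{\bigl(\down{\pl{U}} \parallel \down{\pl{V}}\bigr)}$, using that $\pl{U} \subseteq \down{\pl{U}}$ for one inclusion and the monotonicity of $\parallel$ with respect to $\sqsubseteq$ together with transitivity for the other. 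This shows $\semcka{e' \parallel g}$ depends only on $\semcka{e'}$ and $\semcka{g}$, which is what the congruence case actually requires. The soundness of the \BKA \emph{axioms} under $\semcka{-}$ is unaffected and is in fact simpler than you suggest: each axiom $e \equivbka f$ already gives $\sembka{e} = \sembka{f}$, so $\semcka{e} = \down{\sembka{e}} = \down{\sembka{f}} = \semcka{f}$ directly, with no commutation claim needed.
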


Since all binary operators are associative (up to $\equivaka$), we drop parentheses when writing terms like $e + f + g$ --- this does not incur ambiguity with regard to $\semaka{-}$.
We furthermore consider $\cdot$ to have precedence over $\parallel$, which has precedence over $+$; as usual, the Kleene star has the highest precedence of all operators.
For instance, when we write $e + f \cdot g^\star \parallel h$, this should be read as $e + ((f \cdot \left(g^\star\right)) \parallel h)$.

In case of \BKA, the implication in \Cref{lemma:soundness} is an equivalence~\cite{laurence-struth-2014}, and thus gives a complete axiomatisation of semantic \BKA-equivalence of terms.%
\footnote{%
    Strictly speaking, the proof in~\cite{laurence-struth-2014} includes the parallel star operator in \BKA. 
    Since this is a conservative extension of \BKA, this proof applies to \BKA as well. 
}

\begin{theorem}%
\label{theorem:wbka-completeness}
Let $e, f \in \terms$. Then $e \equivbka f$ if and only if $\sembka{e} = \sembka{f}$.
\end{theorem}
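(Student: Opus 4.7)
The forward direction is \Cref{lemma:soundness}, so the content of the theorem lies in proving that $\sembka{e} = \sembka{f}$ implies $e \equivbka f$. My plan is to follow an automata-theoretic strategy generalizing Kleene's classical completeness proof for \KA to the setting of parallel composition.

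First, I would introduce \emph{branching automata}: finite-state machines whose transitions are either atomic (labelled by a letter of $\alphabet$) or fork/join (labelled by a pair of states, spawning two parallel sub-computations that must later synchronize before continuing). The class of pomset languages accepted by finite branching automata can be shown to coincide exactly with the series-rational pomset languages — a Kleene theorem for \BKA. This theorem has two directions: \emph{synthesis}, which builds an automaton $A_e$ from each term $e \in \terms$ by structural induction, in Thompson-style, so that $L(A_e) = \sembka{e}$; and \emph{analysis}, which extracts a term $t_A$ from each branching automaton $A$ by a state-elimination procedure generalizing Kleene's, such that $\sembka{t_A} = L(A)$.

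Next, one must show that analysis is in fact tight with respect to $\equivbka$: (i) $e \equivbka t_{A_e}$ for every term $e$, using the \BKA axioms — especially the least fixpoint axiom — to justify each state-elimination step; and (ii) two branching automata accepting the same language yield terms equivalent under $\equivbka$, via minimization to a canonical form combined with a bisimulation argument. Combining these ingredients yields completeness: from $\sembka{e} = \sembka{f}$ we obtain $L(A_e) = L(A_f)$, hence $e \equivbka t_{A_e} \equivbka t_{A_f} \equivbka f$. The unique decomposition of series-parallel pomsets from \Cref{lemma:pomset-unique-decomposition} is the key structural fact that makes the branching automaton model well-behaved, since it guarantees that fork/join events have a well-defined scope within any accepting run.

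The principal obstacle is the analysis step. In classical \KA, state elimination relies on the fact that $a^{\star} b$ is the unique solution of $x \equivka ax + b$, and the argument is straightforward once states are eliminated one at a time. Here, loops may cross fork/join boundaries, so the least fixpoint axiom must be applied while preserving the nested parallel structure introduced by fork transitions. Crucially, because \BKA lacks the exchange law, one cannot simplify by interleaving parallel threads; the proof must carefully juggle fork/join transitions during elimination using only commutativity, associativity, distributivity, and the identity/annihilator laws for $\parallel$. This is where the bulk of the technical work lies, and where one must deviate most from the familiar \KA completeness template.
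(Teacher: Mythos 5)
The paper does not prove this theorem at all: it invokes it as a known black-box result from Laurence and Struth~\cite{laurence-struth-2014}, noting in a footnote only that their proof (which is stated for \BKA with parallel star) specializes to the signature used here. So there is no ``paper's own proof'' to compare against, and your proposal is an entirely new proof attempt for a result the authors deliberately outsourced. It is also worth noting that the cited proof does not proceed through branching automata; Laurence and Struth work at the level of pomset languages and terms, using congruences of finite index and algebraic closure properties of series-rational languages, which the present paper explicitly contrasts with its own more syntactic style in \Cref{section:discussion-further-work}.

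Beyond the mismatch with the paper, your plan has a genuine gap at step~(ii). You assert that two branching automata accepting the same pomset language yield $\equivbka$-equivalent terms ``via minimization to a canonical form combined with a bisimulation argument,'' but no such minimization theory exists for branching or pomset automata. For deterministic word automata, uniqueness of the minimal DFA is the lynchpin that makes Kozen's completeness argument for \KA go through; for automata with fork/join transitions, there is no determinization, no known canonical minimal form, and no result guaranteeing that state elimination is confluent modulo the \BKA axioms. Indeed, the paper cites~\cite{kappe-brunet-luttik-silva-zanasi-2017} for only \emph{half} of a Kleene theorem (expressions to automata), and the discussion section treats a bisimulation-based equivalence algorithm for pomset automata as open further work. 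You would also need to be careful with the synthesis/analysis correspondence itself: Lodaya and Weil's branching automata can accept pomset languages that are not series-rational (they may have unbounded parallel width), so ``the class of pomset languages accepted by finite branching automata'' does \emph{not} coincide with the series-rational languages without a well-nestedness or fork-acyclicity restriction on the automata. As stated, both the Kleene theorem and the canonical-form step are substantial unproved claims rather than routine adaptations.
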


Given a term $e \in \terms$, we can determine syntactically whether its (\BKA or \CKA) semantics contains the empty pomset, using the function defined below.
\begin{definition}%
\label{definition:nullable}
The \emph{nullability function} $\epsilon: \terms \to 2$ is defined as follows:
\begin{align*}
\epsilon(0) &\triangleq 0 & \epsilon(e + f)         &\triangleq \epsilon(e) \vee \epsilon(f)   & \epsilon(e^\star) \triangleq 1 \\
\epsilon(1) &\triangleq 1 & \epsilon(e \cdot f)     &\triangleq \epsilon(e) \wedge \epsilon(f)                              \\
\epsilon(a) &\triangleq 0 & \epsilon(e \parallel f) &\triangleq \epsilon(e) \wedge \epsilon(f)
\end{align*}
in which $\vee$ and $\wedge$ are understood as the usual lattice operations on $2$.
\end{definition}

That $\epsilon$ encodes the presence of $1$ in the semantics is witnessed by the following.

\begin{restatable}{lemma}{nullable}%
\label{lemma:nullable}
Let $e \in \terms$. Then $\epsilon(e) \leqqaka e$ and $1 \in \semaka{e}$ if and only if $\epsilon(e) = 1$.
\end{restatable}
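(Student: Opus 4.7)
The plan is to prove both assertions by induction on the structure of $e \in \terms$. For the syntactic inequality $\epsilon(e) \leqqaka e$, each syntactic form is handled by the axioms of $\equivaka$: the cases $e = 0, 1, a$ are immediate, since $\epsilon(e) \in \{0, 1\}$ and we have $0 \leqqaka g$ for any $g$ and $\epsilon(1) = 1 \leqqaka 1$. For $e = f + g$, if $\epsilon(f+g) = 0$ there is nothing to show; otherwise at least one of $\epsilon(f), \epsilon(g)$ is $1$, and the induction hypothesis together with monotonicity (via $h \leqqaka h + h'$) yields $1 \leqqaka f + g$. For $e = f \cdot g$ and $e = f \parallel g$, if $\epsilon(e) = 1$ then both subterms evaluate to $1$ under $\epsilon$, so the induction hypothesis gives $1 \leqqaka f$ and $1 \leqqaka g$, whence $1 \equivaka 1 \cdot 1 \leqqaka f \cdot g$ (and analogously for $\parallel$, using $1 \parallel 1 \equivaka 1$). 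Finally, for $e = f^\star$, the unfolding axiom $1 + f \cdot f^\star \equivaka f^\star$ immediately gives $1 \leqqaka f^\star = \epsilon(f^\star) \cdot f^\star$.

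For the semantic characterisation, the forward direction $\epsilon(e) = 1 \implies 1 \in \semaka{e}$ is a direct consequence of the first part together with soundness (\Cref{lemma:soundness}): from $1 \equivaka \epsilon(e) \leqqaka e$ we obtain $\{1\} = \semaka{1} \subseteq \semaka{e}$. For the converse, I would first reduce the \CKA case to the \BKA case: since $\semcka{e} = \down{\sembka{e}}$ and \Cref{lemma:pomset-subsumption-base} forces $1 \sqsubseteq U$ to imply $U = 1$, we have $1 \in \semcka{e}$ iff $1 \in \sembka{e}$. It therefore suffices to prove the implication $1 \in \sembka{e} \implies \epsilon(e) = 1$ by induction on $e$. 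The cases $0, 1, a$ are immediate from the definitions of $\sembka{-}$ and $\epsilon$. For $e = f + g$, membership of $1$ in $\sembka{f} \cup \sembka{g}$ gives $1$ in one of the two, and the induction hypothesis closes the case. For $e = f \cdot g$ and $e = f \parallel g$, if $1 = U \cdot V$ or $1 = U \parallel V$ with $U \in \sembka{f}$ and $V \in \sembka{g}$, then necessarily $U = V = 1$ (as the empty pomset is the only one with empty carrier), and the induction hypothesis yields $\epsilon(f) = \epsilon(g) = 1$. For $e = f^\star$, both sides are unconditionally $1$.

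No step is really difficult; the statement is essentially a standard ``nullability'' lemma adapted to the pomset setting. The only point requiring a moment's care is the \CKA half of the semantic equivalence, where one must observe that downward-closure cannot introduce the empty pomset unless it is already present in $\sembka{e}$ — this is exactly where \Cref{lemma:pomset-subsumption-base} is invoked, reducing the claim for \CKA to the claim for \BKA and avoiding any need to duplicate the inductive argument.
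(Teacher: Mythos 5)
Your proof is correct and matches the paper's own argument essentially step for step: structural induction for $\epsilon(e)\leqqaka e$ (with the observation that only the $\epsilon(e)=1$ cases require work), soundness for the right-to-left semantic direction, and \Cref{lemma:pomset-subsumption-base} to reduce the \CKA case to \BKA before a second induction for the left-to-right direction. No meaningful deviation from the paper's proof.
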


In the sequel, we need the \emph{(parallel) width} of a term.
This is defined as follows.

\begin{definition}%
\label{definition:width}
Let $e \in \terms$.
The \emph{(parallel) width} of $e$, denoted by $\width{e}$, is defined as $0$ when $e \equivbka 0$; for all other cases, it is defined inductively, as follows:
\begin{align*}
\width{1} &\triangleq 0 & \width{e + f}     &\triangleq \max(\width{e}, \width{f}) & \width{e \parallel f} &\triangleq \width{e} + \width{f} \\
\width{a} &\triangleq 1 & \width{e \cdot f} &\triangleq \max(\width{e}, \width{f}) & \width{e^\star}           &\triangleq \width{e}
\end{align*}
\end{definition}

The width of a term is invariant with respect to equivalence of terms.
\begin{restatable}{lemma}{widthvsequivalence}%
\label{lemma:width-vs-equivalence}
Let $e, f \in \terms$.
If $e \equivbka f$, then $\width{e} = \width{f}$.
\end{restatable}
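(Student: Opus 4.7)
The plan is to reduce this syntactic claim to a semantic one via soundness of $\equivbka$. I would first extend width to series-parallel pomsets directly by setting $\width{1} = 0$, $\width{a} = 1$ for $a \in \alphabet$, $\width{U \cdot V} = \max(\width{U}, \width{V})$, and $\width{U \parallel V} = \width{U} + \width{V}$. That this is well-defined on pomsets follows from the unique decomposition of \Cref{lemma:pomset-unique-decomposition}, which ensures that no pomset can be written both as a non-trivial sequential and as a non-trivial parallel composition (so the two clauses never conflict), and from the associativity of $\max$ and $+$ (so the clauses themselves are independent of the bracketing of an $n$-ary composition).

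Next I would prove, by induction on $e$, the characterisation
\[
\width{e} \;=\; \sup\{\width{U} : U \in \sembka{e}\},
\]
adopting the convention $\sup \emptyset = 0$. The cases $0$, $1$, and $a$ are immediate. For $e_1 + e_2$, the supremum over the union of the two semantics is the max of the two suprema, matching $\max(\width{e_1}, \width{e_2})$ by induction. For $e_1 \cdot e_2$ and $e_1 \parallel e_2$, the corresponding pomset operation distributes over the supremum on non-empty sets of naturals, yielding $\max(\width{e_1},\width{e_2})$ and $\width{e_1} + \width{e_2}$, respectively. For $e_1^\star$, since $\sembka{e_1^\star} = \bigcup_{n \in \naturals}\sembka{e_1}^n$ and the width of a sequential composition of pomsets is the max of the widths of its factors, the supremum collapses to $\sup\{\width{U} : U \in \sembka{e_1}\} = \width{e_1} = \width{e_1^\star}$ (the empty product contributes only the harmless value $0$).

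Given this characterisation, the lemma is one line: by \Cref{lemma:soundness}, $e \equivbka f$ implies $\sembka{e} = \sembka{f}$, so the two suprema coincide and $\width{e} = \width{f}$.

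The main obstacle is reconciling the special-case branch $\width{e} = 0$ whenever $e \equivbka 0$ with the semantic characterisation. This requires the equivalence $e \equivbka 0 \iff \sembka{e} = \emptyset$: the forward direction is soundness, and the reverse can either be invoked from \Cref{theorem:wbka-completeness} or verified by a short syntactic induction tracking non-emptiness. Once this is in hand, the inductive cases for $\cdot$ and $\parallel$ go through cleanly, since $e_1 \cdot e_2 \not\equivbka 0$ (resp.\ $e_1 \parallel e_2 \not\equivbka 0$) forces both sub-terms also to be non-equivalent to $0$, so that the recursive clauses in the definition of $\width{-}$ apply to both sides simultaneously.
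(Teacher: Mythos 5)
Your argument is correct, but it takes a genuinely different route from the paper's. The paper proves this by a direct syntactic induction on the derivation of $e \equivbka f$: it walks through every axiom of $\equivbka$ and checks that width is preserved (the $e \equivbka 0$ branch is handled first, then each distributivity, unit, associativity, commutativity, and fixpoint rule is verified one at a time), and notes that the claim is preserved under congruence closure. You instead reduce the syntactic claim to a semantic one: define width on series-parallel pomsets, prove the characterisation $\width{e} = \sup\{\width{U} : U \in \sembka{e}\}$, and conclude by soundness. Your route is more conceptual and buys a reusable invariant (it subsumes \Cref{lemma:width-vs-nonempty} for free), but it needs more scaffolding: well-definedness of pomset width is not quite just associativity of $\max$ --- to see that two sequential decompositions $U \cdot V = U' \cdot V'$ give the same value you implicitly need the Levi-type \Cref{lemma:pomset-levi-generalised} (or \Cref{lemma:pomset-normal-form}) to align them against the unique non-sequential factorisation, and similarly for parallel; and the edge case $e \equivbka 0 \iff \sembka{e}=\emptyset$ requires either \Cref{theorem:wbka-completeness} or a separate induction, both of which you correctly flag. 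The paper's proof avoids all of this machinery at the cost of a longer but entirely elementary case analysis. Both proofs are sound; yours is shorter to state but heavier in its dependencies.
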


The width of a term is related to its semantics as demonstrated below.
\begin{restatable}{lemma}{widthvsnonempty}%
\label{lemma:width-vs-nonempty}
Let $e \in \terms$, and let $U \in \sembka{e}$ be such that $U \neq 1$.
Then $\width{e} > 0$.
\end{restatable}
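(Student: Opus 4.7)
The plan is to proceed by structural induction on $e \in \terms$, mirroring the inductive structure of both the semantics $\sembka{-}$ and the width function. Before starting the induction, I would make a small preliminary observation: by soundness (Lemma~\ref{lemma:soundness}), if $\sembka{e}$ is non-empty then $e \not\equivbka 0$, since $\sembka{0} = \emptyset$. Hence under the hypothesis that some $U \in \sembka{e}$ exists, the ``$e \equivbka 0$'' exception clause in Definition~\ref{definition:width} never triggers, and $\width{e}$ is always computed via its inductive clauses. This lets me use the compositional width formulas freely throughout the induction without having to case-split on whether $e$ happens to collapse to $0$ up to $\equivbka$.

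The base cases are immediate. For $e = 0$ the hypothesis $U \in \sembka{e}$ is vacuously false; for $e = 1$ the only available $U$ equals $1$, contradicting $U \neq 1$; and for $e = a \in \alphabet$ we have $\width{a} = 1 > 0$ directly. For the inductive cases I expect a uniform pattern. If $e = e_0 + e_1$, then $U$ lies in one of $\sembka{e_0}, \sembka{e_1}$; pick that disjunct, apply the IH to get positive width there, and conclude via $\width{e_0 + e_1} = \max(\width{e_0}, \width{e_1})$. If $e = e_0 \cdot e_1$ or $e = e_0 \parallel e_1$, decompose $U = U_0 \cdot U_1$ or $U = U_0 \parallel U_1$ with $U_i \in \sembka{e_i}$; since $U \neq 1$ and $1$ is neutral for both compositions, at least one $U_i \neq 1$, so the IH applied to the corresponding $e_i$ yields $\width{e_i} > 0$, and this positivity is preserved by the $\max$ (sequential case) or sum (parallel case) in the width definition. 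For $e = e_0^\star$, unfold $U$ as $U = U_0 \cdot U_1 \cdots U_{n-1}$ with $U_j \in \sembka{e_0}$; since $U \neq 1$, some $U_j$ must differ from $1$, so the IH on $e_0$ gives $\width{e_0^\star} = \width{e_0} > 0$.

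I do not expect any substantial obstacle: the argument is a routine structural induction. The only mild subtlety is ensuring we always land in the inductive clauses of the width definition rather than its ``$e \equivbka 0$'' exception, and that is handled once and for all by the soundness observation at the start.
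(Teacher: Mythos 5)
Your proof is correct and follows essentially the same structural induction as the paper's own argument (including the observation that at least one of the decomposed sub-pomsets must be non-empty in the $\cdot$, $\parallel$, and $\star$ cases). Your preliminary remark that $\sembka{e} \neq \emptyset$ forces $e \not\equivbka 0$ — so the exceptional clause in the width definition never applies — is a welcome extra bit of care that the paper leaves implicit, and you also correctly use the sum (rather than the $\max$, as the paper's text misstates) in the parallel case.
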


\subsection{Linear systems}\label{sec:linear-systems}

\KA is equipped to find the least solutions to linear inequations.
For instance, if we want to find $X$ such that $e \cdot X + f \leqqka X$, it is not hard to show that $e^\star \cdot f$ is the \emph{least solution} for $X$, in the sense that this choice of $X$ satisfies the inequation, and for any choice of $X$ that also satisfies this inequation it holds that $e^\star \cdot f \leqqka X$.
Since \KA is contained in \BKA and \CKA, the same constructions also apply there.
These axioms generalise to systems of linear inequations in a straightforward manner; indeed, Kozen~\cite{kozen-1994} exploited this generalisation to axiomatise \KA. 
In this paper, we use systems of linear inequations to construct particular expressions.
To do this, we introduce vectors and matrices of terms.

For the remainder of this section, we fix $I$ as a finite set.

\begin{definition}%
\label{definition:vector-matrix}
An \emph{$I$-vector} is a function from $I$ to $\terms$.
Addition of $I$-vectors is defined pointwise, i.e., if $p$ and $q$ are $I$-vectors, then $p + q$ is the $I$-vector defined for $i \in I$ by $(p + q)(i) \triangleq p(i) + q(i)$.

An \emph{$I$-matrix} is a function from $I^2$ to $\terms$.
Left-multiplication of an $I$-vector by an $I$-matrix is defined in the usual fashion, i.e., if $M$ is an $I$-matrix and $p$ is an $I$-vector, then $M \cdot p$ is the $I$-vector defined for $i \in I$ by
\[(M \cdot p)(i) \triangleq \sum_{j \in I} M(i, j) \cdot p(j)\]
\end{definition}

Equivalence between terms extends pointwise to $I$-vectors. More precisely, we write $p \equivaka q$ for $I$-vectors $p$ and $q$ when $p(i) \equivaka q(i)$ for all $i \in I$, and $p \leqqaka q$ when $p + q \equivaka q$.

\begin{definition}%
\label{definition:linear-system}
An \emph{$I$-linear system} $\ls{L}$ is a pair $\angl{M, p}$ where $M$ is an $I$-matrix and $p$ is an $I$-vector.
A \emph{solution} to $\ls{L}$ in \AKA is an $I$-vector $s$ such that $M \cdot s + p \leqqaka s$.
A \emph{least solution} to $\ls{L}$ in \AKA is a solution $s$ in \AKA such that for any solution $t$ in \AKA it holds that $s \leqqaka t$.
\end{definition}

It is not very hard to show that least solutions of a linear system are unique, up to $\equivaka$; we therefore speak of \emph{the} least solution of a linear system.

Interestingly, \emph{any} $I$-linear system has a least solution, and one can construct this solution using only the operators of \KA. 
The construction proceeds by induction on $|I|$.
In the base, where $I$ is empty, the solution is trivial; for the inductive step it suffices to reduce the problem to finding the least solution of a strictly smaller linear system.
This construction is not unlike Kleene's procedure to obtain a regular expression from a finite automaton~\cite{kleene-1956}.
Alternatively, we can regard the existence of least solutions as a special case of Kozen's proof of the fixpoint for matrices over a \KA, as seen in~\cite[Lemma 9]{kozen-1994}.

As a matter of fact, because this construction uses the axioms of \KA exclusively, the least solution that is constructed is the same for both \BKA and \CKA. 

\begin{restatable}{lemma}{linearsystemsolution}%
\label{lemma:linear-system-solution}
Let $\ls{L}$ be an $I$-linear system.
One can construct a single $I$-vector~$x$ that is the least solution to $\ls{L}$ in both \BKA and \CKA. 
\end{restatable}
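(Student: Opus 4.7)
The plan is to proceed by induction on $|I|$, mimicking the Gaussian-style elimination used by Kozen in his axiomatisation of \KA. The crucial observation, which guarantees that a single vector works for both \BKA and \CKA, is that the entire construction, as well as the verification that it is a least solution, can be carried out using only the axioms of \KA (in particular, the least fixpoint axiom); both $\equivbka$ and $\equivcka$ contain these axioms.

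In the base case $I = \emptyset$, the empty vector is vacuously the least solution to $\ls{L}$ in both theories. For the inductive step, fix $i_0 \in I$ and let $I' = I \setminus \{i_0\}$. Writing $\ls{L} = \angl{M, p}$, the inequation indexed by $i_0$ has the shape $M(i_0,i_0) \cdot x(i_0) + r \leqqaka x(i_0)$, where $r = \sum_{j \in I'} M(i_0,j)\cdot x(j) + p(i_0)$. By the least fixpoint axiom, the least term $x(i_0)$ satisfying this inequation, once values of $x(j)$ for $j \in I'$ are fixed, is $M(i_0,i_0)^\star \cdot r$. Substituting this back into the remaining inequations yields an $I'$-linear system $\ls{L}' = \angl{M', p'}$ with
\[
M'(i,j) \triangleq M(i,j) + M(i,i_0) \cdot M(i_0,i_0)^\star \cdot M(i_0,j), \qquad p'(i) \triangleq p(i) + M(i,i_0) \cdot M(i_0,i_0)^\star \cdot p(i_0),
\]
for $i, j \in I'$. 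By the induction hypothesis, there is an $I'$-vector $s'$ that is simultaneously the least solution to $\ls{L}'$ in \BKA and in \CKA. Define the $I$-vector $s$ by $s(i) \triangleq s'(i)$ for $i \in I'$ and $s(i_0) \triangleq M(i_0,i_0)^\star \cdot \bigl(\sum_{j \in I'} M(i_0,j) \cdot s'(j) + p(i_0)\bigr)$.

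It then remains to verify two points, and here the argument is uniform in \AKA $\in \{\BKA, \CKA\}$. First, $s$ is a solution to $\ls{L}$: the inequation at $i_0$ follows from $e \cdot e^\star \leqqaka e^\star$ and the definition of $s(i_0)$, while the inequations at $i \in I'$ follow by unfolding the definition of $M'$ and $p'$ and using that $s'$ solves $\ls{L}'$. Second, $s$ is least: if $t$ is any solution to $\ls{L}$ in \AKA, then by applying the least fixpoint axiom to the $i_0$-th inequation satisfied by $t$ we obtain $M(i_0,i_0)^\star \cdot \bigl(\sum_{j \in I'} M(i_0,j) \cdot t(j) + p(i_0)\bigr) \leqqaka t(i_0)$, and substituting this into the other inequations shows that $t\restr{I'}$ is a solution to $\ls{L}'$ in \AKA; by minimality of $s'$ we have $s' \leqqaka t\restr{I'}$, and then monotonicity of $\cdot$ and $+$ yields $s(i_0) \leqqaka t(i_0)$.

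The main obstacle, minor as it is, lies in packaging the calculations around the substitution step carefully enough that the derivations manifestly use only \KA axioms, so that the resulting inequalities $s \leqqbka t$ and $s \leqqcka t$ follow from a single proof skeleton. Since neither the exchange law nor any structural property of $\parallel$ is invoked anywhere, and since $\equivbka$ and $\equivcka$ agree on all the \KA axioms deployed above, the same syntactic $I$-vector $s$ is the least solution to $\ls{L}$ in both \BKA and \CKA, as required.
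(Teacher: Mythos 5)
Your proposal is correct and follows essentially the same route as the paper's own proof: Gaussian-style elimination by induction on $|I|$, picking one index $k$ (your $i_0$), solving the corresponding inequation with the star and the least fixpoint axiom, substituting to obtain the same reduced system $\ls{L}'$ (your $M'$ and $p'$ match the paper's up to commutativity of $+$), building the solution vector identically, and establishing leastness by restricting an arbitrary solution to the smaller index set. The crucial observation that only \KA axioms are used, so a single syntactic vector is least in both \BKA and \CKA, is also exactly the paper's.
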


\ifarxiv%
For the sake of self-containment, we include a full proof of the lemma above using the notation of this paper in \Cref{appendix:proofs-preliminaries}.
\else%
We include a full proof of the lemma above using the notation of this paper in the full version of this paper~\cite{brunet-kappe-silva-zanasi-2017-arxiv}.
\fi%

\section{Completeness of \texorpdfstring{\CKA}{CKA}}%
\label{section:axiomatising}

We now turn our attention to proving that $\equivcka$ is complete for \CKA-semantic equivalence of terms, i.e., that if $e, f \in \terms$ are such that $\semcka{e} = \semcka{f}$, then $e \equivcka f$.
In the interest of readability, proofs of technical lemmas in this section
\ifarxiv%
are deferred to \Cref{appendix:axiomatising-proofs}.
\else%
can be found in the full version of this paper~\cite{brunet-kappe-silva-zanasi-2017-arxiv}.
\fi%

As mentioned before, our proof of completeness is based on the completeness result for \BKA reproduced in~\Cref{theorem:wbka-completeness}.
Recall that $\semcka{e} = \down{\sembka{e}}$.
To reuse completeness of \BKA, we construct a syntactic variant of the closure operator, which is formalised below.
\begin{definition}%
\label{definition:closure}
Let $e \in \terms$.
We say that $\down{e}$ is a \emph{closure} of $e$ if both $e \equivcka \down{e}$ and $\sembka{\down{e}} = \down{\sembka{e}}$ hold.
\end{definition}

\begin{example}
Let $e = a \parallel b$; as proposed in \Cref{section:overview}, we claim that $\down{e} = a \parallel b + b \cdot a + a \cdot b$ is a closure of $e$.
To see why, first note that $e \leqqcka \down{e}$ by construction.
Furthermore,
\[
ab
    \equivcka (a \parallel 1) \cdot (1 \parallel b)
    \leqqcka (a \cdot 1) \parallel (1 \cdot b)
    \equivcka a \parallel b
\]
and similarly $ba \leqqcka e$; thus, $e \equivcka \down{e}$.
Lastly, the pomsets in $\down{\sembka{e}}$ and $\sembka{\down{e}}$ are simply $a \parallel b$, $ab$ and $ba$, and therefore $\sembka{\down{e}} = \down{\sembka{e}}$.
\end{example}

Laurence and Struth observed that the existence of a closure for every term implies a completeness theorem for \CKA, as follows.

\begin{lemma}%
\label{lemma:closure-implies-completeness}
Suppose that we can construct a closure for every element of $\terms$.
If $e, f \in \terms$ such that $\semcka{e} = \semcka{f}$, then $e \equivcka f$.
\end{lemma}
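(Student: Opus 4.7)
The plan is to chain the closure property with the existing \BKA{}-completeness result. Given $e, f \in \terms$ with $\semcka{e} = \semcka{f}$, I would first invoke the hypothesis to obtain closures $\down{e}$ and $\down{f}$. By the definition of closure (\Cref{definition:closure}), I have $e \equivcka \down{e}$, $f \equivcka \down{f}$, and moreover $\sembka{\down{e}} = \down{\sembka{e}}$ and $\sembka{\down{f}} = \down{\sembka{f}}$.

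Next, unfolding the \CKA{}-semantics as $\semcka{-} = \down{\sembka{-}}$, the assumed equality $\semcka{e} = \semcka{f}$ rewrites to $\down{\sembka{e}} = \down{\sembka{f}}$, and by the closure identities just listed this is precisely $\sembka{\down{e}} = \sembka{\down{f}}$. Now \Cref{theorem:wbka-completeness} applies to give $\down{e} \equivbka \down{f}$.

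To finish, I need to promote this \BKA{}-equivalence to a \CKA{}-equivalence. This is immediate from \Cref{definition:equivalence}: the relation $\equivcka$ is defined to satisfy all the rules defining $\equivbka$ (plus the exchange law), so $\equivbka{} \subseteq{} \equivcka$. Hence $\down{e} \equivcka \down{f}$, and transitivity through
\[ e \equivcka \down{e} \equivcka \down{f} \equivcka f \]
concludes the argument.

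There is no genuine obstacle here; the proof is essentially a diagram chase around \Cref{figure:closure-step}, exploiting that the syntactic closure collapses the \CKA{}-question to a \BKA{}-question for which completeness is already available. The entire burden of the completeness argument has been shifted to the hypothesis that closures exist, which is exactly what the remainder of \Cref{section:axiomatising} will have to establish.
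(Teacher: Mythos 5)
Your proof is correct and follows exactly the same route as the paper's: rewrite $\semcka{e}=\semcka{f}$ as $\sembka{\down{e}}=\sembka{\down{f}}$ via the closure property, invoke \Cref{theorem:wbka-completeness} for $\down{e}\equivbka\down{f}$, observe $\equivbka\subseteq\equivcka$, and chain through $e\equivcka\down{e}\equivcka\down{f}\equivcka f$. No discrepancies.
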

\begin{proof}
Since $\semcka{e} = \down{\sembka{e}} = \sembka{\down{e}}$ and similarly $\semcka{f} = \sembka{\down{f}}$, we have $\sembka{\down{e}} = \sembka{\down{f}}$.
By \Cref{theorem:wbka-completeness}, we get $\down{e} \equivbka \down{f}$, and thus $\down{e} \equivcka \down{f}$, since all axioms of \BKA are also axioms of \CKA. 
By $e \equivcka \down{e}$ and $\down{f} \equivcka f$, we can then conclude that $e \equivcka f$.
\end{proof}

The remainder of this section is dedicated to showing that the premise of \Cref{lemma:closure-implies-completeness} holds.
We do this by explicitly constructing a closure $\down{e}$ for every $e \in \terms$.
First, we note that closure can be constructed for the base terms.

\begin{restatable}{lemma}{closurebase}%
\label{lemma:closure-base}
Let $e \in 2$ or $e = a$ for some $a \in \alphabet$.
Then $e$ is a closure of itself.
\end{restatable}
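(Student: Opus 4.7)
The plan is to verify the two defining conditions of being a closure, namely $e \equivcka e$ and $\sembka{e} = \down{\sembka{e}}$. The first is immediate by reflexivity of $\equivcka$, so all the work goes into the second, which amounts to showing that $\sembka{e}$ is already closed under subsumption in each of the three base cases $e \in \{0, 1\}$ and $e = a$.

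I would handle each case in turn. For $e = 0$, one has $\sembka{0} = \emptyset$, and the closure of the empty set is empty, so there is nothing to check. For $e = 1$, one has $\sembka{1} = \{1\}$, and I need that no series-parallel pomset other than $1$ is subsumed by $1$; this is exactly the empty case of \Cref{lemma:pomset-subsumption-base}. For $e = a$ with $a \in \alphabet$, one has $\sembka{a} = \{a\}$, and again by the primitive case of \Cref{lemma:pomset-subsumption-base}, the only pomset subsumed by $a$ is $a$ itself. In each case, $\down{\sembka{e}} = \sembka{e}$, so $e$ is a closure of itself by \Cref{definition:closure}.

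There is no real obstacle here — the entire content of the lemma is packaged into \Cref{lemma:pomset-subsumption-base}, which has already been stated. The proof is essentially just case analysis plus one appeal to that lemma, and can be written in a few lines.
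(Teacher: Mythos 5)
Your proof is correct and matches the paper's own argument: both verify $e \equivcka e$ trivially and then reduce the closure-of-semantics condition to \Cref{lemma:pomset-subsumption-base} for the cases $e=1$ and $e=a$, with $e=0$ being immediate since $\sembka{0}=\emptyset$. No meaningful difference in approach.
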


Furthermore, closure can be constructed compositionally for all operators except parallel composition, in the following sense.

\begin{lemma}%
\label{lemma:closure-compositional}
Suppose that $e_0, e_1 \in \terms$, and that $e_0$ and $e_1$ have closures $\down{e_0}$ and $\down{e_1}$.
Then
\begin{inparaenum}[(i)] 
    \item $\down{e_0} + \down{e_1}$ is a closure of $e_0 + e_1$,
    \item $\down{e_0} \cdot \down{e_1}$ is a closure of $e_0 \cdot e_1$, and
    \item ${(\down{e_0})}^\star$ is a closure of $e_0^\star$.
\end{inparaenum}
\end{lemma}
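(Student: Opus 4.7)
The plan is to observe that this lemma is essentially a routine computation: each of the three cases reduces to combining the fact that $\equivcka$ is a congruence with the distributivity of syntactic closure over union, sequential composition and Kleene star at the semantic level, which is exactly the content of \Cref{lemma:pl-composition-vs-closure}. There is no serious obstacle here; the hard case (parallel composition) is deliberately excluded from the statement and will be handled separately in the rest of the section.

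For each case, I must verify the two defining conditions of a closure: axiomatic equivalence to the original term in \CKA, and the equality $\sembka{\down{(-)}} = \down{\sembka{-}}$. For the first condition, in all three cases the hypothesis $e_i \equivcka \down{e_i}$, together with the fact that $\equivcka$ is a congruence with respect to $+$, $\cdot$, and $^\star$, immediately yields $e_0 + e_1 \equivcka \down{e_0} + \down{e_1}$, $e_0 \cdot e_1 \equivcka \down{e_0} \cdot \down{e_1}$, and $e_0^\star \equivcka (\down{e_0})^\star$ respectively.

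For the second condition, I would chain equalities. For case (i):
\[\sembka{\down{e_0} + \down{e_1}} = \sembka{\down{e_0}} \cup \sembka{\down{e_1}} = \down{\sembka{e_0}} \cup \down{\sembka{e_1}} = \down{(\sembka{e_0} \cup \sembka{e_1})} = \down{\sembka{e_0 + e_1}},\]
using the definition of $\sembka{-}$, the hypothesis on the $\down{e_i}$, and the first part of \Cref{lemma:pl-composition-vs-closure}. Cases (ii) and (iii) proceed identically, invoking the second and third parts of \Cref{lemma:pl-composition-vs-closure} respectively; for (iii) one additionally uses that $\sembka{e^\star} = \sembka{e}^\star$ by definition.

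Since every step is either a definitional unfolding, an application of \Cref{lemma:pl-composition-vs-closure}, or use of the congruence property of $\equivcka$, the entire proof is a matter of writing down three short chains of equalities. The genuine difficulty in establishing closure therefore lies entirely in the missing case $e_0 \parallel e_1$, which justifies the three-stage strategy outlined in \Cref{section:overview}.
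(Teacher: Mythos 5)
Your proof is correct and matches the paper's own argument exactly: both use the congruence property of $\equivcka$ for the axiomatic equivalence and \Cref{lemma:pl-composition-vs-closure} for the semantic equality. You simply spell out the chain of equalities that the paper leaves implicit.
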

\begin{proof}
Since $\down{e_0} \equivcka e_0$ and $\down{e_1} \equivcka e_1$, by the fact that $\equivcka$ is a congruence we obtain $\down{e_0} + \down{e_1} \equivcka e_0 + e_1$.
Similar observations hold for the other operators.
We conclude using \Cref{lemma:pl-composition-vs-closure}.
\end{proof}

It remains to consider the case where $e = e_0 \parallel e_1$.
In doing so, our induction hypothesis is that any $f \in \terms$ with $\width{f} < \width{e_0 \parallel e_1}$ has a closure, as well as any strict subterm of $e_0 \parallel e_1$.

\subsection{Preclosure}

To get to a closure of a parallel composition, we first need an operator on terms that is not a closure quite yet, but whose \BKA-semantics is ``closed enough'' to cover the non-sequential elements of the \CKA-semantics of the term.
\begin{definition}%
\label{definition:preclosure}
Let $e \in \terms$.
A \emph{preclosure} of $e$ is a term $\tilde{e} \in \terms$ such that $\tilde{e} \equivcka e$.
Moreover, if $U \in \semcka{e}$ is non-sequential, then $U \in \sembka{\tilde{e}}$.
\end{definition}

\begin{example}%
\label{example:preclosure}
Suppose that $e_0 \parallel e_1 = (a \parallel b) \parallel c$.
A preclosure of $e_0 \parallel e_1$ could be
\[\tilde{e} = a \parallel b \parallel c + (a \cdot b + b \cdot a) \parallel c + (b \cdot c + c \cdot b) \parallel a + (a \cdot c + c \cdot a) \parallel b\]
To verify this, note that $e \leqqcka \tilde{e}$ by construction; remains to show that $\tilde{e} \leqqcka e$.
This is fairly straightforward: since $a \cdot b + b \cdot a \leqqcka a \parallel b$, we have $(a \cdot b + b \cdot a) \parallel c \leqqcka e$; the other terms are treated similarly.
Consequently, $e \equivcka \tilde{e}$.
Furthermore, there are seven non-sequential pomsets in $\semcka{e}$; they are
\begin{mathpar}
a \parallel b \parallel c \and
ab \parallel c \and
ba \parallel c \and
bc \parallel a \and
cb \parallel a \and
ac \parallel b \and
ca \parallel b
\end{mathpar}
Each of these pomsets is found in $\sembka{\tilde{e}}$.
It should be noted that $\tilde{e}$ is \emph{not} a closure of $e$; to see this, consider for instance that $abc \in \semcka{e}$, while $abc \not\in \sembka{\tilde{e}}$.
\end{example}

The remainder of this section is dedicated to showing that, under the induction hypothesis, we can construct a preclosure for any parallelly composed term.
This is not perfectly straightforward; for instance, consider the term $e_0 \parallel e_1$ discussed in \Cref{example:preclosure}.
At first glance, one might be tempted to choose $\down{e_0} \parallel \down{e_1}$ as a preclosure, since $\down{e_0}$ and $\down{e_1}$ exist by the induction hypothesis.
In that case, $\down{e_0} = a \parallel b + a \cdot b + b \cdot a$ is a closure of $e_0$.
Furthermore, $\down{e_1} = c$ is a closure of $e_1$, by \Cref{lemma:closure-base}.
However, $\down{e_0} \parallel \down{e_1}$ is not a preclosure of $e_0 \parallel e_1$, since $(a \cdot c) \parallel b$ is non-sequential and found in $\semcka{e_0 \parallel e_1}$, but not in $\sembka{\down{e_0} \parallel \down{e_1}}$.

The problem is that the preclosure of $e_0$ and $e_1$ should also allow (partial) sequentialisation of \emph{parallel parts} of $e_0$ and $e_1$; in this case, we need to sequentialise the $a$ part of $a \parallel b$ with $c$, and leave $b$ untouched.
To do so, we need to be able to \emph{split} $e_0 \parallel e_1$ into pairs of constituent terms, each of which represents a possible way to divvy up its parallel parts.
For instance, we can split $e_0 \parallel e_1 = (a \parallel b) \parallel c$ parallelly into $a \parallel b$ and $c$, but also into $a$ and $b \parallel c$, or into $a \parallel c$ and $b$.
The definition below formalises this procedure.
\begin{definition}%
\label{definition:psplit}
Let $e \in \terms$; $\psplit{e}$ is the smallest relation on $\terms$ such that
\begin{mathpar}
\inferrule{~}{%
    1 \psplit{e} e
}
\and
\inferrule{~}{%
    e \psplit{e} 1
}
\and
\inferrule{%
    \lefts \psplit{e_0} \rights
}{%
    \lefts \psplit{e_1 + e_0} \rights
}
\and
\inferrule{%
    \lefts \psplit{e_1} \rights
}{%
    \lefts \psplit{e_0 + e_1} \rights
}
\and
\inferrule{%
    \lefts \psplit{e} \rights
}{%
    \lefts \psplit{e^\star} \rights
}
\\
\inferrule{%
    \lefts \psplit{e_0} \rights \\
    \epsilon(e_1) = 1
}{%
    \lefts \psplit{e_0 \cdot e_1} \rights
}
\and
\inferrule{%
    \lefts \psplit{e_1} \rights \\
    \epsilon(e_0) = 1
}{%
    \lefts \psplit{e_0 \cdot e_1} \rights
}
\and
\inferrule{%
    \lefts_0 \psplit{e_0} \rights_0 \\
    \lefts_1 \psplit{e_1} \rights_1
}{%
    \lefts_0 \parallel \lefts_1 \psplit{e_0 \parallel e_1} \rights_0 \parallel \rights_1
}
\end{mathpar}
\end{definition}

Given $e \in \terms$, we refer to $\psplit{e}$ as the \emph{parallel splitting relation} of $e$, and to the elements of $\psplit{e}$ as \emph{parallel splices} of $e$.
Before we can use $\psplit{e}$ to construct the preclosure of $e$, we go over a number of properties of the parallel splitting relation.
The first of these properties is that a given $e \in \terms$ has only finitely many parallel splices.
This will be useful later, when we involve \emph{all} parallel splices of $e$ in building a new term, i.e., to guarantee that the constructed term is finite.

\begin{restatable}{lemma}{psplitfinite}%
\label{lemma:psplit-finite}
For $e \in \terms$, $\psplit{e}$ is finite.
\end{restatable}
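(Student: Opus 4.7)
The plan is to proceed by structural induction on $e$, constructing for each term a finite set $P_e \subseteq \terms \times \terms$ that satisfies all the closure conditions defining $\psplit{e}$; then by minimality of $\psplit{e}$, we get $\psplit{e} \subseteq P_e$, and hence $\psplit{e}$ is finite. This approach sidesteps having to reason directly about derivation trees and lets each inductive case be read off from the clauses of \Cref{definition:psplit}.

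Concretely, I would define $P_e$ as follows. In the base cases, set $P_0 = \{(1,0),(0,1)\}$, $P_1 = \{(1,1)\}$, and $P_a = \{(1,a),(a,1)\}$; each is obviously finite. In the inductive step, for $e = e_0 + e_1$ take
\[
P_{e_0+e_1} = \{(1,e_0+e_1),(e_0+e_1,1)\} \cup P_{e_0} \cup P_{e_1},
\]
for $e = e_0 \cdot e_1$ take the union of $\{(1,e_0 \cdot e_1),(e_0 \cdot e_1,1)\}$ with $P_{e_0}$ whenever $\epsilon(e_1)=1$ and with $P_{e_1}$ whenever $\epsilon(e_0)=1$, and for $e = e^\star$ take $P_{e^\star} = \{(1,e^\star),(e^\star,1)\} \cup P_e$. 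In all these cases finiteness of $P_e$ is immediate from the induction hypothesis, since we take finite unions of finite sets.

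The only case that requires a moment of thought is parallel composition. Here I would set
\[
P_{e_0 \parallel e_1} = \{(1, e_0 \parallel e_1),(e_0 \parallel e_1, 1)\} \cup \bigl\{(\lefts_0 \parallel \lefts_1, \rights_0 \parallel \rights_1) : (\lefts_0,\rights_0) \in P_{e_0},\ (\lefts_1,\rights_1) \in P_{e_1}\bigr\}.
\]
By induction $P_{e_0}$ and $P_{e_1}$ are finite, so the displayed set is the image of $P_{e_0} \times P_{e_1}$ under a function into $\terms \times \terms$ and is therefore finite. This is the one place where the cardinality can genuinely grow (multiplicatively), and is the closest thing to an obstacle, though it causes no real difficulty.

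Finally, a quick inspection of each clause of \Cref{definition:psplit} shows that the family $\{P_e\}_{e \in \terms}$ satisfies all the inference rules defining $\psplit{e}$: the trivial splits $(1,e)$ and $(e,1)$ are included by construction, and the sum, sequential, star and parallel rules are handled by the corresponding inductive clauses above (using the side condition on $\epsilon$ for the sequential case). Since $\psplit{e}$ is the \emph{smallest} such relation, we conclude $\psplit{e} \subseteq P_e$, so $\psplit{e}$ is finite.
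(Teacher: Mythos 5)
Your proposal is correct and follows essentially the same approach as the paper: structural induction on $e$, bounding $\psplit{e}$ by a finite set in each case. The paper phrases this as rule inversion (``if $\lefts \psplit{e} \rights$, one of five cases must hold, each admitting only finitely many pairs''), whereas you build the finite superset $P_e$ explicitly and invoke minimality of $\psplit{e}$; the two formulations are interchangeable, with yours being marginally more self-contained since it avoids appealing tacitly to an inversion principle.
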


We furthermore note that the parallel composition of any parallel splice of $e$ is ordered below $e$ by $\leqqbka$.
This guarantees that parallel splices never contain extra information, i.e., that their semantics do not contain pomsets that do not occur in the semantics of $e$.
It also allows us to bound the width of the parallel splices by the width of the term being split, as a result of \Cref{lemma:width-vs-equivalence}.

\begin{restatable}{lemma}{psplitdomination}%
\label{lemma:psplit-domination}
Let $e \in \terms$.
If $\lefts \psplit{e} \rights$, then $\lefts \parallel \rights \leqqbka e$.
\end{restatable}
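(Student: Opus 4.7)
The plan is to proceed by induction on the derivation of $\lefts \psplit{e} \rights$, following the inference rules in \Cref{definition:psplit}. Each rule corresponds to one case, and in every case the conclusion should follow from a direct combination of the inductive hypothesis with the \BKA-axioms listed in \Cref{definition:equivalence}, together with \Cref{lemma:nullable} for the rules that mention $\epsilon$.

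The base rules $1 \psplit{e} e$ and $e \psplit{e} 1$ are immediate from the unit law $e \parallel 1 \equivbka e$. For the two choice rules, the inductive hypothesis gives $\lefts \parallel \rights \leqqbka e_i$, and monotonicity of $+$ yields $\lefts \parallel \rights \leqqbka e_0 + e_1$. For the Kleene star rule, I first observe $e \leqqbka e^\star$, which is a consequence of the unfolding axiom $1 + e \cdot e^\star \equivbka e^\star$ (since $e \equivbka e \cdot 1 \leqqbka e \cdot e^\star \leqqbka e^\star$); combined with the inductive hypothesis $\lefts \parallel \rights \leqqbka e$ this gives the result.

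The two sequential composition rules use nullability: when $\epsilon(e_1) = 1$, \Cref{lemma:nullable} gives $1 \leqqbka e_1$, so $\lefts \parallel \rights \leqqbka e_0 \equivbka e_0 \cdot 1 \leqqbka e_0 \cdot e_1$; the symmetric rule is handled identically using $\epsilon(e_0) = 1$. Finally, the parallel composition rule is the only one that actually uses the structure of $\parallel$ nontrivially: from $\lefts_0 \parallel \rights_0 \leqqbka e_0$ and $\lefts_1 \parallel \rights_1 \leqqbka e_1$, commutativity and associativity of $\parallel$ rearrange $(\lefts_0 \parallel \lefts_1) \parallel (\rights_0 \parallel \rights_1)$ into $(\lefts_0 \parallel \rights_0) \parallel (\lefts_1 \parallel \rights_1)$, and monotonicity of $\parallel$ with respect to $\leqqbka$ then delivers $\leqqbka e_0 \parallel e_1$.

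There is no real obstacle here; the lemma is essentially a sanity check that every splitting rule is sound. The only subtle points are remembering to cite \Cref{lemma:nullable} in the two sequential cases (to convert the syntactic side condition $\epsilon(e_i) = 1$ into an axiomatic inequality) and being careful that monotonicity and commutativity/associativity of $\parallel$ are all available from \Cref{definition:equivalence}. Everything else is routine manipulation within $\equivbka$.
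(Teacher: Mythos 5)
Your proof is correct and follows essentially the same route as the paper: induction on the derivation of $\lefts \psplit{e} \rights$, with the base cases using $e \parallel 1 \equivbka e$, the sequential cases using \Cref{lemma:nullable} to turn $\epsilon(e_i)=1$ into $1 \leqqbka e_i$, and the parallel case rearranging via commutativity/associativity of $\parallel$. No gaps.
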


\begin{corollary}%
\label{corollary:psplit-width}
Let $e \in \terms$.
If $\lefts \psplit{e} \rights$, then $\width{\lefts} + \width{\rights} \leq \width{e}$.
\end{corollary}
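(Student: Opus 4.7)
The plan is to derive this corollary as an immediate two-step consequence of Lemma~\ref{lemma:psplit-domination} and Lemma~\ref{lemma:width-vs-equivalence}. First I would invoke Lemma~\ref{lemma:psplit-domination} on the hypothesis $\lefts \psplit{e} \rights$ to obtain $\lefts \parallel \rights \leqqbka e$, which by definition of $\leqqbka$ unfolds to $(\lefts \parallel \rights) + e \equivbka e$. Applying Lemma~\ref{lemma:width-vs-equivalence} then yields $\width{(\lefts \parallel \rights) + e} = \width{e}$.

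Unfolding the left-hand side via the recursive clauses of Definition~\ref{definition:width} for $+$ and $\parallel$ rewrites this equation as $\max(\width{\lefts} + \width{\rights},\, \width{e}) = \width{e}$, from which the desired inequality $\width{\lefts} + \width{\rights} \leq \width{e}$ follows immediately.

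The only subtlety, and the one place I would be careful, is the boundary case in Definition~\ref{definition:width} where a term is $\equivbka$-equivalent to $0$ and the recursive clauses therefore do not apply. If $e \equivbka 0$, then $(\lefts \parallel \rights) + e \equivbka e$ forces $\lefts \parallel \rights \equivbka 0$; by soundness (Lemma~\ref{lemma:soundness}) and the fact that $\sembka{\lefts \parallel \rights} = \sembka{\lefts} \parallel \sembka{\rights}$, at least one of $\lefts, \rights$ must then itself satisfy $\sembka{-} = \emptyset$, so by \BKA-completeness (Theorem~\ref{theorem:wbka-completeness}) be $\equivbka$-equivalent to $0$; in that situation both widths collapse to $0$ and the inequality holds trivially. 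I do not anticipate any real obstacle beyond this routine case split --- the argument is essentially a one-liner built on top of the two supporting lemmas.
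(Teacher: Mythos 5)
Your main chain --- from $\lefts \psplit{e} \rights$ obtain $\lefts \parallel \rights \leqqbka e$ by Lemma~\ref{lemma:psplit-domination}, then $\width{(\lefts \parallel \rights) + e} = \width{e}$ by Lemma~\ref{lemma:width-vs-equivalence}, then unfold --- is exactly what the paper intends, and you were right to be wary of the boundary clause in Definition~\ref{definition:width}. The trouble is that the boundary analysis is where the argument actually breaks.

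In the case $e \equivbka 0$, your argument only yields that \emph{one} of $\lefts, \rights$ is $\equivbka 0$ (a parallel product of pomset languages is empty iff some factor is), not both, so ``both widths collapse to $0$'' does not follow. Concretely, take $e = a \parallel 0$, for which $\width{e} = 0$. From $a \psplit{a} 1$ and $1 \psplit{0} 0$, the parallel rule gives $a \parallel 1 \psplit{a \parallel 0} 1 \parallel 0$, yet $\width{a \parallel 1} + \width{1 \parallel 0} = 1 + 0 = 1 > 0$. The corollary, read literally, is therefore false in this case. There is also a second degenerate case you did not flag: when $e \not\equivbka 0$ but $\lefts \parallel \rights \equivbka 0$, the rewrite $\width{\lefts \parallel \rights} = \width{\lefts} + \width{\rights}$ is not licensed, since the recursive clause for $\parallel$ only applies to terms that are not $\equivbka 0$. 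This is not vacuous either: with $e = b + (a \parallel a) \parallel 0$ (so $\width{e} = 1$), the sum and parallel rules give $(a \parallel a) \parallel 1 \psplit{e} 1 \parallel 0$, yet $\width{(a \parallel a) \parallel 1} + \width{1 \parallel 0} = 2 > 1$.

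Why does the paper get away with stating the corollary unconditionally? Because it is only ever invoked (in the proof of Lemma~\ref{lemma:preclosure}) \emph{after} Lemma~\ref{lemma:width-vs-nonempty} has established $\width{\lefts}, \width{\rights} \geq 1$. Under that hypothesis neither $\lefts$ nor $\rights$ is $\equivbka 0$, hence $\sembka{\lefts \parallel \rights} = \sembka{\lefts} \parallel \sembka{\rights} \neq \emptyset$, hence $\lefts \parallel \rights \not\equivbka 0$, and then (by domination) $e \not\equivbka 0$ as well --- so both of the unfoldings your ``main'' argument relies on are legitimate and the conclusion follows. So: correct strategy, wrong boundary treatment, and the honest fix is to add the side condition $\lefts \parallel \rights \not\equivbka 0$ (equivalently, that neither splice is $\equivbka 0$) to the statement, which is all the paper actually needs.
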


Finally, we show that $\psplit{e}$ is \emph{dense} when it comes to parallel pomsets, meaning that if we have a parallelly composed pomset in the semantics of $e$, then we can find a parallel splice where one parallel component is contained in the semantics of one side of the pair, and the other component in that of the other.

\begin{lemma}%
\label{lemma:psplit-density}
Let $e \in \terms$, and let $V, W$ be pomsets such that $V \parallel W \in \sembka{e}$.
Then there exist $\lefts, \rights \in \terms$ with $\lefts \psplit{e} \rights$ such that $V \in \sembka{\lefts}$ and $W \in \sembka{\rights}$.
\end{lemma}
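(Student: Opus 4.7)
\medskip
\noindent\textbf{Proof plan.} The strategy is a straightforward structural induction on $e$, following the grammar of \Cref{definition:syntax} and mirroring the inference rules defining $\psplit{e}$. The base cases $e = 0$ (vacuous, since $\sembka{0} = \emptyset$), $e = 1$ (so $V = W = 1$, use $1 \psplit{1} 1$ via either of the first two rules), and $e = a$ (by \Cref{lemma:pomset-unique-decomposition}, a primitive pomset is not a non-trivial parallel composition, hence one of $V, W$ equals $1$ and the other equals $a$; use the appropriate base rule) are immediate. The case $e = e_0 + e_1$ follows by observing that $V \parallel W \in \sembka{e_0} \cup \sembka{e_1}$, applying the induction hypothesis to whichever disjunct contains it, and then lifting the resulting split via the rules for~$+$.

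The interesting cases are $\cdot$, $\parallel$, and $\star$. For $e = e_0 \parallel e_1$, write $V \parallel W = U_0 \parallel U_1$ with $U_i \in \sembka{e_i}$ and apply \Cref{lemma:pomset-levi-parallel} to obtain $Y_0, Y_1, Z_0, Z_1$ with $V = Y_0 \parallel Y_1$, $W = Z_0 \parallel Z_1$, $U_0 = Y_0 \parallel Z_0$, and $U_1 = Y_1 \parallel Z_1$. The induction hypothesis applied to $e_0$ (with the parallel pair $Y_0, Z_0$) and to $e_1$ (with $Y_1, Z_1$) yields splices $\lefts_0 \psplit{e_0} \rights_0$ and $\lefts_1 \psplit{e_1} \rights_1$; the parallel rule combines these into $\lefts_0 \parallel \lefts_1 \psplit{e_0 \parallel e_1} \rights_0 \parallel \rights_1$, and by definition of $\sembka{-}$ on $\parallel$, $V \in \sembka{\lefts_0 \parallel \lefts_1}$ and $W \in \sembka{\rights_0 \parallel \rights_1}$.

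For $e = e_0 \cdot e_1$, write $V \parallel W = U_0 \cdot U_1$ with $U_i \in \sembka{e_i}$. Here the key observation is that by uniqueness of decomposition (\Cref{lemma:pomset-unique-decomposition}), a pomset cannot simultaneously be a parallel composition with both sides non-trivial and a sequential composition with both sides non-trivial; consequently at least one of $U_0, U_1$ equals $1$, whence the corresponding $e_i$ is nullable by \Cref{lemma:nullable}, and the induction hypothesis on the other operand combined with the appropriate sequential rule does the job (the sub-case $V = W = 1$ is absorbed by taking trivial splices on either operand). The case $e = e_0^\star$ is handled analogously: if $V = 1$ or $W = 1$, use the base rule $1 \psplit{e_0^\star} e_0^\star$ (or its mirror), exploiting that $V \parallel W \in \sembka{e_0^\star}$; otherwise, $V \parallel W$ is genuinely parallel with two non-unit components, so when written as $U_1 \cdots U_n \in \sembka{e_0}^n$, unique decomposition forces all but one $U_i$ to be $1$, yielding $V \parallel W \in \sembka{e_0}$, to which the induction hypothesis applies, followed by the Kleene star rule.

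The main obstacle is the interaction between the parallel and sequential structures in the $\cdot$ and $\star$ cases: one must carefully rule out the possibility that a truly parallel pomset in $\sembka{e}$ arose from a sequential decomposition with two non-trivial factors. This is resolved cleanly by the uniqueness clause of \Cref{lemma:pomset-unique-decomposition}, which forces the trivial factor and reduces the problem to one where the induction hypothesis directly applies. The $\parallel$ case is where the real pomset-combinatorial content lives, and it is dispatched by \Cref{lemma:pomset-levi-parallel}.
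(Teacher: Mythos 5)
Your base cases, the $+$ case, the $\parallel$ case (via \Cref{lemma:pomset-levi-parallel}), and the $\star$ case all match the paper's proof and are sound. However, there is a genuine gap in the $\cdot$ case.

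You write that \Cref{lemma:pomset-unique-decomposition} gives ``consequently at least one of $U_0, U_1$ equals $1$.'' That inference is incorrect. What the uniqueness clause actually yields from $V \parallel W = U_0 \cdot U_1$ is a \emph{disjunction}: either $U_0 = 1$ or $U_1 = 1$, \emph{or else} $V = 1$ or $W = 1$. Nothing rules out both $U_0$ and $U_1$ being non-trivial; that merely forces the parallel decomposition to be degenerate on one side. For a concrete counterexample to your claim, take $e = a \cdot a$, $V = 1$, $W = a\cdot a$: then $V \parallel W = U_0 \cdot U_1$ with $U_0 = U_1 = a$, neither of which is $1$, and neither $e_0$ nor $e_1$ is nullable, so your main argument (nullability plus the sequential splitting rule) does not apply. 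Your parenthetical only mentions the sub-case $V = W = 1$, which in fact cannot co-occur with $U_0, U_1 \neq 1$ (since $V \parallel W = 1$ forces $U_0 = U_1 = 1$); the sub-case you actually need --- exactly one of $V, W$ equal to $1$ while the other is a non-trivial sequential composition --- is not addressed. The fix is the one the paper uses and the one you yourself use in the $\star$ case: if $V = 1$, take $\lefts = 1$ and $\rights = e$ via the base rule $1 \psplit{e} e$, noting that $W = V \parallel W \in \sembka{e}$; symmetrically if $W = 1$. In short, the case split in the $\cdot$ case should mirror the case split you already (correctly) performed for $\star$.
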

\begin{proof}
The proof proceeds by induction on $e$.
In the base, we can discount the case where $e = 0$, for then the claim holds vacuously.
This leaves us two cases.
\begin{itemize}
    \item
    If $e = 1$, then $V \parallel W \in \sembka{e}$ entails $V \parallel W = 1$.
    By \Cref{lemma:pomset-unique-decomposition}, we find that $V = W = 1$.
    Since $1 \psplit{e} 1$ by definition of $\psplit{e}$, the claim follows when we choose $\lefts = \rights = 1$.

    \item
    If $e = a$ for some $a \in \alphabet$, then $V \parallel W \in \sembka{e}$ entails $V \parallel W = a$.
    By \Cref{lemma:pomset-unique-decomposition}, we find that either $V = 1$ and $W = a$, or $V = a$ and $W = 1$.
    In the former case, we can choose $\lefts = 1$ and $\rights = a$, while in the latter case we can choose $\lefts = a$ and $\rights = 1$.
    It is then easy to see that our claim holds in either case.
\end{itemize}

\noindent
For the inductive step, there are four cases to consider.
\begin{itemize}
    \item
    If $e = e_0 + e_1$, then $U_0 \parallel U_1 \in \sembka{e_i}$ for some $i \in 2$.
    But then, by induction, we find $\lefts, \rights \in \terms$ with $\lefts \psplit{e_i} \rights$ such that $V \in \sembka{\lefts}$ and $W \in \sembka{\rights}$.
    Since this implies that $\lefts \psplit{e} \rights$, the claim follows.

    \item
    If $e = e_0 \cdot e_1$, then there exist pomsets $U_0, U_1$ such that $V \parallel W = U_0 \cdot U_1$, and $U_i \in \sembka{e_i}$ for all $i \in 2$.
    By \Cref{lemma:pomset-unique-decomposition}, there are two cases to consider.
    \begin{itemize}
        \item
        Suppose that $U_i = 1$ for some $i \in 2$, meaning that $V \parallel W = U_0 \cdot U_1 = U_{1-i} \in \sembka{e_{1-i}}$ for this $i$.
        By induction, we find $\lefts, \rights \in \terms$ with $\lefts \psplit{e_{1-i}} \rights$, and $V \in \sembka{\lefts}$ as well as $W \in \sembka{\rights}$.
        Since $U_i = 1 \in \sembka{e_i}$, we have that $\epsilon(e_i) = 1$ by \Cref{lemma:nullable}, and thus $\lefts \psplit{e} \rights$.

        \item
        Suppose that $V = 1$ or $W = 1$.
        In the former case, $V \parallel W = W = U_0 \cdot U_1 \in \semcka{e}$.
        We then choose $\lefts = 1$ and $\rights = e$ to satisfy the claim.
        In the latter case, we can choose $\lefts = e$ and $\rights = 1$ to satisfy the claim analogously.
    \end{itemize}

    \item
    If $e = e_0 \parallel e_1$, then there exist pomsets $U_0, U_1$ such that $V \parallel W = U_0 \parallel U_1$, and $U_i \in \sembka{e_i}$ for all $i \in 2$.
    By \Cref{lemma:pomset-levi-parallel}, we find pomsets $V_0, V_1, W_0, W_1$ such that $V = V_0 \parallel V_1$, $W = W_0 \parallel W_1$, and $U_i = V_i \parallel W_i$ for $i \in 2$.
    For $i \in 2$, we then find by induction $\lefts_i, \rights_i \in \terms$ with $\lefts_i \psplit{e_i} \rights_i$ such that $V_i \in \sembka{\lefts_i}$ and $W_i \in \sembka{\rights_i}$.
    We then choose $\lefts = \lefts_0 \parallel \lefts_1$ and $\rights = \rights_0 \parallel \rights_1$.
    Since $V = V_0 \parallel V_1$, it follows that $V \in \sembka{\lefts}$, and similarly we find that $W \in \sembka{\rights}$.
    Since $\lefts \psplit{e} \rights$, the claim follows.

    \item
    If $e = e_0^\star$, then there exist $U_0, U_1, \dots, U_{n-1} \in \sembka{e_0}$ such that $V \parallel W = U_0 \cdot U_1 \cdots U_{n-1}$.
    If $n = 0$, i.e., $V \parallel W = 1$, then $V = W = 1$.
    In that case, we can choose $\lefts = e$ and $\rights = 1$ to find that $\lefts \psplit{e} \rights$, $V \in \sembka{\lefts}$ and $W \in \sembka{\rights}$, satisfying the claim.

    If $n > 0$, we can assume without loss of generality that, for $0 \leq i < n$, it holds that $U_i \neq 1$.
    By \Cref{lemma:pomset-unique-decomposition}, there are two subcases to consider.
    \begin{itemize}
        \item
        Suppose that $V, W \neq 1$; then $n = 1$ (for otherwise $U_j = 1$ for some $0 \leq j < n$ by \Cref{lemma:pomset-unique-decomposition}, which contradicts the above).
        Since $V \parallel W = U_0 \in \sembka{e_0}$, we find by induction $\lefts, \rights \in \terms$ with $\lefts \psplit{e_0} \rights$ such that $V \in \sembka{\lefts}$ and $W \in \sembka{\rights}$.
        The claim then follows by the fact that $\lefts \psplit{e} \rights$.

        \item
        Suppose that $V = 1$ or $W = 1$.
        In the former case, $V \parallel W = W = U_0 \cdot U_1 \cdots U_{n-1} \in \semcka{e}$.
        We then choose $\lefts = 1$ and $\rights = e$ to satisfy the claim.
        In the latter case, we can choose $\lefts = e$ and $\rights = 1$ to satisfy the claim analogously.
        \qedhere
    \end{itemize}
\end{itemize}
\end{proof}

\begin{example}
Let $U = a \parallel c$ and $V = b$, and note that $U \parallel V \in \semcka{e_0 \parallel e_1}$.
We can then find that $a \psplit{a} 1$ and $1 \psplit{b} b$, and thus $a \parallel 1 \psplit{e_0} 1 \parallel b$.
Since also $c \psplit{c} 1$, it follows that $(a \parallel 1) \parallel c \psplit{e_0 \parallel e_1} (1 \parallel b) \parallel 1$.
We can then choose $\lefts = (a \parallel 1) \parallel c$ and $\rights = (1 \parallel b) \parallel 1$ to find that $U \in \sembka{\lefts}$ and $V \in \sembka{\rights}$, while $\lefts \psplit{e_0 \parallel e_1} \rights$.
\end{example}

With parallel splitting in hand, we can define an operator on terms that combines all parallel splices of a parallel composition in a way that accounts for all of their downward closures.
\begin{definition}
Let $e, f \in \terms$, and suppose that, for every $g \in \terms$ such that $\width{g} < \width{e} + \width{f}$, there exists a closure $\down{g}$.
The term $e \pc f$ is defined as follows:
\[e \pc f \triangleq e \parallel f + \sum_{\substack{\lefts \psplit{e \parallel f} \rights \\ \width{\lefts}, \width{\rights} < \width{e \parallel f}}} \down{\lefts} \parallel \down{\rights}\]
\end{definition}

Note that $e \pc f$ is well-defined: the sum is finite since $\psplit{e \parallel f}$ is finite by \Cref{lemma:psplit-finite}, and furthermore $\down{\lefts}$ and $\down{\rights}$ exist, as we required that $\width{\lefts}, \width{\rights} < \width{e \parallel f}$.

\begin{example}
Let us compute $e_0 \pc e_1$ and verify that we obtain a preclosure of $e_0 \parallel e_1$.
Working through the definition, we see that $\psplit{e_0 \parallel e_1}$ consists of the pairs
\begin{mathpar}
\angl{(1 \parallel 1) \parallel 1, (a \parallel b) \parallel c} \and
\angl{(1 \parallel 1) \parallel c, (a \parallel b) \parallel 1} \and
\angl{(1 \parallel b) \parallel 1, (a \parallel 1) \parallel c} \and
\angl{(1 \parallel b) \parallel c, (a \parallel 1) \parallel 1} \and
\angl{(a \parallel 1) \parallel 1, (1 \parallel b) \parallel c} \and
\angl{(a \parallel 1) \parallel c, (1 \parallel b) \parallel 1} \and
\end{mathpar}
Since closure is invariant with respect to $\equivcka$, we can simplify these terms by applying the axioms of $\CKA$.
After folding the unit subterms, we are left with
\begin{mathpar}
\angl{1, a \parallel b \parallel c} \and
\angl{c, a \parallel b} \and
\angl{b, a \parallel c} \and
\angl{b \parallel c, a} \and
\angl{a, b \parallel c} \and
\angl{a \parallel c, b}
\end{mathpar}
Recall that $a \parallel b + a \cdot b + b \cdot a$ is a closure of $a \parallel b$.
Now, we find that
\begin{align*}
e_0 \pc e_1
    &= (a \parallel b) \parallel c + c \parallel (a \parallel b + a \cdot b + b \cdot a) \\
    &\phantom{=} + b \parallel (a \parallel c + a \cdot c + c \cdot a) + (b \parallel c + b \cdot c + c \cdot b) \parallel a \\
    &\phantom{=} + a \parallel (b \parallel c + b \cdot c + c \cdot b) + (a \parallel c + a \cdot c + c \cdot a) \parallel b \\
    &\equivcka
    a \parallel b \parallel c +
    a \parallel (b \cdot c + c \cdot b) +
    b \parallel (a \cdot c + c \cdot a) +
    c \parallel (a \cdot b + b \cdot a)
\end{align*}
which was shown to be a preclosure of $e_0 \parallel e_1$ in \Cref{example:preclosure}.
\end{example}

The general proof of correctness for $\pc$ as a preclosure plays out as follows.
\begin{lemma}%
\label{lemma:preclosure}
Let $e, f \in \terms$, and suppose that, for every $g \in \terms$ with $\width{g} < \width{e} + \width{f}$, there exists a closure $\down{g}$.
Then $e \pc f$ is a preclosure of $e \parallel f$.
\end{lemma}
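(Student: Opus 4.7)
The goal is to verify the two conditions in \Cref{definition:preclosure}: that $e \pc f \equivcka e \parallel f$, and that every non-sequential $U \in \semcka{e \parallel f}$ lies in $\sembka{e \pc f}$.

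For the equivalence $e \pc f \equivcka e \parallel f$, one direction is free since $e \parallel f$ is literally a summand of $e \pc f$. For the other direction, I would check each summand $\down{\lefts} \parallel \down{\rights}$ indexed by a splice $\lefts \psplit{e \parallel f} \rights$: by the inductive hypothesis $\down{\lefts} \equivcka \lefts$ and $\down{\rights} \equivcka \rights$, so by congruence $\down{\lefts} \parallel \down{\rights} \equivcka \lefts \parallel \rights$; then \Cref{lemma:psplit-domination} gives $\lefts \parallel \rights \leqqbka e \parallel f$, hence $\leqqcka e \parallel f$.

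The substantive step is the second condition. I would fix a non-sequential $U \in \semcka{e \parallel f}$, so $U \sqsubseteq V_0 \parallel V_1$ for some $V_0 \in \sembka{e}$ and $V_1 \in \sembka{f}$, and split on the shape of $U$ via \Cref{lemma:pomset-unique-decomposition}. When $U$ is empty or primitive, \Cref{lemma:pomset-subsumption-base} gives $U = V_0 \parallel V_1 \in \sembka{e \parallel f} \subseteq \sembka{e \pc f}$. The interesting case is $U = U_0 \parallel U_1$ with $U_0, U_1 \neq 1$. Here I would first invoke \Cref{lemma:pomset-factorise-subsumption} on $U_0 \parallel U_1 \sqsubseteq V_0 \parallel V_1$ to obtain $V_0 \parallel V_1 = V_0' \parallel V_1'$ with $U_i \sqsubseteq V_i'$, and then apply \Cref{lemma:pomset-levi-parallel} to the equality $V_0 \parallel V_1 = V_0' \parallel V_1'$ to obtain $Y_0, Y_1, Z_0, Z_1$ with $V_0 = Y_0 \parallel Y_1$, $V_1 = Z_0 \parallel Z_1$, $V_0' = Y_0 \parallel Z_0$, $V_1' = Y_1 \parallel Z_1$. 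Applying \Cref{lemma:psplit-density} to $V_0 \in \sembka{e}$ and $V_1 \in \sembka{f}$ yields splices $\lefts_e \psplit{e} \rights_e$ and $\lefts_f \psplit{f} \rights_f$ placing $Y_0, Y_1, Z_0, Z_1$ into the appropriate semantics. Taking $\lefts = \lefts_e \parallel \lefts_f$ and $\rights = \rights_e \parallel \rights_f$ gives a splice $\lefts \psplit{e \parallel f} \rights$ whose two sides contain $V_0'$ and $V_1'$ respectively; since $U_i \sqsubseteq V_i'$, we get $U_0 \in \semcka{\lefts}$ and $U_1 \in \semcka{\rights}$, i.e., $U_0 \in \sembka{\down{\lefts}}$ and $U_1 \in \sembka{\down{\rights}}$.

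The remaining obligation, and the place I expect the main obstacle, is verifying that this splice actually contributes to the sum in $e \pc f$, i.e., $\width{\lefts}, \width{\rights} < \width{e \parallel f}$. Here \Cref{corollary:psplit-width} gives $\width{\lefts} + \width{\rights} \leq \width{e \parallel f}$, so it suffices to show both widths are strictly positive. This is where $U_0, U_1 \neq 1$ matters: since $U_i \sqsubseteq V_i'$ and subsumption preserves carriers, $V_i' \neq 1$ as well, and \Cref{lemma:width-vs-nonempty} then forces $\width{\lefts}, \width{\rights} > 0$. With the width condition met, the summand $\down{\lefts} \parallel \down{\rights}$ appears in $e \pc f$, and therefore $U = U_0 \parallel U_1 \in \sembka{\down{\lefts}} \parallel \sembka{\down{\rights}} = \sembka{\down{\lefts} \parallel \down{\rights}} \subseteq \sembka{e \pc f}$, concluding the proof.
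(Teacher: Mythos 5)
Your proof is correct and follows essentially the same strategy as the paper's: split the equivalence into two inequalities using \Cref{lemma:psplit-domination}, then for the non-sequential case use factorisation, parallel splitting density, and the width bounds from \Cref{corollary:psplit-width} and \Cref{lemma:width-vs-nonempty}. The only deviation is a mild redundancy: after factorising to get $V_0' \parallel V_1'$, you could apply \Cref{lemma:psplit-density} directly to $V_0' \parallel V_1' \in \sembka{e\parallel f}$, as the paper does; instead you insert an extra call to \Cref{lemma:pomset-levi-parallel} and invoke \Cref{lemma:psplit-density} separately on $e$ and $f$, which effectively re-derives the $\parallel$-case inside the proof of that lemma. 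This buys nothing but also costs nothing except a few lines.
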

\begin{proof}
We start by showing that $e \pc f \equivcka e \parallel f$.
First, note that $e \parallel f \leqqbka e \pc f$ by definition of $e \pc f$.
For the other direction, suppose that $\lefts, \rights \in \terms$ are such that $\lefts \psplit{e \parallel f} \rights$.
By definition of closure, we know that $\down{\lefts} \parallel \down{\rights} \equivcka \lefts \parallel \rights$.
By \Cref{lemma:psplit-domination}, we have $\lefts \parallel \rights \leqqbka e \parallel f$.
Since every subterm of $e \pc f$ is ordered below $e \parallel f$ by $\leqqcka$, we have that $e \pc f \leqqcka e \parallel f$.
It then follows that $e \parallel f \equivcka e \pc f$.

For the second requirement, suppose that $X \in \semcka{e \parallel f}$ is non-sequential.
We then know that there exists a $Y \in \sembka{e \parallel f}$ such that $X \sqsubseteq Y$.
This leaves us two cases to consider.
\begin{itemize}
    \item
    If $X$ is empty or primitive, then $Y = X$ by \Cref{lemma:pomset-subsumption-base}, thus $X \in \sembka{e \parallel f}$.
    By the fact that $e \parallel f \leqqbka e \pc f$ and by \Cref{lemma:soundness}, we find $X \in \sembka{e \pc f}$.

    \item
    If $X = X_0 \parallel X_1$ for non-empty pomsets $X_0$ and $X_1$, then by \Cref{lemma:pomset-factorise-subsumption} we find non-empty pomsets $Y_0$ and $Y_1$ with $Y = Y_0 \parallel Y_1$ such that $X_i \sqsubseteq Y_i$ for $i \in 2$.
    By \Cref{lemma:psplit-density}, we find $\lefts, \rights \in \terms$ with $\lefts \psplit{e \parallel f} \rights$ such that $Y_0 \in \sembka{\lefts}$ and $Y_1 \in \sembka{\rights}$.
    By \Cref{lemma:width-vs-nonempty}, we find that $\width{\lefts}, \width{\rights} \geq 1$.
    \Cref{corollary:psplit-width} then allows us to conclude that $\width{\lefts}, \width{\rights} < \width{e \parallel f}$.

    This means that $\down{\lefts} \parallel \down{\rights} \leqqbka e \pc f$.
    Since $X_0 \in \sembka{\down{\lefts}}$ and $X_1 \in \sembka{\down{\rights}}$ by definition of closure, we can derive by \Cref{lemma:soundness} that
    \[X = X_0 \parallel X_1 \in \sembka{\down{\lefts} \parallel \down{\rights}} \subseteq \sembka{e \pc f} \qedhere\]
\end{itemize}
\end{proof}

\subsection{Closure}

The preclosure operator discussed above covers the non-sequential pomsets in the language $\semcka{e \parallel f}$; it remains to find a term that covers the sequential pomsets contained in $\semcka{e \parallel f}$.

To better give some intuition to the construction ahead, we first explore the observations that can be made when a sequential pomset $W \cdot X$ appears in the language $\semcka{e \parallel f}$; without loss of generality, assume that $W$ is non-sequential.
In this setting, there must exist $U \in \sembka{e}$ and $V \in \sembka{f}$ such that $W \cdot X \sqsubseteq U \parallel V$.
By \Cref{lemma:pomset-interpolation}, we find pomsets $U_0, U_1, V_0, V_1$ such that
\begin{mathpar}
W \sqsubseteq U_0 \parallel V_0 \and
X \sqsubseteq U_1 \parallel V_1 \and
U_0 \cdot U_1 \sqsubseteq U \and
V_0 \cdot V_1 \sqsubseteq V
\end{mathpar}
This means that $U_0 \cdot U_1 \in \semcka{e}$ and $V_0 \cdot V_1 \in \semcka{f}$.
Now, suppose we could find $e_0, e_1, f_0, f_1 \in \terms$ such that
\begin{mathpar}
e_0 \cdot e_1 \leqqcka e \and
U_0 \in \semcka{e_0} \and
U_1 \in \semcka{e_1} \\
f_0 \cdot f_1 \leqqcka f \and
V_0 \in \semcka{f_0} \and
V_1 \in \semcka{f_1}
\end{mathpar}
Then we have $W \in \sembka{e_0 \pc f_0}$, and $X \in \semcka{e_1 \parallel f_1}$.
Thus, if we can find a closure of $e_1 \parallel f_1$, then we have a term whose \BKA-semantics contains $W \cdot X$.

There are two obstacles that need to be resolved before we can use the observations above to find the closure of $e \parallel f$.
The first problem is that we need to be sure that this process of splitting terms into sequential components is at all possible, i.e., that we can split $e$ into $e_0$ and $e_1$ with $e_0 \cdot e_1 \leqqcka e$ and $U_i \in \semcka{e_i}$ for $i \in 2$.
We do this by designing a sequential analogue to the parallel splitting relation seen before.
The second problem, which we will address later in this section, is whether this process of splitting a parallel term $e \parallel f$ according to the exchange law and finding a closure of remaining term $e_1 \parallel f_1$ is well-founded, i.e., if we can find ``enough'' of these terms to cover all possible ways of sequentialising $e \parallel f$.
This will turn out to be possible, by using the fixpoint axioms of \KA as in~\Cref{sec:linear-systems} with linear systems.

We start by defining the sequential splitting relation.%
\footnote{The contents of this relation are very similar to the set of \emph{left- and right-spines} of a NetKAT expression as used in~\cite{foster-kozen-milano-silva-thompson-2015}.}
\begin{definition}%
\label{definition:ssplit}
Let $e \in \terms$; $\ssplit{e}$ is the smallest relation on $\terms$ such that
\begin{mathpar}
\inferrule{~}{%
    1 \ssplit{1} 1
}
\and
\inferrule{~}{%
    a \ssplit{a} 1
}
\and
\inferrule{~}{%
    1 \ssplit{a} a
}
\and
\inferrule{~}{%
    1 \ssplit{e_0^\star} 1
}
\and
\inferrule{%
    \lefts \ssplit{e_0} \rights
}{%
    \lefts \ssplit{e_0 + e_1} \rights
}
\and
\inferrule{%
    \lefts \ssplit{e_1} \rights
}{%
    \lefts \ssplit{e_0 + e_1} \rights
}
\\
\inferrule{%
    \lefts \ssplit{e_0} \rights
}{%
    \lefts \ssplit{e_0 \cdot e_1} \rights \cdot e_1
}
\and
\inferrule{%
    \lefts \ssplit{e_1} \rights
}{%
    e_0 \cdot \lefts \ssplit{e_0 \cdot e_1} \rights
}
\and
\inferrule{%
    \lefts_0 \ssplit{e_0} \rights_0 \\
    \lefts_1 \ssplit{e_1} \rights_1
}{%
    \lefts_0 \parallel \lefts_1 \ssplit{e_0 \parallel e_1} \rights_0 \parallel \rights_1
}
\and
\inferrule{%
    \lefts \ssplit{e_0} \rights
}{%
    e_0^\star \cdot \lefts \ssplit{e_0^\star} \rights \cdot e_0^\star
}
\end{mathpar}
\end{definition}

Given $e \in \terms$, we refer to $\ssplit{e}$ as the \emph{sequential splitting relation} of $e$, and to the elements of $\ssplit{e}$ as \emph{sequential splices} of $e$.
We need to establish a few properties of the sequential splitting relation that will be useful later on.
The first of these properties is that, as for parallel splitting, $\ssplit{e}$ is finite.
\begin{restatable}{lemma}{ssplitfinite}%
\label{lemma:ssplit-finite}
For $e \in \terms$, $\ssplit{e}$ is finite.
\end{restatable}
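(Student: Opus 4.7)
The plan is to prove the lemma by straightforward structural induction on $e$, showing that at each step the set of sequential splices is built from finitely many combinations of splices of strictly smaller subterms. The inductive definition of $\ssplit{e}$ is syntax-directed, so each case of the induction corresponds to one (or two) of the inference rules in \Cref{definition:ssplit}, and the finiteness of $\ssplit{e}$ reduces to the finiteness of the splitting relations of its immediate subterms.

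In more detail, for the base cases $e = 0$, $e = 1$, and $e = a \in \alphabet$, the set $\ssplit{e}$ is directly enumerable from the rules (empty, a single pair, or exactly two pairs respectively), so finiteness is immediate. For the inductive step, I would distinguish on the outermost operator: if $e = e_0 + e_1$, then $\ssplit{e} \subseteq \ssplit{e_0} \cup \ssplit{e_1}$, which is finite by the induction hypothesis. If $e = e_0 \cdot e_1$, the splices are either of the form $\langle \lefts, \rights \cdot e_1 \rangle$ for $\lefts \ssplit{e_0} \rights$, or $\langle e_0 \cdot \lefts, \rights \rangle$ for $\lefts \ssplit{e_1} \rights$; in both cases we have a finite set by the induction hypothesis, up to a fixed post-processing of each pair. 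If $e = e_0 \parallel e_1$, then $\ssplit{e}$ is in bijection with $\ssplit{e_0} \times \ssplit{e_1}$ via $(\langle \lefts_0, \rights_0 \rangle, \langle \lefts_1, \rights_1 \rangle) \mapsto \langle \lefts_0 \parallel \lefts_1, \rights_0 \parallel \rights_1 \rangle$, which is the product of two finite sets, hence finite. Finally, for $e = e_0^\star$, the splices are the single pair $\langle 1, 1 \rangle$ together with pairs of the form $\langle e_0^\star \cdot \lefts, \rights \cdot e_0^\star \rangle$ with $\lefts \ssplit{e_0} \rights$, again a finite set by the induction hypothesis.

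I do not expect any real obstacle here: the only subtle point is to check that $\ssplit{e}$ really is generated \emph{only} by the displayed inference rules (so that the set-theoretic bounds above are in fact equalities), but this is immediate from the definition as a smallest relation. Unlike the parallel splitting relation, the Kleene star case does not introduce any recursion through $\ssplit{e_0^\star}$ itself on the right-hand sides of the rules — the premise uses $\ssplit{e_0}$, not $\ssplit{e_0^\star}$ — so the induction goes through cleanly without any need for a measure argument beyond the structural one.
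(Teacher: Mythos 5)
Your proof is correct and follows essentially the same route as the paper's: a straightforward structural induction on $e$, bounding $\ssplit{e}$ in each case by finitely many images of splices of the immediate subterms, with the base cases enumerated directly. The only minor imprecision is the claim that $\ssplit{e_0 \parallel e_1}$ is in \emph{bijection} with $\ssplit{e_0} \times \ssplit{e_1}$; what you actually have (and all you need) is a surjection from that finite product onto $\ssplit{e_0 \parallel e_1}$, which already yields finiteness.
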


We also have that the sequential composition of splices is provably below the term being split.
Just like the analogous lemma for parallel splitting, this guarantees that our sequential splices never give rise to semantics not contained in the split term.
This lemma also yields an observation about the width of sequential splices when compared to the term being split.

\begin{restatable}{lemma}{ssplitdomination}%
\label{lemma:ssplit-domination}
Let $e \in \terms$.
If $\lefts, \rights \in \terms$ with $\lefts \ssplit{e} \rights$, then $\lefts \cdot \rights \leqqcka e$.
\end{restatable}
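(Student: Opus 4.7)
The plan is to proceed by induction on the derivation of $\lefts \ssplit{e} \rights$ as given by \Cref{definition:ssplit}, matching each inference rule to a chain of inequalities/equations valid in \CKA. In most cases, only axioms of \BKA are needed; the single place that genuinely requires the exchange law — and hence $\leqqcka$ rather than $\leqqbka$ — is the parallel rule, which will be the key step.

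The base cases ($1 \ssplit{1} 1$, $a \ssplit{a} 1$, $1 \ssplit{a} a$, and $1 \ssplit{e_0^\star} 1$) are immediate from unit laws together with $1 \leqqbka e_0^\star$, the latter following from the unfolding axiom $1 + e_0 \cdot e_0^\star \equivbka e_0^\star$. The two sum rules follow directly from the induction hypothesis together with $e_i \leqqbka e_0 + e_1$. For the two sequential rules, suppose $\lefts \ssplit{e_0} \rights$; then by associativity of $\cdot$ we have $\lefts \cdot (\rights \cdot e_1) \equivbka (\lefts \cdot \rights) \cdot e_1$, and the induction hypothesis gives $\lefts \cdot \rights \leqqcka e_0$, so monotonicity of sequential composition yields $\lefts \cdot (\rights \cdot e_1) \leqqcka e_0 \cdot e_1$; the symmetric case is handled identically. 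For the star rule, if $\lefts \ssplit{e_0} \rights$ then by induction $\lefts \cdot \rights \leqqcka e_0$, and then $(e_0^\star \cdot \lefts) \cdot (\rights \cdot e_0^\star) \equivbka e_0^\star \cdot (\lefts \cdot \rights) \cdot e_0^\star \leqqcka e_0^\star \cdot e_0 \cdot e_0^\star \leqqcka e_0^\star$, where the last step invokes the standard \KA facts $e_0 \cdot e_0^\star \leqqbka e_0^\star$ and $e_0^\star \cdot e_0^\star \leqqbka e_0^\star$ (both provable from the unfolding and least-fixpoint axioms).

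The main case — and the reason the statement lives in \CKA rather than in \BKA — is the parallel rule. Given $\lefts_i \ssplit{e_i} \rights_i$ for $i \in 2$, the induction hypothesis supplies $\lefts_i \cdot \rights_i \leqqcka e_i$. The exchange law then yields
\[
(\lefts_0 \parallel \lefts_1) \cdot (\rights_0 \parallel \rights_1) \;\leqqcka\; (\lefts_0 \cdot \rights_0) \parallel (\lefts_1 \cdot \rights_1),
\]
and monotonicity of $\parallel$ combined with the induction hypothesis gives $(\lefts_0 \cdot \rights_0) \parallel (\lefts_1 \cdot \rights_1) \leqqcka e_0 \parallel e_1$; chaining these concludes the case.

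No single step is technically deep: the only subtlety is recognising that the parallel rule is exactly where the exchange inequality is consumed, which explains both why the parallel splitting lemma (\Cref{lemma:psplit-domination}) was stated with $\leqqbka$ while this one is stated with $\leqqcka$, and why the rest of the completeness argument later needs to sit inside \CKA.
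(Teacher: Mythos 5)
Your proof is correct and follows essentially the same route as the paper's: induction on the derivation of $\lefts \ssplit{e} \rights$, with the exchange law invoked exactly once, in the parallel-composition rule, and all other cases handled by monotonicity and standard \KA facts. The observation that the parallel case is the sole reason this lemma is stated with $\leqqcka$ rather than $\leqqbka$ (in contrast to \Cref{lemma:psplit-domination}) is accurate and matches the paper's structure.
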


\begin{corollary}%
\label{corollary:ssplit-width}
Let $e \in \terms$.
If $\lefts, \rights \in \terms$ with $\lefts \ssplit{e} \rights$, then $\width{\lefts}, \width{\rights} \leq \width{e}$.
\end{corollary}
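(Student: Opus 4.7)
The plan is to proceed by induction on the derivation of $\lefts \ssplit{e} \rights$ according to the rules of \Cref{definition:ssplit}, checking at each step that \Cref{definition:width} delivers the stated bounds on both $\width{\lefts}$ and $\width{\rights}$. The four base cases ($1 \ssplit{1} 1$, $a \ssplit{a} 1$, $1 \ssplit{a} a$, and $1 \ssplit{e_0^\star} 1$) are immediate, since $\width{1}=0$, $\width{a}=1$, and $\width{e_0^\star}\geq 0$.

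For each inductive rule I would invoke the IH to bound the widths of the sub-splices by the width of the relevant sub-term, then expand the width of the newly formed splices using \Cref{definition:width}. The sum rules are trivial because $\width{e_0 + e_1} = \max(\width{e_0},\width{e_1}) \geq \width{e_i}$. For the first sequential rule $\lefts \ssplit{e_0 \cdot e_1} \rights \cdot e_1$ derived from $\lefts \ssplit{e_0} \rights$, the IH gives $\width{\lefts},\width{\rights}\leq \width{e_0}$ and hence $\width{\rights \cdot e_1} = \max(\width{\rights}, \width{e_1}) \leq \max(\width{e_0}, \width{e_1}) = \width{e_0 \cdot e_1}$; the symmetric sequential rule and the Kleene-star rule are handled the same way, using $\width{e_0^\star} = \width{e_0}$. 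The only case in which the two sides of the splice interact is the parallel rule: from IH bounds $\width{\lefts_i},\width{\rights_i}\leq \width{e_i}$ for $i\in 2$, the identity $\width{\lefts_0 \parallel \lefts_1} = \width{\lefts_0} + \width{\lefts_1}$ yields $\width{\lefts_0 \parallel \lefts_1} \leq \width{e_0} + \width{e_1} = \width{e_0 \parallel e_1}$, and likewise for the right splice.

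No significant obstacle is anticipated; the argument is entirely mechanical and structurally parallel to the direct inductive reading of \Cref{lemma:ssplit-domination}. A more uniform alternative would be to imitate the proof of \Cref{corollary:psplit-width}, deriving the bound from \Cref{lemma:ssplit-domination} via \Cref{lemma:width-vs-equivalence}. However, since \Cref{lemma:ssplit-domination} only supplies a $\leqqcka$-bound, this route would first require extending \Cref{lemma:width-vs-equivalence} to \CKA by verifying that the exchange law preserves widths, which amounts to the elementary inequality $\max(\width{e}+\width{f},\,\width{g}+\width{h}) \leq \max(\width{e},\width{g}) + \max(\width{f},\width{h})$. The direct induction sketched above avoids this detour.
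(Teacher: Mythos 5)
Your direct induction on the derivation of $\lefts \ssplit{e} \rights$ is correct; each rule of \Cref{definition:ssplit} is discharged by the computations you sketch, the parallel-composition rule being the one place where widths add rather than take a maximum. The paper gives no explicit proof, but its prose (``this lemma also yields an observation about the width\ldots'') indicates that the intended route mirrors \Cref{corollary:psplit-width}: apply \Cref{lemma:ssplit-domination} and then \Cref{lemma:width-vs-equivalence}. Your observation that this does not literally go through is a sharp and genuine catch: \Cref{lemma:ssplit-domination} gives a bound under $\leqqcka$, while \Cref{lemma:width-vs-equivalence} is only stated for $\equivbka$. Repairing this requires checking that the exchange axiom preserves width, which is exactly the inequality $\max(\width{e}+\width{f},\width{g}+\width{h}) \leq \max(\width{e},\width{g})+\max(\width{f},\width{h})$ you identify. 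Your direct induction avoids that detour at the cost of an extra case analysis and is arguably the cleaner argument; the indirect route buys uniformity with the parallel-splitting development, but only once \Cref{lemma:width-vs-equivalence} has been extended to $\equivcka$. One caveat shared by both routes: since \Cref{definition:width} sets $\width{e}=0$ whenever $e\equivbka 0$, the identity $\width{e_0\cdot e_1}=\max(\width{e_0},\width{e_1})$ used in your sequential cases can fail in degenerate instances such as $e_1\equivbka 0$ (where, say, $a \ssplit{a\cdot 0} 1\cdot 0$ yet $\width{a}=1>0=\width{a\cdot 0}$); this does not affect the places where the corollary is applied, but it means the statement as written needs a side condition to be literally true.
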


Lastly, we show that the splices cover every way of (sequentially) splitting up the semantics of the term being split, i.e., that $\ssplit{e}$ is dense when it comes to sequentially composed pomsets.

\begin{lemma}%
\label{lemma:ssplit-density}
Let $e \in \terms$, and let $V$ and $W$ be pomsets such that $V \cdot W \in \semcka{e}$.
Then there exist $\lefts, \rights \in \terms$ with $\lefts \ssplit{e} \rights$ such that $V \in \semcka{\lefts}$ and $W \in \semcka{\rights}$.
\end{lemma}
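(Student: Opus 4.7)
The plan is to proceed by induction on the structure of $e$. The base cases ($e = 0, 1, a$) are dealt with directly: when $e = 0$ the claim is vacuous; when $e = 1$ the product $V \cdot W$ must equal $1$ so $V = W = 1$ and we take $\lefts = \rights = 1$; when $e = a$, Lemma \ref{lemma:pomset-unique-decomposition} forces one of $V, W$ to be $1$ and the other $a$, and the two inference rules for primitives directly handle both possibilities.

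For the inductive cases, the key observation is that $V \cdot W \in \semcka{e}$ means $V \cdot W \sqsubseteq U$ for some $U \in \sembka{e}$, so the factorisation lemmas of \Cref{section:preliminaries} are the workhorses. For $e = e_0 + e_1$, the semantics split immediately and the IH applies. For $e = e_0 \cdot e_1$, we obtain $U = U_0 \cdot U_1$ with $U_i \in \sembka{e_i}$, and \Cref{lemma:pomset-levi-generalised} (applied with $n = 2$) produces a ``pivot'' $m \in \{0,1\}$ together with a factorisation of either $U_0$ or $U_1$; a routine IH call on the corresponding $e_i$ followed by the rules for $\cdot$ in \Cref{definition:ssplit} yields the desired splice. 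The parallel case is the most delicate: from $V \cdot W \sqsubseteq U_0 \parallel U_1$ we invoke the interpolation lemma (\Cref{lemma:pomset-interpolation}) to get series-parallel $W_0, W_1, X_0, X_1$ with $W_0 \cdot W_1 \sqsubseteq U_0$ and $X_0 \cdot X_1 \sqsubseteq U_1$ and $V \sqsubseteq W_0 \parallel X_0$, $W \sqsubseteq W_1 \parallel X_1$; applying the IH separately to $e_0$ and $e_1$ produces sequential splices $\lefts_i \ssplit{e_i} \rights_i$, which combine via the $\parallel$-rule of \Cref{definition:ssplit} (using closedness of $\semcka{-}$ to absorb the subsumptions).

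The $e = e_0^\star$ case is handled similarly using \Cref{lemma:pomset-levi-generalised}: writing $U = U_0 \cdots U_{n-1}$ with each $U_i \in \sembka{e_0}$, the lemma furnishes $m < n$ and series-parallel $Y, Z$ with $Y \cdot Z \sqsubseteq U_m$, $V \sqsubseteq U_0 \cdots U_{m-1} \cdot Y$, and $W \sqsubseteq Z \cdot U_{m+1} \cdots U_{n-1}$. An IH call on $e_0$ produces $\lefts \ssplit{e_0} \rights$ with $Y \in \semcka{\lefts}$ and $Z \in \semcka{\rights}$; the star-rule then gives $e_0^\star \cdot \lefts \ssplit{e_0^\star} \rights \cdot e_0^\star$, and the sub-pomsets on each side land in the right \CKA-semantics by closure. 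The degenerate case $n = 0$ (so $V = W = 1$) is covered by $1 \ssplit{e_0^\star} 1$.

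The main obstacle will be the two cases where the sequential cut in $V \cdot W$ interacts nontrivially with the shape of $U$: the parallel case, which requires interpolation to turn a sequential cut in $V \cdot W$ into matching sequential cuts in both parallel components, and the star case, which requires the generalised Levi lemma to locate the unique iteration at which the cut falls. In both cases the series-parallel clauses of Lemmas \ref{lemma:pomset-levi-generalised} and \ref{lemma:pomset-interpolation} are essential: they guarantee that the intermediate pomsets $Y, Z$ (respectively $W_0, W_1, X_0, X_1$) lie in $\sppom(\alphabet)$ and therefore legitimately belong to the \CKA-semantics of the sub-splices produced by the inductive hypothesis.
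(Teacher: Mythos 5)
Your proposal is correct and follows essentially the same route as the paper: induction on the structure of $e$, with the generalised Levi lemma handling the $\cdot$ and $^\star$ cases and the interpolation lemma handling $\parallel$, reassembling splices via the rules of \Cref{definition:ssplit}. The only cosmetic difference is that the paper first pushes the subsumption through factorisation to get an equality $V \cdot W = U_0 \cdot U_1$ (resp.\ $U_0 \cdots U_{n-1}$) with $U_i \in \semcka{e_i}$ before invoking Levi, whereas you keep the subsumption with $U_i \in \sembka{e_i}$ and apply Levi directly (which in the $\cdot$ case requires you to dispatch the trivial $U_i = 1$ subcases separately, owing to the non-emptiness hypothesis of \Cref{lemma:pomset-levi-generalised}); both normalisations work.
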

\begin{proof}
The proof proceeds by induction on $e$.
In the base, we can discount the case where $e = 0$, for then the claim holds vacuously.
This leaves us two cases.
\begin{itemize}
    \item
    If $e = 1$, then $V \cdot W = 1$; by \Cref{lemma:pomset-unique-decomposition}, we find that $V = W = 1$.
    Since $1 \ssplit{e} 1$ by definition of $\ssplit{e}$, the claim follows when we choose $\lefts = \rights = 1$.

    \item
    If $e = a$ for some $a \in \alphabet$, then $V \cdot W = a$; by \Cref{lemma:pomset-unique-decomposition}, we find that either $V = a$ and $W = 1$ or $V = 1$ and $W = a$.
    In the former case, we can choose $\lefts = a$ and $\rights = 1$ to satisfy the claim; the latter case can be treated similarly.
\end{itemize}

\noindent
For the inductive step, there are four cases to consider.
\begin{itemize}
    \item
    If $e = e_0 + e_1$, then $V \cdot W \in \semcka{e_i}$ for some $i \in 2$.
    By induction, we find $\lefts, \rights \in \terms$ with $\lefts \ssplit{e_i} \rights$ such that $V \in \semcka{\lefts}$ and $W \in \semcka{\rights}$.
    Since $\lefts \ssplit{e} \rights$ in this case, the claim follows.

    \item
    If $e = e_0 \cdot e_1$, then there exist $U_0 \in \semcka{e_0}$ and $U_1 \in \semcka{e_1}$ such that $V \cdot W = U_0 \cdot U_1$.
    By \Cref{lemma:pomset-levi-generalised}, we find a series-parallel pomset $X$ such that either $V \sqsubseteq U_0 \cdot X$ and $X \cdot W \sqsubseteq U_1$, or $V \cdot X \sqsubseteq U_0$ and $W \sqsubseteq X \cdot U_1$.
    In the former case, we find that $X \cdot W \in \semcka{e_1}$, and thus by induction $\lefts', \rights \in \terms$ with $\lefts' \ssplit{e_1} \rights$ such that $X \in \semcka{\lefts'}$ and $W \in \semcka{\rights}$.
    We then choose $\lefts = e_0 \cdot \lefts'$ to find that $\lefts \ssplit{e} \rights$, as well as $V \sqsubseteq U_0 \cdot X \in \semcka{e_0} \cdot \semcka{\lefts'} = \semcka{\lefts}$ and thus $V \in \semcka{\lefts}$.
    The latter case can be treated similarly; here, we use the induction hypothesis on $e_0$.

    \item
    If $e = e_0 \parallel e_1$, then there exist $U_0 \in \semcka{e_0}$ and $U_1 \in \semcka{e_1}$ such that $V \cdot W \sqsubseteq U_0 \parallel U_1$.
    By \Cref{lemma:pomset-interpolation}, we find series-parallel pomsets $V_0, V_1, W_0, W_1$ such that $V \sqsubseteq V_0 \parallel V_1$ and $W \sqsubseteq W_0 \parallel W_1$, as well as $V_i \cdot W_i \sqsubseteq U_i$ for all $i \in 2$.
    In that case, $V_i \cdot W_i \in \semcka{e_i}$ for all $i \in 2$, and thus by induction we find $\lefts_i, \rights_i \in \terms$ with $\lefts_i \ssplit{e_i} \rights_i$ such that $V_i \in \semcka{\lefts_i}$ and $W_i \in \semcka{\rights_i}$.
    We choose $\lefts = \lefts_0 \parallel \lefts_1$ and $\rights = \rights_0 \parallel \rights_1$ to find that $V \in \semcka{\lefts_0 \parallel \rights_0}$ and $W \in \semcka{\lefts_1 \parallel \rights_1}$, as well as $\lefts \ssplit{e} \rights$.

    \item
    If $e = e_0^\star$, then there exist $U_0, U_1, \dots, U_{n-1} \in \semcka{e_0}$ such that $V \cdot W = U_0 \cdot U_1 \cdots U_{n-1}$.
    Without loss of generality, we can assume that for $0 \leq i < n$ it holds that $U_i \neq 1$.
    In the case where $n = 0$ we have that $V \cdot W = 1$, thus $V = W = 1$, we can choose $\lefts = \rights = 1$ to satisfy the claim.

    For the case where $n > 0$, we find by \Cref{lemma:pomset-levi-generalised} an $0 \leq m < n$ and series-parallel pomsets $X, Y$ such that $X \cdot Y \sqsubseteq U_m$, and $V \sqsubseteq U_0 \cdot U_1 \cdots U_{m-1} \cdot X$ and $W \sqsubseteq Y \cdot U_{m+1} \cdot U_{m+2} \cdots U_n$.
    Since $X \cdot Y \sqsubseteq U_m \in \semcka{e_0}$ and thus $X \cdot Y \in \semcka{e_0}$, we find by induction $\lefts', \rights' \in \terms$ with $\lefts' \ssplit{e_0} \rights'$ and $X \in \semcka{\lefts'}$ and $Y \in \semcka{\rights'}$.
    We can then choose $\lefts = e_0^\star \cdot \lefts'$ and $\rights = \rights' \cdot e_0^\star$ to find that $V \sqsubseteq U_0 \cdot U_1 \cdots U_{m-1} \cdot X \in \semcka{e_0^\star} \cdot \semcka{\lefts'} = \semcka{\lefts}$ and $W \sqsubseteq Y \cdot U_{m+1} \cdot U_{m+2} \cdots U_n \in \semcka{\rights'} \cdot \semcka{e_0^\star} = \semcka{\rights}$, and thus that $V \in \semcka{\lefts}$ and $W \in \semcka{\rights}$.
    Since $\lefts \ssplit{e} \rights$ holds, the claim follows.
    \qedhere
\end{itemize}
\end{proof}

\begin{example}
Let $U$ be the pomset $ca$ and let $V$ be $bc$.
Furthermore, let $e$ be the term ${(a \cdot b + c)}^\star$, and note that $U \cdot V \in \semcka{e}$.
We then find that $a \ssplit{a} 1$, and thus $a \ssplit{a \cdot b} 1 \cdot b$.
We can now choose $\lefts = {(a \cdot b + c)}^\star \cdot a$ and $\rights = (1 \cdot b) \cdot {(a \cdot b + c)}^\star$ to find that $U \in \semcka{\lefts}$ and $V \in \semcka{\rights}$, while $\lefts \ssplit{e} \rights$.
\end{example}

We know how to split a term sequentially.
To resolve the second problem, we need to show that the process of splitting terms repeatedly ends somewhere.
This is formalised in the notion of \emph{right-hand remainders}, which are the terms that can appear as the right hand of a sequential splice of a term.

\begin{definition}%
\label{definition:remainders}
Let $e \in \terms$.
The set of \emph{(right-hand) remainders} of $e$, written $R(e)$, is the smallest satisfying the rules
\begin{mathpar}
\inferrule{~}{%
    e \in R(e)
}
\and
\inferrule{%
    f \in R(e) \\
    \lefts \ssplit{f} \rights
}{%
    \rights \in R(e)
}
\end{mathpar}
\end{definition}
\vspace{-0.5em} 
\begin{restatable}{lemma}{remaindersfinite}%
\label{lemma:remainders-finite}
Let $e \in \terms$.
$R(e)$ is finite.
\end{restatable}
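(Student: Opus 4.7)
My plan is to proceed by structural induction on $e$, showing in each case that $R(e)$ is contained in a finite set of terms built out of remainders of the proper subterms of $e$. The key auxiliary fact I would establish first is that $R$ is transitive: if $f \in R(e)$ then $R(f) \subseteq R(e)$. This is immediate from \Cref{definition:remainders}, because $R(e)$ is already closed under the ``splice successor'' operation, so the smallest such set containing $f$ must sit inside $R(e)$. With this in hand, iteration of splicing never has to leave the sets produced by the induction hypothesis.

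In the base, $R(1) = \{1\}$ and $R(a) \subseteq \{1, a\}$ are plainly finite by inspection of the rules for $\ssplit{1}$ and $\ssplit{a}$. For $e = e_0 + e_1$, the only rules producing $\lefts \ssplit{e_0 + e_1} \rights$ are the two inherited from $\ssplit{e_0}$ and $\ssplit{e_1}$, so after one step we land in $R(e_0) \cup R(e_1)$, and by transitivity we stay there; thus $R(e) \subseteq \{e_0 + e_1\} \cup R(e_0) \cup R(e_1)$. For $e = e_0 \cdot e_1$, I would prove by induction on the derivation of membership that every $f \in R(e_0 \cdot e_1)$ is either of the shape $g \cdot e_1$ with $g \in R(e_0)$, or lies in $R(e_1)$: the two splice rules preserve this shape (the first right-hand side is $\rights \cdot e_1$ with $\rights \in R(e_0)$ by transitivity; the second is an element of $R(e_1)$). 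For $e = e_0 \parallel e_1$, the parallel splice rule is the only applicable one, and it preserves the shape $g_0 \parallel g_1$ with $g_i \in R(e_i)$; hence $R(e) \subseteq \{g_0 \parallel g_1 : g_0 \in R(e_0),\ g_1 \in R(e_1)\}$, finite as a product of finite sets. For $e = e_0^\star$, I claim $R(e_0^\star) \subseteq \{1, e_0^\star\} \cup \{g \cdot e_0^\star : g \in R(e_0)\}$: the rule $1 \ssplit{e_0^\star} 1$ and the star-unfolding rule both preserve this shape, using transitivity inside $R(e_0)$; and splices of a term of shape $g \cdot e_0^\star$ are handled exactly as in the sequential case, except that right-hand sides coming from $\ssplit{e_0^\star}$ fold back into the same set.

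The main obstacle I anticipate is bookkeeping: one must remember that $R(e)$ is defined by unbounded iteration of splicing, so that remainders of remainders can a priori produce shapes not directly obtainable as a one-step right-hand side of $\ssplit{e}$. The transitivity observation above is exactly what shuts this down, collapsing the iteration into containment inside $R(e_i)$ for proper subterms $e_i$, which are finite by the induction hypothesis. Once transitivity is settled, each case reduces to the mechanical check that the proposed syntactic shape is closed under the relevant splice rules, and \Cref{lemma:ssplit-finite} is not actually needed for the argument (though it provides a useful sanity check that each individual splicing step produces only finitely many successors).
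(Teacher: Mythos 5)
Your proposal is correct and matches the paper's proof in all essentials: both identify the same finite candidate sets per constructor (e.g.\ $\{g \cdot e_1 : g \in R(e_0)\} \cup R(e_1)$ for products, $\{1, e_0^\star\} \cup \{g \cdot e_0^\star : g \in R(e_0)\}$ for the star) and both close the argument by appealing to the least-fixed-point characterisation of $R$. Your "transitivity" lemma ($f \in R(e) \Rightarrow R(f) \subseteq R(e)$) is simply a packaged form of the paper's observation that it suffices to show one-step splices of $e$ land in $T$ and that $T$ is splice-closed, so the two proofs are the same argument.
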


With splitting and remainders we are in a position to define the linear system that will yield the closure of a parallel composition.
Intuitively, we can think of this system as an automaton: every variable corresponds to a state, and every row of the matrix describes the ``transitions'' of the corresponding state, while every element of the vector describes the language ``accepted'' by that state without taking a single transition.
Solving the system for a least fixpoint can be thought of as finding an expression that describes the language of the automaton.

\begin{definition}%
\label{definition:closure-system}
Let $e, f \in \terms$, and suppose that, for every $g \in \terms$ such that $\width{g} < \width{e} + \width{f}$, there exists a closure $\down{g}$.
We choose
\[I_{e,f} = \{ g \parallel h : g \in R(e), h \in R(f) \}\]
The $I_{e,f}$-vector $p_{e,f}$ and $I_{e,f}$-matrix $M_{e,f}$ are chosen as follows.
\begin{mathpar}
p_{e,f}(g \parallel h) \triangleq g \parallel f
\and
M_{e,f}(g \parallel h, g' \parallel h') \triangleq \sum_{\substack{\lefts_g \ssplit{g} g' \\ \lefts_h \ssplit{h} h'}} \lefts_g \pc \lefts_h
\end{mathpar}
$I_{e,f}$ is finite by \Cref{lemma:remainders-finite}.
We write $\ls{L}_{e,f}$ for the $I_{e,f}$-linear system $\angl{M_{e,f}, p_{e,f}}$.
\end{definition}

We can check that $M_{e,f}$ is well-defined.
First, the sum is finite, because $\ssplit{g}$ and $\ssplit{h}$ are finite by \Cref{lemma:ssplit-finite}.
Second, if $g \parallel h \in I$ and $\lefts_g, \rights_g, \lefts_h, \rights_h \in \terms$ such that $\lefts_g \ssplit{g} \rights_g$ and $\lefts_h \ssplit{h} \rights_h$, then $\width{\lefts_g} \leq \width{g} \leq \width{e}$ and $\width{\lefts_h} \leq \width{h} \leq \width{f}$ by \Cref{corollary:ssplit-width}, and thus, if $d \in \terms$ such that $\width{d} < \width{\lefts_g} + \width{\lefts_h}$, then $\width{d} < \width{e} + \width{f}$, and therefore a closure of $d$ exists, meaning that $\lefts_g \pc \lefts_h$ exists, too.

The least solution to $\ls{L}_{e,f}$ obtained through \Cref{lemma:linear-system-solution} is the $I$-vector denoted by $s_{e,f}$.
We write $e \ic f$ for $s_{e,f}(e \parallel f)$, i.e., the least solution at $e \parallel f$.

Using the previous lemmas, we can then show that $e \ic f$ is indeed a closure of $e \parallel f$, provided that we have closures for all terms of strictly lower width.
The intuition of this proof is that we use the uniqueness of least fixpoints to show that $e \parallel f \equivcka e \ic f$, and then use the properties of preclosure and the normal form of series-parallel pomsets to show that $\semcka{e \parallel f} = \sembka{e \ic f}$.

\begin{lemma}%
\label{lemma:closure-system}
Let $e, f \in \terms$, and suppose that, for every $g \in \terms$ with $\width{g} < \width{e} + \width{f}$, there exists a closure $\down{g}$.
Then $e \ic f$ is a closure of $e \parallel f$.
\end{lemma}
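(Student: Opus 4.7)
The closure condition demands two properties: the axiomatic equivalence $e \ic f \equivcka e \parallel f$ and the semantic identity $\sembka{e \ic f} = \down{\sembka{e \parallel f}}$. I will prove each in turn, exploiting uniqueness of least solutions of $\ls{L}_{e,f}$ for the former and the pomset normal form for the latter.

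For the axiomatic equivalence, the inequality $e \parallel f \leqqcka e \ic f$ is immediate: since $s_{e,f}$ is a solution, its value at the coordinate $e \parallel f$ dominates $p_{e,f}(e \parallel f) = e \parallel f$. For the reverse inequality, I introduce the auxiliary $I_{e,f}$-vector $t$ with $t(g \parallel h) = g \parallel h$ and verify that it, too, is a solution of $\ls{L}_{e,f}$ in \CKA. The key manipulation for each matrix row is: if $\lefts_g \ssplit{g} \rights_g$ and $\lefts_h \ssplit{h} \rights_h$, then combining the preclosure property $\lefts_g \pc \lefts_h \equivcka \lefts_g \parallel \lefts_h$ (Lemma~\ref{lemma:preclosure}), the exchange law $(\lefts_g \parallel \lefts_h) \cdot (\rights_g \parallel \rights_h) \leqqcka (\lefts_g \cdot \rights_g) \parallel (\lefts_h \cdot \rights_h)$, and the sequential splitting domination $\lefts_g \cdot \rights_g \leqqcka g$, $\lefts_h \cdot \rights_h \leqqcka h$ from Lemma~\ref{lemma:ssplit-domination}, shows each summand is bounded by $g \parallel h = t(g \parallel h)$. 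Minimality of $s_{e,f}$ then yields $e \ic f = s_{e,f}(e \parallel f) \leqqcka t(e \parallel f) = e \parallel f$.

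For the semantic identity, the inclusion $\sembka{e \ic f} \subseteq \down{\sembka{e \parallel f}} = \semcka{e \parallel f}$ follows from the equivalence just proved and soundness (Lemma~\ref{lemma:soundness}) in \CKA. The reverse inclusion is the substantive content. I strengthen the statement by induction to: for every $g \in R(e)$, $h \in R(f)$, and $U \in \semcka{g \parallel h}$, we have $U \in \sembka{s_{e,f}(g \parallel h)}$. Using Corollary~\ref{lemma:pomset-normal-form}, I write $U = U_0 \cdot U_1 \cdots U_{n-1}$ with each $U_i$ non-sequential, and proceed by induction on $n$. The base case $n = 0$ (so $U = 1$) uses Lemma~\ref{lemma:pomset-subsumption-base} together with the critical observation from Lemma~\ref{lemma:linear-system-solution} that the \emph{same} vector $s_{e,f}$ solves $\ls{L}_{e,f}$ in \BKA, whence $\sembka{p_{e,f}(g \parallel h)} \subseteq \sembka{s_{e,f}(g \parallel h)}$. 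For the inductive step, Lemma~\ref{lemma:ssplit-density} applied to $U_0 \cdot (U_1 \cdots U_{n-1}) \in \semcka{g \parallel h}$ yields a split $\lefts \ssplit{g \parallel h} \rights$; by the shape of $\ssplit{g \parallel h}$ this decomposes as $\lefts_g \parallel \lefts_h \ssplit{g \parallel h} \rights_g \parallel \rights_h$ with $\lefts_g \ssplit{g} \rights_g$, $\lefts_h \ssplit{h} \rights_h$, $U_0 \in \semcka{\lefts_g \parallel \lefts_h}$, and $U_1 \cdots U_{n-1} \in \semcka{\rights_g \parallel \rights_h}$. Since $U_0$ is non-sequential and $\lefts_g \pc \lefts_h$ is a preclosure of $\lefts_g \parallel \lefts_h$, we obtain $U_0 \in \sembka{\lefts_g \pc \lefts_h}$, and $\lefts_g \pc \lefts_h$ is a summand of $M_{e,f}(g \parallel h, \rights_g \parallel \rights_h)$. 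Applying the induction hypothesis at $\rights_g \parallel \rights_h \in I_{e,f}$ gives $U_1 \cdots U_{n-1} \in \sembka{s_{e,f}(\rights_g \parallel \rights_h)}$; concatenating and using that $M_{e,f} \cdot s_{e,f} \leqqbka s_{e,f}$ (solutionhood in \BKA) gives $U \in \sembka{s_{e,f}(g \parallel h)}$.

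The main obstacle is the inductive semantic step: one must simultaneously invoke \CKA-equivalences (to apply the exchange law implicitly through the preclosure) and \BKA-semantic containments (to upgrade solutionhood to inclusions of pomset languages). The hinge is Lemma~\ref{lemma:linear-system-solution}, which makes $s_{e,f}$ act as a solution in both theories at once, so that the recursive unfolding of pomsets along the matrix transitions — a process driven by sequential splitting in \CKA — can nevertheless be harvested syntactically in \BKA.
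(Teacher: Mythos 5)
Your proof is correct and follows essentially the same strategy as the paper: show the vector $t(g \parallel h) = g \parallel h$ (which equals $p_{e,f}$) is itself a solution of $\ls{L}_{e,f}$ in \CKA, use minimality to get $e \ic f \equivcka e \parallel f$, and then prove the strengthened semantic claim for all $g \parallel h \in I_{e,f}$ by induction on the length of the non-sequential decomposition, leaning on Lemma~\ref{lemma:linear-system-solution} to transfer solutionhood to \BKA. The only cosmetic difference is in the inductive semantic step: the paper first factors $U_0 \cdot U' \sqsubseteq W \parallel X$ with the interpolation lemma and then applies Lemma~\ref{lemma:ssplit-density} separately to $g$ and $h$, whereas you apply Lemma~\ref{lemma:ssplit-density} once to $g \parallel h$ and unpack the resulting splice using the fact that the only rule generating $\ssplit{g \parallel h}$ is the parallel rule; since the parallel case in the proof of Lemma~\ref{lemma:ssplit-density} already internalises the interpolation argument, these are materially the same, and yours is a slightly tighter packaging of an identical idea.
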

\begin{proof}
We begin by showing that $e \parallel f \equivcka e \ic f$.
We can see that $p_{e,f}$ is a solution to $\ls{L}_{e,f}$, by calculating for $g \parallel h \in I_{e,f}$:
\begin{align*}
    &(p_{e,f} + M_{e,f} \cdot p_{e,f})(g \parallel h)\\
    &= g \parallel h + \sum_{\rights_g \parallel \rights_h \in I} \Bigl(\sum_{\substack{\lefts_g \ssplit{g} \rights_g \\ \lefts_h \ssplit{h} \rights_h}} \lefts_g \pc \lefts_h\Bigr) \cdot (\rights_g \parallel \rights_h)
        \tag{def. $M_{e,f}$, $p_{e,f}$} \\
    &\equivcka g \parallel h + \sum_{\rights_g \parallel \rights_h \in I} \sum_{\substack{\lefts_g \ssplit{g} \rights_g \\ \lefts_h \ssplit{h} \rights_h}} (\lefts_g \pc \lefts_h) \cdot (\rights_g \parallel \rights_h)
        \tag{distributivity} \\
    &\equivcka g \parallel h + \sum_{\rights_g \parallel \rights_h \in I} \sum_{\substack{\lefts_g \ssplit{g} \rights_g \\ \lefts_h \ssplit{h} \rights_h}} (\lefts_g \parallel \lefts_h) \cdot (\rights_g \parallel \rights_h)
        \tag{\Cref{lemma:preclosure}} \\
    &\leqqcka g \parallel h + \sum_{\rights_g \parallel \rights_h \in I} \sum_{\substack{\lefts_g \ssplit{g} \rights_g \\ \lefts_h \ssplit{h} \rights_h}} (\lefts_g \cdot \rights_g) \parallel (\lefts_h \cdot \rights_h)
        \tag{exchange} \\
    &\leqqcka g \parallel h + \sum_{\rights_g \parallel \rights_h \in I} \sum_{\substack{\lefts_g \ssplit{g} \rights_g \\ \lefts_h \ssplit{h} \rights_h}} g \parallel h
        \tag{\Cref{lemma:ssplit-domination}} \\
    &\equivcka g \parallel h
        \tag{idempotence} \\
    &= p_{e,f}(g \parallel h)
        \tag{def. $p_{e,f}$}
\end{align*}
To see that $p_{e,f}$ is the \emph{least} solution to $\ls{L}_{e,f}$, let $q_{e,f}$ be a solution to $\ls{L}_{e,f}$.
We then know that $M_{e,f} \cdot q_{e,f} + p_{e,f} \leqqcka q_{e,f}$; thus, in particular, $p_{e,f} \leqqcka q_{e,f}$.
Since the least solution to a linear system is unique up to $\equivcka$, we find that $s_{e,f} \equivcka p_{e,f}$, and therefore that $e \ic f = s_{e,f}(e \parallel f) \equivcka p_{e,f}(e \parallel f) = e \parallel f$.

It remains to show that if $U \in \semcka{e \parallel f}$, then $U \in \sembka{e \ic f}$.
To show this, we show the more general claim that if $g \parallel h \in I$ and $U \in \semcka{g \parallel h}$, then $U \in \sembka{s_{e,f}(g \parallel h)}$.
Write $U = U_0 \cdot U_1 \cdots U_{n-1}$ such that for $0 \leq i < n$, $U_i$ is non-sequential (as in \Cref{lemma:pomset-normal-form}).
The proof proceeds by induction on~$n$.
In the base, we have that $n = 0$.
In this case, $U = 1$, and thus $U \in \sembka{g \parallel h}$ by \Cref{lemma:pomset-subsumption-base}.
Since $g \parallel h = p_{e,f}(g \parallel h) \leqqbka s_{e,f}(g \parallel h)$, it follows that $U \in \sembka{s_{e,f}(g \parallel h)}$ by \Cref{lemma:soundness}.

For the inductive step, assume the claim holds for $n-1$.
We write $U = U_0 \cdot U'$, with $U' = U_1 \cdot U_2 \cdots U_{n-1}$.
Since $U_0 \cdot U' \in \semcka{g \parallel h}$, there exist $W \in \semcka{g}$ and $X \in \semcka{h}$ such that $U_0 \cdot U' \sqsubseteq W \parallel X$.
By \Cref{lemma:pomset-interpolation}, we find pomsets $W_0, W_1, X_0, X_1$ such that $W_0 \cdot W_1 \sqsubseteq W$ and $X_0 \cdot X_1 \sqsubseteq X$, as well as $U_0 \sqsubseteq W_0 \parallel X_0$ and $U' \sqsubseteq W_1 \parallel X_1$.
By \Cref{lemma:ssplit-density}, we find $\lefts_g, \rights_g, \lefts_h, \rights_h \in \terms$ with $\lefts_g \ssplit{g} \rights_g$ and $\lefts_h \ssplit{h} \rights_h$, such that $W_0 \in \semcka{\lefts_g}$, $W_1 \in \semcka{\rights_g}$, $X_0 \in \semcka{\lefts_h}$ and $X_1 \in \semcka{\rights_h}$.

From this, we know that $U_0 \in \semcka{\lefts_g \parallel \lefts_h}$ and $U' \in \semcka{\rights_g \parallel \rights_h}$.
Since $U_0$ is non-sequential, we have that $U_0 \in \sembka{\lefts_g \pc \lefts_h}$.
Moreover, by induction we find that $U' \in \sembka{s_{e,f}(\rights_g \parallel \rights_h)}$.
Since $\lefts_g \pc \lefts_h \leqqbka M_{e,f}(g \parallel h, \rights_g \parallel \rights_h)$ by definition of $M_{e,f}$, we furthermore find that
\[ (\lefts_g \pc \lefts_h) \cdot s_{e,f}(\rights_g \parallel \rights_h) \leqqbka M_{e,f}(g \parallel h, \rights_g \parallel \rights_h) \cdot s_{e,f}(\rights_g \parallel \rights_h) \]
Since $\rights_g \parallel \rights_h \in I$, we find by definition of the solution to a linear system that
\[ M_{e,f}(g \parallel h, \rights_g \parallel \rights_h) \cdot s_{e,f}(\rights_g \parallel \rights_h) \leqqbka s_{e,f}(g \parallel h) \]
By \Cref{lemma:soundness} and the above, we conclude that $U = U_0 \cdot U' \in \sembka{s_{e,f}(g \parallel h)}$.
\end{proof}

For a concrete example where we find a closure of a (non-trivial) parallel composition by solving a linear system, we refer to \Cref{appendix:example-solve}.

With closure of parallel composition, we can construct a closure for any term and therefore conclude completeness of \CKA. 

\begin{theorem}%
\label{theorem:closure}
Let $e \in \terms$.
We can construct a closure $\down{e}$ of $e$.
\end{theorem}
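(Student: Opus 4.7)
The plan is to prove the theorem by induction on $\width{e}$, with a nested structural induction on $e$. The outer inductive hypothesis furnishes closures for all terms of strictly smaller width, while the structural induction handles the recursion on the shape of $e$.

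For the base cases $e \in \{0, 1\} \cup \alphabet$, Lemma \ref{lemma:closure-base} directly gives $\down{e} = e$. For $e = e_0 + e_1$, $e_0 \cdot e_1$, or $e_0^\star$, the structural IH yields closures of the immediate subterms (each of which has no greater width than $e$), and Lemma \ref{lemma:closure-compositional} combines them into a closure of $e$.

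The interesting case is $e = e_0 \parallel e_1$. Here I apply Lemma \ref{lemma:closure-system}, which constructs $e_0 \ic e_1$ as a closure of $e_0 \parallel e_1$ under the proviso that a closure $\down{g}$ exists for every $g \in \terms$ with $\width{g} < \width{e_0 \parallel e_1} = \width{e}$. This is exactly what the outer IH supplies.

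The main subtlety---and the reason a plain structural induction would not suffice---lies in this parallel case. The preclosure $e_0 \pc e_1$ and the linear system $\ls{L}_{e_0, e_1}$ underpinning Lemma \ref{lemma:closure-system} invoke closures of terms (parallel splices, and preclosures arising from sequential splices) that are in general not subterms of $e_0 \parallel e_1$, but are only guaranteed to have strictly smaller width, via Corollary \ref{corollary:psplit-width} and the width bounds built into Definition \ref{definition:closure-system}. Taking width as the primary measure is tailored precisely to accommodate this; combined with Lemma \ref{lemma:closure-implies-completeness}, the theorem then yields completeness of $\equivcka$ for $\semcka{-}$.
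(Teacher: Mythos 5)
Your proof is correct and matches the paper's argument essentially exactly: the paper likewise uses a lexicographic induction on $(\width{e}, \text{structure of } e)$, dispatching the base terms via \Cref{lemma:closure-base}, the $+$, $\cdot$, ${}^\star$ cases via \Cref{lemma:closure-compositional}, and the parallel case via \Cref{lemma:closure-system} with the width component of the induction hypothesis. Your remark on why the width measure (rather than plain structural induction) is required for the parallel case accurately reflects the paper's reasoning, even though the paper leaves it implicit.
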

\begin{proof}
The proof proceeds by induction on $\width{e}$ and the structure of $e$, i.e., by considering $f$ before $g$ if $\width{f} < \width{g}$, or if $f$ is a strict subterm of $g$ (in which case $\width{f} \leq \width{g}$ also holds).
It is not hard to see that this induces a well-ordering on $\terms$.

Let $e$ be a term of width $n$, and suppose that the claim holds for all terms of width at most $n-1$, and for all strict subterms of $e$.
There are three cases.
\begin{itemize}
    \item
    If $e = 0$, $e = 1$ or $e = a$ for some $a \in \alphabet$, the claim follows from \Cref{lemma:closure-base}.

    \item
    If $e = e_0 + e_1$, or $e = e_0 \cdot e_1$, or $e = e_0^\star$, the claim follows from \Cref{lemma:closure-compositional}.

    \item
    If $e = e_0 \parallel e_1$, then $e_0 \ic e_1$ exists by the induction hypothesis.
    By \Cref{lemma:closure-system}, we then find that $e_0 \ic e_1$ is a closure of $e$.
    \qedhere
\end{itemize}
\end{proof}

\begin{corollary}
Let $e, f \in \terms$.
If $\semcka{e} = \semcka{f}$, then $e \equivcka f$.
\end{corollary}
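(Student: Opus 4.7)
The plan is to combine Theorem~\ref{theorem:closure} with Lemma~\ref{lemma:closure-implies-completeness}. All the real work has already been done: Theorem~\ref{theorem:closure} supplies, for every $g \in \terms$, a term $\down{g}$ satisfying both $g \equivcka \down{g}$ and $\sembka{\down{g}} = \down{\sembka{g}}$. This is precisely the hypothesis of Lemma~\ref{lemma:closure-implies-completeness}, so the corollary is immediate from the conjunction of the two results.

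For completeness, I would spell out the argument of Lemma~\ref{lemma:closure-implies-completeness} once more in this particular instance. Given $e, f \in \terms$ with $\semcka{e} = \semcka{f}$, unfolding $\semcka{-} = \down{\sembka{-}}$ together with the defining property of closure yields
\[
\sembka{\down{e}} = \down{\sembka{e}} = \semcka{e} = \semcka{f} = \down{\sembka{f}} = \sembka{\down{f}}.
\]
The \BKA-completeness result (Theorem~\ref{theorem:wbka-completeness}) then gives $\down{e} \equivbka \down{f}$, and since every axiom of \BKA is an axiom of \CKA, we also have $\down{e} \equivcka \down{f}$. Transitivity of $\equivcka$, combined with $e \equivcka \down{e}$ and $\down{f} \equivcka f$ from the closure property, then yields $e \equivcka f$.

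There is no real obstacle at this final step; the difficulty has been absorbed entirely into the construction of closures in Theorem~\ref{theorem:closure} (which in turn depends on the preclosure construction via $\pc$, the linear-system construction via $\ic$, and the \BKA-completeness theorem of Laurence and Struth). The corollary is thus best presented as a one-line assembly, making clear that the reduction to \BKA-completeness goes through because the closure operator is both syntactic (an equivalence under $\equivcka$) and semantically correct (its \BKA-semantics coincides with the \CKA-semantics of the original term).
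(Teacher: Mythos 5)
Your proof is correct and follows exactly the paper's argument: combine Theorem~\ref{theorem:closure} with Lemma~\ref{lemma:closure-implies-completeness}. The unfolded version you give is just the proof of Lemma~\ref{lemma:closure-implies-completeness} inlined, so there is nothing new here — the paper states this corollary as a one-liner as well.
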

\begin{proof}
Follows from \Cref{theorem:closure} and \Cref{lemma:closure-implies-completeness}.
\end{proof}

\section{Discussion and further work}%
\label{section:discussion-further-work}

By building a syntactic closure for each series-rational expression, we have shown that the standard axiomatisation of \CKA is complete with respect to the \CKA-semantics of series-rational terms.
Consequently, the algebra of closed series-rational pomset languages forms the free \CKA. 

Our result leads to several decision procedures for the equational theory of \CKA. 
For instance, one can compute the closure of a term as described in the present paper, and use an existing decision procedure for \BKA~\cite{jategaonkar-meyer-1996,laurence-struth-2014,brunet-pous-struth-2017}.
Note however that although this approach seems suited for theoretical developments (such as formalising the results in a proof assistant), its complexity makes it less appealing for practical use.
More practically, one could leverage recent work by Brunet, Pous and Struth~\cite{brunet-pous-struth-2017}, which provides an algorithm to compare closed series-rational pomset languages.
Since this is the free concurrent Kleene algebra, this algorithm can now be used to decide the equational theory of \CKA. 
We also obtain from the latter paper that this decision problem is \textsc{expspace}-complete.

We furthermore note that the algorithm to compute downward closure can be used to extend half of the result from~\cite{kappe-brunet-luttik-silva-zanasi-2017} to a Kleene theorem that relates the \CKA-semantics of expressions to the pomset automata proposed there: if $e \in \terms$, we can construct a pomset automaton $A$ with a state $q$ such that $L_A(q) = \semcka{e}$.

Having established pomset automata as an operational model of \CKA, a further question is whether these automata are amenable to a bisimulation-based equivalence algorithm, as is the case for finite automata~\cite{hopcroft-karp-1971}.
If this is the case, optimisations such as those in~\cite{bonchi-pous-2013} might have analogues for pomset automata that can be found using the coalgebraic method~\cite{rot-bonsangue-rutten-2013}.

While this work was in development, an unpublished draft by Laurence and Struth~\slpaper{} appeared, with a first proof of completeness for \CKA. 
The general outline of their proof is similar to our own, in that they prove that closure of pomset languages preserves series-rationality, and hence there exists a syntactic closure for every series-rational expression.
However, the techniques used to establish this fact are quite different from the developments in the present paper.
First, we build the closure via syntactic methods: explicit splitting relations and solutions of linear systems.
Instead, their proof uses automata theoretic constructions and algebraic closure properties of regular languages; in particular, they rely on congruences of finite index and language homomorphisms.
We believe that our approach leads to a substantially simpler and more transparent proof.
Furthermore, even though Laurence and Struth do not seem to use any fundamentally non-constructive argument, their proof does not obviously yield an algorithm to effectively compute the closure of a given term.
In contrast, our proof is explicit enough to be implemented directly; we wrote a simple Python script (under six hundred lines) to do just that~\cite{brunet-kappe-silva-zanasi-2017-closure-script}.

A crucial ingredient in this work was the computation of least solutions of linear systems.
This kind of construction has been used on several occasions for the study of Kleene algebras~\cite{conway-1971,backhouse-1975,kozen-1994}, and we provide here yet another variation of such a result.
We feel that linear systems may not have yet been used to their full potential in this context, and could still lead to interesting developments.

A natural extension of the work conducted here would be to turn our attention to the signature of concurrent Kleene algebra that includes a ``parallel star'' operator $e^\parallel$.
The completeness result of Laurence and Struth~\cite{laurence-struth-2014} holds for \BKA with the parallel star, so in principle one could hope to extend our syntactic closure construction to include this operator.
Unfortunately, using the results of Laurence and Struth, we can show that this is not possible.
They defined a notion of \emph{depth} of a series-parallel pomset, intuitively corresponding to the nesting of parallel and sequential components.
An important step in their development consists of proving that for every series-parallel-rational language there exists a finite upper bound on the depth of its elements.
However, the language $\semcka{a^\parallel}$ does not enjoy this property: it contains every series-parallel pomset exclusively labelled with the symbol $a$.
Since we can build such pomsets with arbitrary depth, it follows that there does not exist a syntactic closure of the term $a^\parallel$.
New methods would thus be required to tackle the parallel star operator.

Another aspect of \CKA that is not yet developed to the extent of \KA is the coalgebraic perspective.
We intend to investigate whether the coalgebraic tools developed for \KA can be extended to \CKA, which will hopefully lead to efficient bisimulation-based decision procedures~\cite{bonchi-pous-2013,foster-kozen-milano-silva-thompson-2015}.

\paragraph{Acknowledgements}
We thank the anonymous reviewers for their insightful comments.
This work was partially supported by the ERC Starting Grant ProFoundNet (grant code 679127).

\appendix

\ifarxiv%

\section{Proofs for \Cref{section:preliminaries}}%
\label{appendix:proofs-preliminaries}

The notion of \N-freeness for pomsets is useful for proving the lemmas to come.
\begin{definition}
Let $U = [\lp{u}]$ be a pomset.
We say that $U$ is \emph{\N-free} if there are no $u_0, u_1, u_2, u_3 \in S_\lp{u}$ such that $u_0 \leq_\lp{u} u_1$, $u_2 \leq_\lp{u} u_3$ and $u_0 \leq_\lp{u} u_3$ and no other relation between them, i.e., the graph of these elements has the shape of an \N.
\end{definition}
Note that \N-freeness is well-defined for pomsets, for the presence of an \N-shape does not depend on the particular representative $\lp{u}$.
It is not hard to see that all series-parallel pomsets are \N-free.
Perhaps surprisingly, this \N-freeness provides a complete characterisation of series-parallel pomsets~\cite{gischer-1988}.

\begin{lemma}[Gischer]%
\label{lemma:pomset-sp-nfree}
A pomset is series-parallel if and only if it is \N-free.
\end{lemma}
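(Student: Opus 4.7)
The plan is to prove both directions of Gischer's characterisation, with the forward implication being a routine structural induction and the backward implication being the substantive content.

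For the forward direction (series-parallel implies \N-free), I would argue by induction on the construction of $U \in \sppom(\alphabet)$. The base cases are immediate: $1$ has no elements and $a$ has only one, so neither can contain an \N-shape (which requires four distinct elements). For the inductive step, suppose $U_0, U_1 \in \sppom(\alphabet)$ are \N-free; I would show their compositions are \N-free by showing that no putative \N can straddle the decomposition. In the parallel case $U_0 \parallel U_1$, all comparabilities come from one side, so any four elements participating in comparabilities $u_0 \leq u_1$, $u_2 \leq u_3$, $u_0 \leq u_3$ must lie in the same side, contradicting \N-freeness there. In the sequential case $U_0 \cdot U_1$, every element of $U_0$ precedes every element of $U_1$; a case analysis on which side each $u_i$ inhabits shows that any cross-placement forces an \emph{extra} comparability (e.g., if $u_1 \in U_0$ and $u_3 \in U_1$, then $u_1 \leq u_3$, breaking the ``exactly these relations'' condition of the \N-shape), again reducing to an \N inside one of the components.

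For the backward direction I would proceed by strong induction on the cardinality of the carrier. If $|P| \leq 1$ the pomset is manifestly in $\sppom(\alphabet)$. For $|P| \geq 2$ with $P$ \N-free, I would split on whether the \emph{comparability graph} of $P$ (undirected, with edges between comparable elements) is connected. If it is disconnected, partition $P$ into two disjoint unions of components $P_0, P_1$; no element of $P_0$ is comparable to any element of $P_1$, so $P \cong P_0 \parallel P_1$ as pomsets. Each part is \N-free (since \N-freeness is inherited by induced sub-pomsets) and strictly smaller, so induction gives $P_0, P_1 \in \sppom(\alphabet)$, hence $P \in \sppom(\alphabet)$. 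The connected case is the heart of the matter: I would show that \N-freeness plus connectedness produces a non-trivial \emph{sequential cut}, i.e.\ a partition $P = A \sqcup B$ with $A, B$ both non-empty such that $a <_P b$ for every $a \in A$ and $b \in B$. Given such a cut, $P \cong P|_A \cdot P|_B$, and since both restrictions are \N-free and smaller, induction finishes the case.

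The main obstacle, and essentially the whole content of the theorem, is producing the sequential cut in the connected case. The standard route is via a module (autonomous-set) argument: call a non-empty $A \subsetneq P$ \emph{autonomous} if every element outside $A$ relates uniformly to all of $A$ (all below, all above, or all incomparable). The key claim, provable from \N-freeness by a short case analysis on quadruples, is that when the comparability graph of $P$ is connected and $|P| \geq 2$, there exists a non-trivial autonomous $A$ whose ``external'' relation is uniformly ``below'' some non-empty $B$ (taking $B$ to be all elements strictly above every element of $A$, and using connectedness to ensure $B \neq \emptyset$); this $A \sqcup B$ is the desired cut. Any attempted cross-configuration violating the cut exhibits the forbidden \N pattern, delivering the contradiction. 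Once this structural lemma is in hand, the induction closes and Gischer's theorem follows.
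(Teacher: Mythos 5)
The paper itself does not prove this lemma --- it states it as a known result and cites Gischer (1988) --- so there is no in-paper proof to compare against. Evaluating your proposal on its own terms:

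Your forward direction is fine: the base cases are vacuous, the parallel case localises any alleged \N{} to one component because no new comparabilities are introduced, and the sequential case works because any cross-placement of the four points forces a relation beyond the three stipulated ones (a routine but finite case check), so again the \N{} collapses into a single component and the inductive hypothesis applies.

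The backward direction has the right skeleton --- strong induction on $|P|$, with the disconnected-comparability-graph case giving a parallel split --- but the connected case is asserted rather than proved, and the sketch as written does not close the gap. You take $A$ to be a non-trivial autonomous set and then define $B$ as the set of elements strictly above every element of $A$; however, for $P \cong P\!\restr{A} \cdot P\!\restr{B}$ one needs $B = P \setminus A$, i.e.\ that \emph{every} element outside $A$ lies strictly above \emph{all} of $A$. Autonomy alone does not give this: an autonomous set can have external elements that are uniformly below it or uniformly incomparable to it, so your $A \sqcup B$ need not cover $P$. What is actually required is a specific non-trivial order ideal that is simultaneously a module with all external elements above it, and the existence of such a set is precisely the content of the theorem (equivalently, the fact that if both the comparability and the incomparability graph of $P$ are connected and $|P| \geq 4$, then $P$ contains an \N). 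Calling this ``a short case analysis on quadruples'' undersells it: one must first construct the candidate cut and then use \N-freeness to rule out every way the cut can fail, and you have not indicated how the candidate is chosen nor run the verification. As it stands, the proposal reduces Gischer's theorem to an unproved structural lemma that is essentially a restatement of the theorem in the connected case.
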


It is also useful to restrict a labelled poset to a part of its carrier, as follows.
\begin{definition}%
\label{definition:lp-restriction}
Let $\lp{u}$ be a labelled poset, and let $S \subseteq S_\lp{u}$.
We write $\lp{u} \restr{S}$ for the \emph{restriction} of $\lp{u}$ to $S$, i.e., labelled poset given by $S_{\lp{u} \restr{S}} = S$, $\leq_{\lp{u} \restr{S}} =\ \leq_{\lp{u}} \cap\ S \times S$, and $\lambda_{\lp{u} \restr{S}}(z) = \lambda_\lp{u}(z)$.
\end{definition}

\subsection{Subsumption of empty or primitive pomsets}

\begin{lemma}%
\label{lemma:lp-subsumption-unit}
Let $\lp{u}$ be a labelled poset such that $\lp{u} \sqsubseteq \lp{1}$ or $\lp{1} \sqsubseteq \lp{u}$.
Then $\lp{u} = \lp{1}$.
\end{lemma}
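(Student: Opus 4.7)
The plan is to unfold the definition of subsumption from \Cref{definition:lp-relations} and observe that, because any subsumption is in particular a bijection between the underlying carriers, the presence of $\lp{1}$ on either side of $\sqsubseteq$ forces both carriers to be empty. Once both carriers are empty, the order and labelling are vacuous, so the labelled poset coincides with $\lp{1}$ on the nose.

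Concretely, first I would handle the case $\lp{u} \sqsubseteq \lp{1}$. By definition, this provides a subsumption $h : \lp{1} \to \lp{u}$, that is, a bijection $h : S_\lp{1} \to S_\lp{u}$. Since $S_\lp{1} = \emptyset$, bijectivity forces $S_\lp{u} = \emptyset$; then $\leq_\lp{u}$ and $\lambda_\lp{u}$ must be the empty relation and empty function respectively (since they live over $S_\lp{u}$), so $\lp{u} = \lp{1}$.

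For the symmetric case $\lp{1} \sqsubseteq \lp{u}$, unfolding the definition yields a bijection $h : S_\lp{u} \to S_\lp{1} = \emptyset$, which again forces $S_\lp{u} = \emptyset$, and hence $\lp{u} = \lp{1}$ by the same argument. There is no real obstacle here: the statement is an immediate consequence of the bijectivity requirement built into the definition of subsumption, and the proof amounts to little more than a definitional unfolding.
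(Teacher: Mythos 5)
Your proof is correct and follows essentially the same route as the paper: in both cases a subsumption furnishes a bijection with the empty set $S_\lp{1}$, which forces $S_\lp{u} = \emptyset$ and hence $\lp{u} = \lp{1}$ since the labelled poset with empty carrier is unique. The only cosmetic difference is that you spell out the symmetric case instead of dismissing it as "similar".
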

\begin{proof}
We treat the case where $\lp{u} \sqsubseteq \lp{1}$; the case where $\lp{1} \sqsubseteq \lp{u}$ is similar.
Let $h: \lp{1} \to \lp{u}$ witness that $\lp{u} \sqsubseteq \lp{1}$.
Then $h$ is a bijection from $S_\lp{1} = \emptyset$ to $S_\lp{u}$; accordingly, $S_\lp{u} = \emptyset$.
But then $\lp{u} = \lp{1}$, because the labelled poset with empty carrier is unique.
\end{proof}

\begin{lemma}%
\label{lemma:lp-subsumption-primitive}
Let $\lp{u}, \lp{v}$ be a labelled posets, with $S_\lp{v}$ a singleton, such that $\lp{u} \sqsubseteq \lp{v}$ or $\lp{v} \sqsubseteq \lp{u}$.
Then $\lp{u} \simeq \lp{v}$.
\end{lemma}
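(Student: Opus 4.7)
The plan is to unfold the definition of subsumption and exploit that $S_\lp{v}$ is a singleton. By \Cref{definition:lp-relations}, whichever direction of subsumption we are given is witnessed by a bijection $h$ between $S_\lp{u}$ and $S_\lp{v}$ that preserves order and labels. Since bijections preserve cardinality and $|S_\lp{v}| = 1$, we conclude $|S_\lp{u}| = 1$ as well.

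Next I would observe that a partial order on a one-element set is necessarily just the diagonal (reflexive) relation: the only pair available is $(x,x)$, which must be in $\leq$ by reflexivity, and no non-reflexive pairs exist. Therefore both $\leq_{\lp{u}}$ and $\leq_{\lp{v}}$ are trivial, and the order-preservation requirement imposed on $h$ already holds symmetrically — that is, $h^{-1}$ also preserves order vacuously. The label condition $\lambda_\lp{v} \circ h = \lambda_\lp{u}$ rearranges to $\lambda_\lp{u} \circ h^{-1} = \lambda_\lp{v}$, so $h^{-1}$ also preserves labels. Hence $h$ is an isomorphism and $\lp{u} \cong \lp{v}$ (matching the notation of \Cref{definition:lp-relations}; the $\simeq$ in the statement is presumably a typographical variant).

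There is essentially no obstacle: the proof is a one-line unfolding of definitions, with the only substantive point being that on a singleton carrier any order-preserving bijection is automatically an isomorphism. I would present the two directions ($\lp{u} \sqsubseteq \lp{v}$ and $\lp{v} \sqsubseteq \lp{u}$) uniformly, since the argument is symmetric in $\lp{u}$ and $\lp{v}$ once we note that the given $h$ is a bijection between singletons. This lemma then feeds directly into \Cref{lemma:pomset-subsumption-base} by combining it with \Cref{lemma:lp-subsumption-unit} and lifting from labelled posets to pomsets via the definition of pomsets as isomorphism classes.
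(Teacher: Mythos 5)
Your proof is correct and follows essentially the same route as the paper's: both unfold the definition of subsumption, observe that the bijection forces $|S_\lp{u}|=1$, note that the partial order on a singleton is trivially the diagonal so that $h^{-1}$ preserves order automatically, and rearrange the label condition to conclude that $h$ is an isomorphism. The only cosmetic difference is that you treat the two directions uniformly by symmetry, whereas the paper does one explicitly and says the other is similar; your remark that the statement's $\simeq$ should read $\cong$ is also accurate.
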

\begin{proof}
We treat the case where $\lp{u} \sqsubseteq \lp{v}$; the case where $\lp{v} \sqsubseteq \lp{u}$ is similar.
Let $h: \lp{v} \to \lp{u}$ witness that $\lp{u} \sqsubseteq \lp{v}$.
Then $h$ is a bijection from $S_\lp{v}$ to $S_\lp{u}$; consequently, $S_\lp{u}$ is a singleton.
Now, if $u \leq_\lp{u} u'$, then $u, u' \in S_\lp{u}$ and thus $u = u'$.
Consequently, $h^{-1}(u) = h^{-1}(u')$, and thus $h^{-1}(u) \leq_\lp{v} h^{-1}(u')$.
Since furthermore $\lambda_\lp{u} \circ h = \lambda_\lp{v}$, also $\lambda_\lp{v} = \lambda_\lp{v} \circ h^{-1}$.
It follows that $h^{-1}: \lp{u} \to \lp{v}$ is a subsumption witnessing that $\lp{v} \sqsubseteq \lp{u}$.
We can thus conclude that $\lp{u} \simeq \lp{v}$.
\end{proof}

\pomsetsubsumptionbase*
\begin{proof}
First, suppose that $U = 1$.
We then have that $U = [\lp{1}]$ and $V = [\lp{v}]$ such that $\lp{u} \sqsubseteq \lp{1}$ or $\lp{1} \sqsubseteq \lp{u}$.
By \Cref{lemma:lp-subsumption-unit}, we find that $\lp{v} = \lp{1}$ and thus $V = [\lp{1}] = 1$.

Second, suppose that $U = a$ for some $a \in \alphabet$.
Then $U = [\lp{u}]$ for some pomset with singleton carrier $S_\lp{u}$, with $\lambda_\lp{u}(u) = a$ for all $u \in S_\lp{u}$.
Since $V = [\lp{v}]$ and $\lp{u} \sqsubseteq \lp{v}$ or $\lp{v} \sqsubseteq \lp{u}$, we find that $\lp{u} \simeq \lp{v}$ by \Cref{lemma:lp-subsumption-primitive}.
This establishes that $U = V$.
\end{proof}

\subsection{The factorisation lemma}

\pomsetfactorisesubsumption*
\begin{proof}
We start with the first claim.
Let $U$, $V_0$ and $V_1$ be as in the premise, and write $U = [\lp{u}]$, $V_0 = [\lp{v}_0]$ and $V_1 = [\lp{v}_1]$.
Without loss of generality, we can assume that $\lp{v}_0$ and $\lp{v}_1$ are disjoint, that $S_{\lp{v}_0} \cup S_{\lp{v}_1} = S_\lp{u}$, and that the identity function $S_{\lp{v}_0} \cup S_{\lp{v}_1} \to S_\lp{u}$ is the subsumption witnessing that $\lp{u} \sqsubseteq \lp{v}_0 \cdot \lp{v}_1$.

We then choose $\lp{u}_i = \lp{u} \restr{\lp{v}_i}$ for $i \in 2$, and claim that $\lp{u}_0 \cdot \lp{u}_1 = \lp{u}$.
\begin{itemize}
    \item
    For the carrier, we already know that
    \[S_{\lp{u}_0 \cdot \lp{u}_1} = S_{\lp{u}_0} \cup S_{\lp{u}_1} = (S_\lp{u} \cap S_{\lp{v}_0}) \cup (S_\lp{u} \cap S_{\lp{v}_1}) = S_\lp{u} \cap (S_{\lp{v}_0} \cup S_{\lp{v}_1}) = S_\lp{u}\]

    \item
    Now suppose that $u, u' \in S_\lp{u}$ such that $u \leq_{\lp{u}_0 \cdot \lp{u}_1} u'$.
    There are two cases:
    \begin{itemize}
        \item
        If $u, u' \in S_{\lp{v}_i}$ for some $i \in 2$, then $u \leq_{\lp{v}_i} u'$, and thus $u \leq_{\lp{v}_0 \cdot \lp{v}_1} u'$, meaning that $u \leq_\lp{u} u'$.

        \item
        If $u \in S_{\lp{v}_0}$ and $u' \in S_{\lp{v}_1}$, then $u \leq_{\lp{v}_0 \cdot \lp{v}_1} u'$, and thus $u \leq_\lp{u} u'$.
    \end{itemize}

    In the other direction, let $u, u' \in S_\lp{u}$ with $u \leq_\lp{u} u'$.
    There are three cases.
    \begin{itemize}
        \item
        If $u, u' \in S_{\lp{v}_i}$ for some $i \in 2$, then $u \leq_{\lp{v}_i} u'$, and thus $u \leq_{\lp{u}_i} u'$ and therefore $u \leq_{\lp{u}_0 \cdot \lp{u}_1} u'$.

        \item
        If $u \in S_{\lp{v}_0} = S_{\lp{u}_0}$ and $u' \in S_{\lp{v}_1} = S_{\lp{u}_1}$, then $u \leq_{\lp{u}_0 \cdot \lp{u}_1} u'$ immediately.
    \end{itemize}
    The case where $u \in S_{\lp{u}_1}$ and $u' \in S_{\lp{u}_0}$ can be disregarded, for there we find that $u' \leq_{\lp{v}_0 \cdot \lp{v}_1} u$, and thus $u' \leq_\lp{u} u$, meaning that $u = u'$ and contradicting disjointness of $\lp{u}_0$ and $\lp{u}_1$.

    \item
    For the labeling, let $u \in S_\lp{u}$.
    If $u \in S_{\lp{v}_i}$ for $i \in 2$, then $\lambda_\lp{u}(u) = \lambda_{\lp{u}_i}(u) = \lambda_{\lp{v}_i}(u) = \lambda_{\lp{v}_0 \cdot \lp{v}_1}(u)$.
\end{itemize}

We also claim that for $i \in 2$, it holds that $\lp{u}_i \sqsubseteq \lp{v}_i$, as witnessed by the identity function $S_{\lp{v}_i} \to S_{\lp{u}_i}$.
To see this, let $v, v' \in S_{\lp{v}_i}$ be such that $v \leq_{\lp{v}_i} v'$.
We then know that $v \leq_{\lp{v}_0 \cdot \lp{v}_1} v'$, and thus $v \leq_\lp{u} v'$ by the premise.
However, since $v, v' \in S_{\lp{v}_i} = S_{\lp{u}_i}$, it follows that $v \leq_{\lp{u}_i} v'$.

The first claim is now satisfied by choosing $V_0 = [\lp{v}_0]$ and $V_1 = [\lp{v}_1]$.
The second claim can be proved analogously; here, we split up $V = [\lp{v}]$ according to $U_0 = [\lp{u}_0]$ and $U_1 = [\lp{u}_1]$.
\end{proof}

\subsection{The generalized versions of Levi's lemma}

To prove \Cref{lemma:pomset-levi-generalised}, we first prove a simpler statement.

\begin{lemma}%
\label{lemma:pomset-levi}
Let $U, V, W, X$ be pomsets such that $U \cdot V \sqsubseteq W \cdot X$.
There exists a pomset $Y$ such that either $U \sqsubseteq W \cdot Y$ and $Y \cdot V \sqsubseteq X$, or $U \cdot Y \sqsubseteq W$ and $V \sqsubseteq Y \cdot X$.
Moreover, if $U$ and $V$ are series-parallel, then so is $Y$.
\end{lemma}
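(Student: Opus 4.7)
The plan is to extract a subsumption $h$ from the hypothesis, partition the carriers of $W$ and $X$ according to whether $h$ sends each element into $U$ or into $V$, and show that the ``crossing'' parts of this partition must be trivial. That leaves two cases, each suggesting a natural choice of the intermediate pomset $Y$.

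Concretely, I would fix a subsumption $h : W \cdot X \to U \cdot V$ witnessing the hypothesis, and partition $W = W_U \cup W_V$ and $X = X_U \cup X_V$ according to whether a point gets sent by $h$ into $U$ or into $V$. The key observation is that $W_V$ and $X_U$ cannot both be non-empty: if $w \in W_V$ and $x \in X_U$, then $w \leq x$ in $W \cdot X$ forces $h(w) \leq h(x)$ in $U \cdot V$, while $h(w) \in V$ and $h(x) \in U$ also force $h(x) \leq h(w)$, contradicting injectivity of $h$. Hence either $W_V = \emptyset$ (Case 1) or $X_U = \emptyset$ (Case 2).

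In Case 1, I would set $Y := U \restr{h(X_U)}$; in Case 2, the symmetric choice $Y := V \restr{h(W_V)}$. To obtain $U \sqsubseteq W \cdot Y$, note that in $U$ every element of $h(W)$ precedes every element of $h(X_U)$, transferred through $h$ from the analogous ordering in $W \cdot X$; thus $U$ factors as $(U \restr{h(W)}) \cdot Y$. Since $h$ restricted to $W$ is an order- and label-preserving bijection onto $U \restr{h(W)}$, we get $U \restr{h(W)} \sqsubseteq W$, and monotonicity of sequential composition yields $U \sqsubseteq W \cdot Y$. For $Y \cdot V \sqsubseteq X$, the map $h$ restricted to $X$ is a bijection onto $Y \cup V$ (sending $X_U$ into $Y$ and $X_V$ onto $V$); case-splitting on whether the two points lie in $X_U$ or $X_V$ shows it is a subsumption $X \to Y \cdot V$, the one potentially bad mixed case being ruled out by the same injectivity argument as above. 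Case 2 is completely symmetric.

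The hard part is the series-parallel clause. The more natural-looking candidate $Y := X \restr{X_U}$ would fail here, since $X$ is not assumed series-parallel. Choosing $Y$ as a sub-pomset of $U$ (resp.\ $V$) sidesteps the issue: N-freeness, the characterisation of series-parallelism due to Gischer (\Cref{lemma:pomset-sp-nfree}), is hereditary under restriction of the carrier, so any N-shape in $Y$ would lift to one in $U$ (resp.\ $V$). Since exactly one of Cases 1 or 2 occurs and $Y$ is drawn from $U$ or $V$ accordingly, assuming that \emph{both} $U$ and $V$ are series-parallel is exactly enough.
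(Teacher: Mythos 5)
Your proof is correct and follows essentially the same route as the paper's: establish by a crossing/antisymmetry argument that either all of $W$ lands in $U$ or all of $X$ lands in $V$, then carve out $Y$ from the overlap. The only organizational difference is that the paper first invokes the Factorisation lemma (\Cref{lemma:pomset-factorise-subsumption}) to replace the subsumption by an equality $U \cdot V = W' \cdot X'$ of labelled posets before splitting, whereas you work directly with the subsumption $h$; and your choice of $Y$ as a restriction of $U$ (resp.\ $V$), which you correctly note is forced by the series-parallel clause, coincides with the paper's sub-pomset of $W'$ since $W'$ is itself a restriction of $U \cdot V$.
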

\begin{proof}
By \Cref{lemma:pomset-factorise-subsumption}, we find pomsets $W'$ and $X'$ with $W' \sqsubseteq W$ and $X' \sqsubseteq X$, such that $U \cdot V = W' \cdot X'$.
Let $\lp{u}, \lp{v}, \lp{w}', \lp{x}'$ be labelled posets such that $U = [\lp{u}]$, $V = [\lp{v}]$, $W' = [\lp{w'}]$ and $X = [\lp{x}']$.
Without loss of generality, we can assume that $\lp{u}$ is disjoint from $\lp{v}$, and $\lp{w}'$ from $\lp{x}'$, and that $\lp{u} \cdot \lp{v} = \lp{w}' \cdot \lp{x}'$.
Note that this means that $S_\lp{u} \cup S_\lp{v} = S_{\lp{w}'} \cup S_{\lp{x}'}$.

Suppose, towards a contradiction, that $S_\lp{u} \not\subseteq S_{\lp{w}'}$ and $S_{\lp{w}'} \not\subseteq S_\lp{u}$.
Then there exists a $u \in S_\lp{u} \setminus S_{\lp{w}'}$ and a $w \in S_{\lp{w}'} \setminus S_\lp{u}$.
Since $u \not\in S_{\lp{w}'}$, it follows that $u \in S_{\lp{x}'}$; by the same reasoning, we find that $w \in S_\lp{v}$.
But then $u \leq_{\lp{u} \cdot \lp{v}} w$, and $w \leq_{\lp{w}' \cdot \lp{x}'} u$, and since $\leq_{\lp{u} \cdot \lp{v}}$ and $\leq_{\lp{w}' \cdot \lp{x}'}$ coincide, we find that $u = w$ by antisymmetry; this is a contradiction, since $u \in S_\lp{u}$ and $w \not\in S_\lp{u}$.
Thus, either $S_\lp{u} \subseteq S_\lp{w}'$ or $S_\lp{w}' \subseteq S_\lp{u}$.

For the remainder of this proof, suppose that $S_\lp{u} \subseteq S_{\lp{w}'}$; we can prove the claim when $S_\lp{u} \supseteq S_{\lp{w}'}$ using similar arguments.
We choose $S = S_{\lp{w}'} \setminus S_\lp{u}$ and $\lp{y} = \lp{w'} \restr{S}$.
We now claim that $\lp{w}' = \lp{u} \cdot \lp{y}$.
To see this, we show that their carriers, orders and labellings coincide.
\begin{itemize}
    \item
    For the carrier, note that $\lp{u}$ and $\lp{y}$ are disjoint, and that $S_{\lp{w}'} = S_\lp{u} \cup (S_\lp{w'} \setminus S_\lp{u}) = S_\lp{u} \cup S_\lp{y}$.

    \item
    For the order, suppose first that $w_0, w_1 \in S_{\lp{w}'}$ with $w_0 \leq_{\lp{w'}} w_1$.
    There are two cases to consider.
    \begin{itemize}
        \item
        If $w_0, w_1 \in S_\lp{u}$ or $w_0, w_1 \in S_\lp{y}$, then $w_0 \leq_\lp{u} w_1$ or $w_0 \leq_\lp{y} w_1$, and thus $w_0 \leq_{\lp{u} \cdot \lp{y}} w_1$.
        \item
        If $w_0 \in S_\lp{u}$ and $w_1 \in S_\lp{y}$, then $w_0 \leq_{\lp{u} \cdot \lp{y}} w_1$ by definition.
    \end{itemize}
    The case where $w_1 \in S_\lp{u}$ and $w_0 \in S_\lp{y}$ can be discounted, for here we find that $w_0 \in S_\lp{y} \subseteq S_\lp{v}$, and thus $w_1 \leq_{\lp{u} \cdot \lp{v}} w_0$, meaning that $w_1 \leq_{\lp{w}' \cdot \lp{x}'} w_0$, which in turn implies that $w_0 = w_1$, contradicting that $S_\lp{u}$ and $S_\lp{y}$ are disjoint.

    Now suppose that $w_0, w_1 \in S_{\lp{w}'}$ with $w_0 \leq_{\lp{u} \cdot \lp{y}} w_1$.
    There are three cases to consider.
    \begin{itemize}
        \item
        If $w_0, w_1 \in \S_\lp{u}$, then $w_0 \leq_{\lp{u}} w_1$, and thus $w_0 \leq_{\lp{u} \cdot \lp{v}} w_1$.
        Since $\lp{u} \cdot \lp{v} = \lp{w}' \cdot \lp{x}'$, we have that $w_0 \leq_{\lp{w}' \cdot \lp{x}'} w_1$.
        Since $w_0, w_1 \in S_{\lp{w}'}$, we have $w_0 \leq_{\lp{w}'} w_1$.

        \item
        If $w_0, w_1 \in S_\lp{y}$, then $w_0 \leq_\lp{y} w_1$. Since $\leq_\lp{y}\ \subseteq\ \leq_{\lp{w}'}$, we find that $w_0 \leq_{\lp{w}'} w_1$.

        \item
        If $w_0 \in S_\lp{u}$ and $w_1 \in S_\lp{y}$, then $w_1 \in S_\lp{v}$ and therefore $w_0 \leq_{\lp{u} \cdot \lp{v}} w_1$.
        Since $\lp{u} \cdot \lp{v} = \lp{w}' \cdot \lp{x}'$, we have that $w_0 \leq_{\lp{w}' \cdot \lp{x}'} w_1$.
        Since $w_0, w_1 \in S_{\lp{w}'}$, we then know that $w_0 \leq_{\lp{w}'} w_1$.
    \end{itemize}

    \item
    For the labelling, let $w \in S_{\lp{w}'}$.
    If $w \in S_\lp{u}$, then $\lambda_{\lp{w}'}(w) = \lambda_{\lp{w'} \cdot \lp{x}'}(w) = \lambda_{\lp{u} \cdot \lp{v}}(w) = \lambda_{\lp{u}}(w) = \lambda_{\lp{u} \cdot \lp{y}}(w)$.
    Otherwise, if $w \not\in S_\lp{u}$, then $\lambda_{\lp{w}'}(w) = \lambda_\lp{y}(w)$ by definition of $\lp{y}$.
\end{itemize}

We now claim that $\lp{v} = \lp{y} \cdot \lp{x}'$.
To this end, we show that their carriers, orders and labellings coincide.
\begin{itemize}
    \item
    For the carrier, note that $S_\lp{y} \subseteq S_{\lp{w}'}$, and thus $S_\lp{y}$ is disjoint from $S_{\lp{x}'}$.
    Furthermore,
    \[
    S_{\lp{y} \cdot \lp{x}'} = S_\lp{y} \cup S_{\lp{x}'} = (S_{\lp{w}'} \setminus S_\lp{u}) \cup S_{\lp{x}'} = (S_{\lp{w}'} \cup S_{\lp{x}'}) \setminus S_\lp{u} = (S_\lp{u} \cup S_\lp{v}) \setminus S_\lp{u} = S_\lp{v}
    \]

    \item
    For the order, suppose first that $v_0, v_1 \in S_\lp{v}$ with $v_0 \leq_\lp{v} v_1$.
    Then $v_0 \leq_{\lp{u} \cdot \lp{v}} v_1$, and thus $v_0 \leq_{\lp{w}' \cdot \lp{x}'} v_1$.
    There are three cases to consider.
    \begin{itemize}
        \item
        If $v_0, v_1 \in S_\lp{y}$, then $v_0 \leq_{\lp{w}'} v_1$; since $\lp{w}' = \lp{u} \cdot \lp{y}$, we have that $v_0 \leq_\lp{y} v_1$, and thus $v_0 \leq_{\lp{y} \cdot \lp{x}'} v_1$.
        \item
        If $v_0, v_1 \in S_{\lp{x}'}$, then $v_0 \leq_{\lp{x}'} v_1$, and thus $v_0 \leq_{\lp{y} \cdot \lp{x}'} v_1$.
        \item
        If $v_0 \in S_\lp{y}$ and $v_1 \in S_{\lp{x}'}$, then $v_0 \leq_{\lp{y} \cdot \lp{x}'} v_1$ immediately.
    \end{itemize}
    The case where $v_1 \in S_\lp{y}$ and $v_0 \in S_{\lp{x}'}$ can be discounted, for here we find that $v_1 \in S_{\lp{w}'}$, and thus $v_1 \leq_{\lp{w}' \cdot \lp{x}'} v_0$, which would imply that $v_0 = v_1$, contradicting that $S_\lp{y}$ and $S_{\lp{x}'}$ are disjoint.

    Now suppose that $v_0, v_1 \in S_\lp{v}$ with $v_0 \leq_{\lp{y} \cdot \lp{x}'} v_1$.
    There are three cases to consider.
    \begin{itemize}
        \item
        If $v_0, v_1 \in S_\lp{y}$, then $v_0, v_1 \in S_{\lp{w}'}$.
        We then have that $v_0 \leq_{\lp{w}'} v_1$, and thus that $v_0 \leq_{\lp{w}' \cdot \lp{x}'} v_1$.
        Since $\lp{w}' \cdot \lp{x}' = \lp{u} \cdot \lp{v}$, we have that $v_0 \leq_{\lp{u} \cdot \lp{v}} v_1$, and since $v_0, v_1 \in S_\lp{v}$, it follows that $v_0 \leq_\lp{v} v_1$.

        \item
        If $v_0, v_1 \in S_{\lp{x}'}$, then $v_0, v_1 \not\in S_{\lp{w}'}$, and thus, since $S_\lp{u} \subseteq S_{\lp{w}'}$, it follows that $v_0, v_1 \not\in S_\lp{u}$.
        Since $v_0 \leq_{\lp{w}' \cdot \lp{x}'} v_1$ and thus $v_0 \leq_{\lp{u} \cdot \lp{v}} v_1$, we have $v_0 \leq_\lp{v} v_1$.

        \item
        If $v_0 \in S_\lp{y}$ and $v_1 \in S_{\lp{x}'}$, then $v_0 \in S_{\lp{w}'}$ and thus $v_0 \leq_{\lp{w}' \cdot \lp{x}'} v_1$, meaning that $v_0 \leq_{\lp{u} \cdot \lp{v}} v_1$.
        Since $v_0, v_1 \in S_\lp{v}$, this means that $v_0 \leq_\lp{v} v_1$.
    \end{itemize}

    \item
    For the labelling, let $v \in S_\lp{v}$.
    If $v \in S_\lp{y}$, then $\lambda_{\lp{v}}(v) = \lambda_{\lp{u} \cdot \lp{v}}(v) = \lambda_{\lp{w}' \cdot \lp{x}'}(v) = \lambda_{\lp{w}'}(v) = \lambda_\lp{y}(v) = \lambda_{\lp{y} \cdot \lp{x}'}(v)$.
    Otherwise, if $v \in S_{\lp{x}'}$, then $\lambda_{\lp{v}}(v) = \lambda_{\lp{u} \cdot \lp{v}}(v) = \lambda_{\lp{w}' \cdot \lp{x}'}(v) = \lambda_{\lp{x}'}(v) = \lambda_{\lp{y} \cdot \lp{x}'}(v)$.
\end{itemize}

We now choose $Y = [\lp{y}]$ to find that $W' = U \cdot Y$ and $V = Y \cdot X'$.
But then, since $W' \sqsubseteq W$ and $X' \sqsubseteq X$, we find that $U \cdot Y \sqsubseteq W$ and $V \sqsubseteq Y \cdot X$, fulfilling the first part of the claim.
Lastly, note that if $U \cdot V$ is series-parallel, it is \N-free.
This means that $W'$ must also be \N-free, since any \N that would occur in $W'$ would also occur in $U \cdot V$.
Because $Y$ is constructed as a sub-pomset of $W'$, it follows that $Y$ must also be \N-free, and thus by \Cref{lemma:pomset-sp-nfree} we find that $Y$ is series-parallel.
\end{proof}

\pomsetlevigeneralised*
\begin{proof}
The proof proceeds by induction on $n$.
In the base, where $n = 1$, we choose $m = 0$, $Y = U$ and $Z = V$ to satisfy the claim.

In the inductive step, assume the claim holds for $n-1$. We can write $U \cdot V = W_0 \cdot (W_1 \cdot W_2 \cdots W_{n-1})$.
By \Cref{lemma:pomset-levi}, there are two cases to consider.
\begin{itemize}
    \item
    Suppose that $X$ is a pomset such that $U \sqsubseteq W_0 \cdot X$ and $X \cdot V \sqsubseteq W_1 \cdot W_2 \cdots W_{n-1}$.
    By induction, we find $1 \leq m < n$ and pomsets $Y, Z$ such that $Y \cdot Z \sqsubseteq W_m$ and $X \sqsubseteq W_1 \cdot W_2 \cdots W_{m-1} \cdot Y$ and $V \sqsubseteq Z \cdot W_{m+1} \cdot W_{m+2} \cdots W_n$.
    Since in this case $U \sqsubseteq W_0 \cdot W_1 \cdots W_{m-1} \cdot X$, the claim follows.
    Moreover, if $U$ and $V$ are series-parallel, then so are $Y$ and $Z$, by induction.

    \item
    Suppose that $X$ is a pomset such that $U \cdot X \sqsubseteq W_0$ and $V \sqsubseteq X \cdot W_1 \cdot W_2 \cdots W_{n-1}$.
    We can then choose $m = 0$, $Y = U$ and $Z = X$ to satisfy the claim.
    Moreover, if $U$ and $V$ are series-parallel, then $X$ is series-parallel, meaning that $Y$ and $Z$ are also series-parallel.
    \qedhere
\end{itemize}
\end{proof}

\pomsetleviparallel*
\begin{proof}
Let $U = [\lp{u}]$, $V = [\lp{v}]$, $W = [\lp{w}]$, and $X = [\lp{x}]$, and assume without loss of generality that $\lp{u}$ and $\lp{v}$ as well as $\lp{w}$ and $\lp{x}$ are disjoint, and that $\lp{u} \parallel \lp{v} = \lp{w} \parallel \lp{x}$.
We can then choose $\lp{y}_0 = \lp{u} \restr{\lp{w}}$, $\lp{y}_1 = \lp{u} \restr{\lp{x}}$, $\lp{z}_0 = \lp{v} \restr{\lp{w}}$ and $\lp{z}_1 = \lp{v} \restr{\lp{x}}$.
We can then show that $\lp{u} = \lp{y}_0 \parallel \lp{y}_1$, $\lp{v} = \lp{z}_0 \parallel \lp{z}_1$, $\lp{w} = \lp{y}_0 \parallel \lp{z}_0$ and $\lp{x} = \lp{y}_1 \parallel \lp{z}_1$ by the usual technique, where for the last two equalities we use that $\lp{u} \restr{\lp{w}} = \lp{w} \restr{\lp{u}}$, $\lp{u} \restr{\lp{x}} = \lp{x} \restr{\lp{u}}$, $\lp{v} \restr{\lp{w}}$ and $\lp{v} \restr{\lp{x}} = \lp{x} \restr{\lp{v}}$.
The claim is then satisfied by choosing $Y_i = [\lp{y}_i]$ and $Z_i = [\lp{z}_i]$ for $i \in 2$.
\end{proof}

\subsection{The interpolation lemma}

\pomsetinterpolation*
\begin{proof}
Let $U = [\lp{u}]$, $V = [\lp{v}]$, $W = [\lp{w}]$ and $X = [\lp{x}]$, and assume without loss of generality that $\lp{u}$ and $\lp{v}$ are disjoint, as well as $\lp{w}$ and $\lp{x}$, and that $S_\lp{u} \cup S_\lp{v} = S_\lp{w} \cup S_\lp{x}$, such that the subsumption $\lp{u} \cdot \lp{v} \sqsubseteq \lp{w} \parallel \lp{x}$ is witnessed by the identity $i: S_\lp{w} \cup S_\lp{x} \to S_\lp{u} \cup S_\lp{v}$.

We choose labelled posets $\lp{w}_0$, $\lp{w}_1$, $\lp{x}_0$ and $\lp{x}_1$ as follows:
\begin{align*}
\lp{w}_0 &= \lp{w} \restr{S_\lp{u} \cap S_\lp{w}} &
\lp{w}_1 &= \lp{w} \restr{S_\lp{v} \cap S_\lp{w}} &
\lp{x}_0 &= \lp{x} \restr{S_\lp{u} \cap S_\lp{x}} &
\lp{x}_1 &= \lp{x} \restr{S_\lp{v} \cap S_\lp{x}}
\end{align*}
One easily verifies that these are pairwise disjoint.
To show that $\lp{u} \sqsubseteq \lp{w}_0 \parallel \lp{x}_0$, first note that
\[S_{\lp{w}_0 \parallel \lp{x}_0} = S_{\lp{w}_0} \cup S_{\lp{x}_0} = (S_\lp{u} \cap S_\lp{w}) \cup (S_\lp{u} \cap S_\lp{x}) = S_\lp{u} \cap (S_\lp{w} \cup S_\lp{x}) = S_\lp{u} \cap (S_\lp{u} \cup S_\lp{v}) = S_\lp{u}\]
We now claim that $i: S_{\lp{w}_0 \parallel \lp{x}_0} \to S_\lp{u}$, i.e., the identity on $S_\lp{u}$, is a subsumption witnessing that $\lp{u} \sqsubseteq \lp{p} \parallel \lp{q}$.
To see this, let $u_0, u_1 \in S_\lp{u}$ be such that $u_0 \leq_{\lp{w}_0 \parallel \lp{x}_0} u_1$.
If $u_0 \leq_{\lp{w}_0} z$, then $u_0 \leq_\lp{w} u_1$ by choice of $\lp{w}_0$.
But then $u_0 \leq_{\lp{w} \parallel \lp{x}} u_1$, and thus $u_0 \leq_{\lp{u} \cdot \lp{v}} u_1$ by the premise.
Since $u_0, u_1 \in S_\lp{u}$, we can conclude that $u_0 \leq_\lp{u} u_1$.
We can similarly show that $u_0 \leq_\lp{u} u_1$ when $u_0 \leq_{\lp{x}_0} z$ and thus conclude $\lp{u} \sqsubseteq \lp{w}_0 \parallel \lp{x}_0$.
The proof of $\lp{v} \sqsubseteq \lp{w}_1 \parallel \lp{x}_1$ is similar.

To see that $\lp{w}_0 \cdot \lp{w}_1 \sqsubseteq \lp{w}$, first note that $S_{\lp{w}_0 \cdot \lp{w}_1} = S_\lp{w}$ by reasoning similar to the above.
We claim that $i: S_\lp{w} \to S_{\lp{w}_0 \cdot \lp{w}_1}$, i.e., the identity on $S_\lp{w}$, is a subsumption witnessing that $\lp{w}_0 \cdot \lp{w}_1 \sqsubseteq \lp{w}$.
To see this, suppose that $w_0, w_1 \in S_\lp{w}$ such that $w_0 \leq_\lp{w} w_1$.
Then we know that $w_0 \leq_{\lp{w} \parallel \lp{x}} w_1$, and thus $w_0 \leq_{\lp{u} \cdot \lp{v}} w_1$ by the premise.
We can then exclude the case where $w_1 \in S_\lp{u}$ and $w_0 \in S_\lp{v}$, for then $w_1 \leq_{\lp{u} \cdot \lp{v}} w_0$ and thus $w_0 = w_1$ by antisymmetry, contradicting that $\lp{u}$ and $\lp{v}$ are disjoint.
Three cases remain to be considered.
\begin{itemize}
    \item If $w_0, w_1 \in S_\lp{u}$, then $w_0 \leq_{\lp{w}_0} w_1$, and thus $w_0 \leq_{\lp{w}_0 \cdot \lp{w}_1} w_1$.
    \item If $w_0, w_1 \in S_\lp{v}$, then $w_0 \leq_{\lp{w}_1} w_1$, and thus $w_0 \leq_{\lp{w}_0 \cdot \lp{w}_1} w_1$.
    \item If $w_0 \in S_\lp{u}$ and $w_1 \in S_\lp{v}$, then $w_0 \in S_{\lp{w}_0}$ and $w_1 \in S_{\lp{w}_1}$, thus $w_0 \leq_{\lp{w}_0 \cdot \lp{w}_1} w_1$ by definition.
\end{itemize}
Since $w_0 \leq_{\lp{w}_0 \cdot \lp{w}_1} w_1$ in all possible cases, we conclude that $i$ preserves ordering and is therefore a subsumption.
The proof that $\lp{x}_0 \cdot \lp{x}_1 \sqsubseteq \lp{x}$ is similar.

We can now choose $W_0 = [\lp{w}_0]$, $W_1 = [\lp{w}_1]$, $X_0 = [\lp{x}_0]$ and $X_1 = [\lp{x}_1]$ to satisfy the claim.
Moreover, we note that if $W$ and $X$ are series-parallel, then they are \N-free by \Cref{lemma:pomset-sp-nfree}.
The labelled posets $\lp{w}_0$, $\lp{w}_1$, $\lp{x}_0$ and $\lp{x}_1$ must then also be \N-free, and therefore $W_0$, $W_1$, $X_0$ and $X_1$ are series-parallel by \Cref{lemma:pomset-sp-nfree}.
\end{proof}

\subsection{The nullability function}

\nullable*
\begin{proof}
We start with the first claim.
This is shown by induction on $e$; we can disregard the cases where $\epsilon(e) = 0$, for then the claim holds trivially.
This leaves us with one case to consider in the base, namely $e = 1$; here we see that $\epsilon(e) = 1 \leqqaka 1 = e$.
For the inductive step, there are four cases to consider.
\begin{itemize}
    \item
    If $e = e_0 + e_1$ with $\epsilon(e) = 1$, then $\epsilon(e_i) = 1$ for some $i \in 2$.
    But then also $\epsilon(e) \leqqaka \epsilon(e_0) + \epsilon(e_1) \leqqaka e_0 + e_1 = e$.

    \item
    If $e = e_0 \cdot e_1$ with $\epsilon(e) = 1$, then $\epsilon(e_0) = \epsilon(e_1) = 1$.
    But then also $\epsilon(e) \leqqaka \epsilon(e_0) \cdot \epsilon(e_1) \leqqaka e_0 \cdot e_1 = e$.

    \item
    If $e = e_0 \parallel e_1$, then an argument similar to the above shows that $\epsilon(e) \leqqaka e$.

    \item
    If $e = e_0^\star$, then $\epsilon(e) = 1$.
    However, since $e = 1 + e_0 \cdot e$, we also have that $\epsilon(e) \leqqaka e$.
\end{itemize}

For the second claim, we observe that the direction from right to left follows from the first claim and \Cref{lemma:soundness}.
It remains to show the direction from left to right.
By \autoref{lemma:pomset-subsumption-base}, we know that if $1 \in \semaka{e}$, then $1 \in \sembka{e}$.
The proof proceeds by induction on $e$.
In the base, there is again only one case to consider, namely $e = 1$; the claim holds trivially here.
For the inductive step, there are four cases to consider.
\begin{itemize}
    \item
    If $e = e_0 + e_1$, then $1 \in \sembka{e_i}$ for some $i \in 2$.
    By induction, $\epsilon(e_i) = 1$, and thus $\epsilon(e) = 1$.

    \item
    If $e = e_0 \cdot e_1$, then there exist $U \in \sembka{e_0}$ and $V \in \sembka{e_1}$ such that $U \cdot V = 1$.
    By \autoref{lemma:pomset-unique-decomposition}, we have that $U = V = 1$, and thus by induction that $\epsilon(e_0) = \epsilon(e_1) = 1$.
    This implies that $\epsilon(e) = 1$.

    \item
    If $e = e_0 \parallel e_1$, then an argument similar to the above shows that $\epsilon(e) = 1$.

    \item
    If $e = e_0^\star$, then $\epsilon(e) = 1$ by definition.
    \qedhere
\end{itemize}
\end{proof}

\subsection{Observations about term width}

\widthvsnonempty*
\begin{proof}
The proof proceeds by induction on $e$.
In the base, we can disregard the cases where $e = 0$ or $e = 1$, where the claim holds vacuously.
This leaves us with the case where $e = a$ for some $a \in \alphabet$; here, the claim holds by definition of $\width{-}$.

In the inductive step, there are four cases to consider.
\begin{itemize}
    \item
    If $e = e_0 + e_1$, then either $U \in \sembka{e_0}$ or $U \in \sembka{e_1}$.
    In the former case, we find that $\width{e_0} > 0$ by induction, while in the latter case we find that $\width{e_1} > 0$ also by induction.
    This means that $\width{e} = \max(\width{e_0}, \width{e_1}) > 0$.

    \item
    If $e = e_0 \cdot e_1$, then there exist pomsets $U_0, U_1$ with $U = U_0 \cdot U_1$, such that $U_0 \in \sembka{e_0}$ and $U_1 \in \sembka{e_1}$.
    Since $U \neq 1$, we know that either $U_0 \neq 1$ or $U_1 \neq 1$.
    In the former case, we find that $\width{e_0} > 0$ by induction, while in the latter case we find that $\width{e_1} > 0$ also by induction.
    This means that $\width{e} = \max(\width{e_0}, \width{e_1}) > 0$.

    \item
    If $e = e_0 \parallel e_1$, then there exist pomsets $U_0, U_1$ with $U = U_0 \parallel U_1$, such that $U_0 \in \sembka{e_0}$ and $U_1 \in \sembka{e_1}$.
    Since $U \neq 1$, we know that either $U_0 \neq 1$ or $U_1 \neq 1$.
    In the former case, we find that $\width{e_0} > 0$ by induction, while in the latter case we find that $\width{e_1} > 0$ also by induction.
    This means that $\width{e} = \max(\width{e_0}, \width{e_1}) > 0$.

    \item
    If $e = e_0^\star$, then there exist pomsets $U_0, U_1, \dots, U_{n-1} \in \sembka{e_0}$ with $U = U_0 \cdot U_1 \cdots U_{n-1}$, such that for $0 \leq i < n$ we have that $U_i \in \sembka{e_0}$.
    Since $U \neq 1$, there exists an $i$ with $0 \leq i < n$ such that $U_i \neq 1$.
    By induction, we find that $\width{e_0} > 0$, which means that $\width{e} = \width{e_0} > 0$.
    \qedhere
\end{itemize}
\end{proof}

\widthvsequivalence*
\begin{proof}
If $e \equivbka 0 \equivbka f$, then $\width{e} = 0 = \width{f}$.
For the remaining cases, it suffices to verify the claim for all equivalences postulated for $\equivbka$ in \Cref{definition:equivalence}; that the claim is preserved by the congruence closure on these rules should be clear.

We first consider the base equivalences for $e \equivbka f$.
\begin{itemize}
    \item
    If $e = f + 0$, then $\width{e} = \max(\width{f}, 0) = \width{f}$.

    \item
    If $e = f + f$, then $\width{e} = \max(\width{f}, \width{f}) = \width{f}$.

    \item
    If $e = e_0 + e_1$ and $f = e_1 + e_0$, then
    \[\width{e} = \max(\width{e_0}, \width{e_1}) = \max(\width{e_1}, \width{e_0}) = \width{f}\]

    \item
    If $e = e_0 + (e_1 + e_2)$ and $f = (e_0 + e_1) + e_2$, then $\width{e} = \max(\width{e_0}, \width{e_1}, \width{e_2}) = \width{f}$.

    \item
    If $e = f \cdot 1$, then $\width{e} = \max(\width{f}, 0) = \width{f}$.
    The case where $f = e \cdot 1$ can be treated similarly.

    \item
    If $e = e' \cdot 0$ and $f = 0$, then $e \equivbka 0 \equivbka f$, and thus $\width{e} = 0 = \width{f}$.
    The case where $f = 0 \cdot f'$ and $e = 0$ can be treated similarly.

    \item
    If $e = e_0 \cdot (e_1 \cdot e_2)$ and $f = (e_0 \cdot e_1) \cdot e_2$, then $\width{e} = \max(\width{e_0}, \width{e_1}, \width{e_2}) = \width{f}$.

    \item
    If $e = e_0 \cdot (e_1 + e_2)$ and $f = e_0 \cdot e_1 + e_0 \cdot e_2$, then
    \[\width{e} = \max(e_0, \max(e_1, e_2)) = \max(\max(e_0, e_1), \max(e_0, e_2)) = \width{f}\]
    The case where $e = (e_0 + e_1) \cdot e_2$ and $f = e_0 \cdot e_2 + e_1 \cdot e_2$ can be treated similarly.

    \item
    If $e = e_0 \parallel e_1$ and $f = e_1 \parallel e_0$, then $\width{e} = \width{e_0} + \width{e_1} = \width{e_1} + \width{e_0} = \width{f}$.

    \item
    If $e = e' \parallel 0$ and $f = 0$, then $e \equivbka 0 \equivbka f$, and thus $\width{e} = 0 = \width{f}$.

    \item
    If $e = e_0 \parallel (e_1 \parallel e_2)$ and $f = (e_0 \parallel e_1) \parallel e_2$, then $\width{e} = \width{e_0} + \width{e_1} + \width{e_2} = \width{f}$.

    \item
    If $e = 1 + e_0 \cdot e_0^\star$ and $f = e_0^\star$, then $\width{e} = \max(0, \max(\width{e_0}, \width{e_0^\star})) = \width{e_0} = \width{f}$.
\end{itemize}

As for the inference rule, suppose that $e \leqqbka f$ with $e = e_0 + e_1 \cdot f$. (i.e., $e_0 + f_1 \cdot f + f \equivbka f$).
By induction $\max(\width{e_0}, \width{e_1}, \width{f}) = \width{f}$, and thus $\width{e_1^\star \cdot e_0} = \max(\width{e_1}, \width{e_0}) \leq \width{g}$.
From this, we can conclude that
\[\width{e_1^\star \cdot e_0 + f} = \max(\width{e_0}, \width{e_1}, \width{f}) = \width{f} \qedhere\]
\end{proof}

\subsection{Solutions to linear systems}

\linearsystemsolution*
\begin{proof}
Let $\AKA \in \{ \BKA, \CKA \}$.
We construct $x$ by induction on $|I|$.
In the base, $I = \emptyset$, meaning that the unique $I$-vector suffices as a least solution.

In the inductive step, let $k \in I$ and choose $I' = I - \{ k \}$.
We craft the $I'$-linear system $\ls{L}' = \angl{M', p'}$ as follows:
\begin{align*}
M'(i, j)
    &\triangleq M(i, k) \cdot {M(k, k)}^\star \cdot M(k, j) + M(i, j) \\
p'(i)
    &\triangleq p(i) + M(i, k) \cdot {M(k, k)}^\star \cdot p(k)
\intertext{%
Since $|I'| < |I|$, we know by induction that $\ls{L}'$ admits a least solution $x'$.
We construct the $I$-vector $x$ from $x'$ as follows:
}
x(i) &\triangleq
\begin{cases}
x'(i) & i \neq k \\
{M(k,k)}^\star \cdot \left(p(k) + \sum_{j \in I'} M(k, j) \cdot x'(j)\right) & i = k
\end{cases}
\end{align*}
We claim that $x$ is a solution of $\ls{L}$.
To see this, derive for $i \in I'$:
\begin{align*}
x(i)
    &\triangleq x'(i)
        \tag{Def. $x$} \\
    &\geqqaka p'(i) + \sum_{j \in I'} M'(i,j) \cdot x'(j)
        \tag{$x'$ solution of $\ls{L}'$} \\
    &\equivaka p(i) + M(i, k) \cdot {M(k, k)}^\star \cdot p(k) \\
    &\phantom{\equivaka} + \sum_{j \in I'} (M(i, k) \cdot {M(k, k)}^\star \cdot M(k, j) + M(i, j)) \cdot x'(j)
        \tag{Def. $\ls{L}'$} \\
    &\equivaka p(i) + \sum_{j \in I'} M(i, j) \cdot x'(j) \\
    &\phantom{\equivaka} + M(i, k) \cdot {M(k, k)}^\star \cdot \left( p(k) + \sum_{j \in I'} M(k, j) \cdot x'(j) \right)
        \tag{Distributivity} \\
    &\equivaka p(i) + \sum_{j \in I'} M(i, j) \cdot x(j) + M(i, k) \cdot x(k)
        \tag{Def. $x$} \\
    &\equivaka p(i) + \sum_{j \in I} M(i, j) \cdot x(j)
        \tag{Merge sum}
\end{align*}
Also, for $k$, we derive:
\begin{align*}
x(k)
    &\triangleq {M(k,k)}^\star \cdot \left(p(k) + \sum_{j \in I'} M(k, j) \cdot x'(j)\right)
        \tag{Def. $x$} \\
    &\equivaka (1 + M(k,k) \cdot {M(k,k)}^\star) \cdot \left(p(k) + \sum_{j \in I'} M(k, j) \cdot x'(j)\right)
        \tag{Unrolling} \\
    &\equivaka p(k) + \sum_{j \in I'} M(k, j) \cdot x'(j) \\
    &\phantom{\equivaka} + M(k,k) \cdot {M(k,k)}^\star \cdot \left(p(k) + \sum_{j \in I'} M(k, j) \cdot x'(j)\right)
        \tag{Distributivity} \\
    &\equivaka p(k) + \sum_{j \in I'} M(k, j) \cdot x(j) + M(k,k) \cdot x(k)
        \tag{Def. $x$} \\
    &\equivaka p(k) + \sum_{j \in I} M(k, j) \cdot x(j)
        \tag{Merge sum}
\end{align*}
We then know that $M \cdot x + b \leqqka x$, making $x$ a solution.

It remains to show that $x$ is the least solution.
To this end, let $y$ be any solution of $\ls{L}$.
We choose the $I'$-vector $y'$ by setting $y'(i) \triangleq y(i)$.
We claim that $y'$ is a solution of $\ls{L}'$.
To see this, we first note that
\begin{align*}
y(k)
    &\geqqaka p(k) + \sum_{j \in I} M(k, j) \cdot y(j)
        \tag{$y$ solution of $\ls{L}$} \\
    &\equivaka p(k) + M(k, k) \cdot y(k) + \sum_{j \in I'} M(k, j) \cdot y(j)
        \tag{Split sum} \\
    &\geqqaka {M(k, k)}^\star \cdot \left( p(k) + \sum_{j \in I'} M(k, j) \cdot y(j) \right)
        &\tag{Fixpoint axiom}
\end{align*}
With this in hand, we can derive
\begin{align*}
y'(i)
    &\geqqaka p(i) + \sum_{j \in I} M(i, j) \cdot y(j)
        \tag{$y$ solution of $\ls{L}$} \\
    &\equivaka p(i) + M(i, k) \cdot y(k) + \sum_{j \in I'} M(i, j) \cdot y(j)
        \tag{Split sum} \\
    &\geqqaka p(i) + M(i, k) \cdot {M(k,k)}^\star \cdot \left( p(k) + \sum_{j \in I'} M(k, j) \cdot y(j) \right) \\
    &\phantom{\geqqaka} + \sum_{j \in I'} M(i, j) \cdot y(j)
        \tag{observation above} \\
    &\equivaka p(i) + M(i, k) \cdot {M(k,k)}^\star \cdot p(k) \\
    &\phantom{\equivaka} + \sum_{j \in I'} (M(i, k) \cdot {M(k, k)}^\star \cdot M(k, j) + M(i, j)) \cdot y(j)
        \tag{Distributivity} \\
    &\equivaka p'(i) + \sum_{j \in I'} M'(i, j) \cdot y(j)
        \tag{Def. $\ls{L}'$}
\end{align*}
Thus $y'$ is a solution of $\ls{L}'$; since $x'$ is the least solution of $\ls{L}'$, we know that $x' \leqqaka y'$.
We furthermore derive
\begin{align*}
y(k)
    &\geqqaka {M(k,k)}^\star \cdot \left( p(k) + \sum_{j \in I'} M(k,j) \cdot y(j) \right)
        \tag{observation above} \\
    &\equivaka {M(k,k)}^\star \cdot \left( p(k) + \sum_{j \in I'} M(k,j) \cdot y'(j) \right)
        \tag{Def. $y'$} \\
    &\geqqaka {M(k,k)}^\star \cdot \left( p(k) + \sum_{j \in I'} M(k,j) \cdot x'(j) \right)
        \tag{$y'$ solution of $\ls{L}'$} \\
    &\geqqaka {M(k,k)}^\star \cdot \left( p(k) + \sum_{j \in I'} M(k,j) \cdot x(j) \right)
        \tag{Def. $x'$} \\
    &\triangleq x(k)
        \tag{Def. $x$}
\end{align*}
In total, we find that $x \leqqaka y$, making $x$ the least solution of $\ls{L}$.

Finally, note that in all derivation steps, $\AKA$ could have been either $\BKA$ or $\CKA$; since the constructed least solution is the same regardless of the choice of $\AKA$, the final claim is also satisfied.
\end{proof}

\section{Proofs for \Cref{section:axiomatising}}%
\label{appendix:axiomatising-proofs}

\closurebase*
\begin{proof}
That $e \equivcka e$ is immediate from the fact that $\equivcka$ is a congruence.
It remains to show $\sembka{e} = \down{\sembka{e}}$.
For $e = 0$, this holds immediately, since $\sembka{e} = \emptyset$.
For $e = 1$ or $e = a$ for some $a \in \alphabet$, the claim follows from \Cref{lemma:pomset-subsumption-base}.
\end{proof}

\subsection{Parallel splitting}

\psplitfinite*
\begin{proof}
The proof proceeds by induction on $e$.
In the base, where $e = 0$, $e = 1$ or $e = a$ for some $a \in \alphabet$, the claim holds immediately: since only the first rule applies, $\psplit{e}$ only contains $\angl{e, 1}$ and $\angl{1, e}$.

For the inductive step, suppose that $\lefts \psplit{e} \rights$; one of five cases must hold.
\begin{itemize}
    \item
    $\lefts = e$ and $\rights = 1$, or $\lefts = 1$ and $\rights = e$.

    \item
    $e = e_0 + e_1$, with either $\lefts \psplit{e_0} \rights$, or $\lefts \psplit{e_1} \rights$.

    \item
    $e = e_0 \cdot e_1$, with an $i \in 2$ such that $\lefts \psplit{e_i} \rights$ and $\epsilon(e_{1-i}) = 1$.

    \item
    $e = e_0 \parallel e_1$, with $\lefts = \lefts_0 \parallel \lefts_1$ and $\rights = \rights_0 \parallel \rights_1$, such that $\lefts_i \psplit{e_i} \rights_i$ for all $i \in 2$.

    \item
    $e = e_0^\star$, with $\lefts \psplit{e_0} \rights$.
\end{itemize}

In all of these, there are only finitely many $\lefts, \rights \in \terms$ that satisfy the derived restrictions --- in the first, this is immediate, in the others it follows by induction.
We conclude that $\psplit{e}$ is finite.
\end{proof}

\psplitdomination*
\begin{proof}
The proof proceeds by induction on the construction of $\psplit{e}$.
In the base, either $\lefts = e$ and $\rights = 1$, or $\lefts = 1$ and $\rights = e$; in both cases, $\lefts \parallel \rights \equivbka e$, and so the claim follows.

For the inductive step, there are five cases to consider.
\begin{itemize}
    \item
    If $e = e_0 + e_1$ while $\lefts \psplit{e_i} \rights$ for some $i \in 2$, then by induction we know that $\lefts \parallel \rights \leqqbka e_i$.
    But since $e_i \leqqbka e$, it follows that $\lefts \parallel \rights \leqqbka e$.

    \item
    If $e = e_0 \cdot e_1$ while $\lefts \psplit{e_i} \rights$ and $\epsilon(e_{1-i}) = 1$ for some $i \in 2$, then by induction we know that $\lefts \parallel \rights \leqqbka e_i$.
    If $i = 0$, then $e_i \equivbka e_0 \cdot 1 \leqqbka e_0 \cdot e_1 = e$ (by \Cref{lemma:nullable}); if $e = 1$, we find $e_i \leqqbka e$ analogously.
    This allows us to conclude that $\lefts \parallel \rights \leqqbka e$.

    \item
    If $e = e_0 \parallel e_1$ and $\lefts = \lefts_0 \parallel \lefts_1$ and $\rights = \rights_0 \parallel \rights_1$ while $\lefts_i \psplit{e_i} \rights_i$ for all $i \in 2$, then by induction we know that $\lefts_i \parallel \rights_i \leqqbka e_i$ for all $i \in 2$.
    We can then derive that
    \[ \lefts \parallel \rights = (\lefts_0 \parallel \lefts_1) \parallel (\rights_0 \parallel \rights_1) \equivbka (\lefts_0 \parallel \rights_0) \parallel (\lefts_1 \parallel \rights_1) \leqqbka e_0 \parallel e_1 = e \]

    \item
    If $e = e_0^\star$ while $\lefts \psplit{e} \rights$, then $\lefts \parallel \rights \leqqbka e_0$ by induction.
    Since $e_0 \leqqbka e$, the claim follows. \qedhere
\end{itemize}
\end{proof}

\subsection{Sequential splitting}

\ssplitfinite*
\begin{proof}
The proof proceeds by induction on $e$.
In the base, we can disregard the case where $e = 0$, for no rule applies here.
This leaves us two cases to consider.
\begin{itemize}
    \item
    If $e = 1$, then $\ssplit{e} = \{ \angl{1, 1} \}$, which makes $\ssplit{e}$ finite.

    \item
    If $e = a$ for some $a \in \alphabet$, then $\ssplit{e} = \{ \angl{a, 1}, \angl{1, a} \}$, which makes $\ssplit{e}$ finite again.
\end{itemize}

In the inductive step, suppose that $\lefts, \rights \in \terms$ are such that $\lefts \ssplit{e} \rights$.
There are four cases to consider.
\begin{itemize}
    \item
    If $e = e_0 + e_1$, then $\lefts \ssplit{e_i} \rights$ for some $i \in 2$.

    \item
    If $e = e_0 \cdot e_1$, then either $\lefts = e_0 \cdot \lefts'$ and $\lefts' \ssplit{e_1} \rights$, or $\rights = \rights' \cdot e_1$ and $\lefts \ssplit{e_0} \rights'$.

    \item
    If $e = e_0 \parallel e_1$, then $\lefts = \lefts_0 \parallel \lefts_1$ and $\rights = \rights_0 \parallel \rights_1$, such that for $i \in 2$ it holds that $\lefts_i \ssplit{e_i} \rights_i$.

    \item
    If $e = e_0^\star$, then either $\lefts = \rights = 1$, or $\lefts = e \cdot \lefts'$ and $\rights = \rights' \cdot e$ such that $\lefts \ssplit{e_0} \rights$
\end{itemize}
In all cases, there are finitely many $\lefts, \rights \in \terms$ that satisfy the restrictions put on them, by induction.
\end{proof}

\ssplitdomination*
\begin{proof}
The proof proceeds by induction on the construction of $\ssplit{e}$. In the base, there are three cases to consider.
\begin{itemize}
    \item
    If $e = \lefts = \rights = 1$, then $\lefts \cdot \rights \equivcka e$, and so the claim holds immediately.
    \item
    If $e = a$, and either $\lefts = 1$ and $\rights = a$, or $\lefts = a$ and $\rights = 1$, then $\lefts \cdot \rights \equivcka e$.
    \item
    If $e = e_0^\star$ and $\lefts = \rights = 1$, then $\lefts \cdot \rights \equivcka 1 \leqqcka e$, and so the claim holds.
\end{itemize}

For the inductive step, there are four cases to consider.
\begin{itemize}
    \item
    If $e = e_0 + e_1$ and $\lefts \ssplit{e_i} \rights$ for some $i \in 2$, then $\lefts \cdot \rights \leqqcka e_i$ by induction.
    Since $e_i \leqqcka e$, the claim then follows.

    \item
    If $e = e_0 \cdot e_1$ and $\rights = \rights' \cdot e_1$ with $\lefts \ssplit{e_0} \rights'$, then by induction we find that $\lefts \cdot \rights' \leqqcka e_0$.
    It then follows that $\lefts \cdot \rights = \lefts \cdot \rights' \cdot e_1 \leqqcka e_0 \cdot e_1 = e$.
    The case where $e = e_0 \cdot e_1$ and $\lefts = e_0 \cdot \lefts'$ with $\lefts' \ssplit{e_1} \rights$ can be treated similarly.

    \item
    If $e = e_0 \parallel e_1$ and $\lefts = \lefts_0 \parallel \lefts_1$ and $\rights = \rights_0 \parallel \rights_1$ such that $\lefts_i \ssplit{e_i} \rights_i$ for all $i \in 2$, then by induction we have that $\lefts_i \cdot \rights_i \leqqcka e_i$.
    We then find that
    \[
    \lefts \cdot \rights = (\lefts_0 \parallel \lefts_1) \cdot (\rights_0 \cdot \rights_1) \leqqcka (\lefts_0 \cdot \rights_0) \parallel (\lefts_1 \cdot \rights_1) \leqqcka e_0 \parallel e_1 = e
    \]

    \item
    If $e = e_0^\star$ and $\lefts = e_0^\star \cdot \lefts'$ and $\rights = \rights' \cdot e_0^\star$ such that $\lefts' \ssplit{e_0} \rights'$, then by induction we have that $\lefts' \cdot \rights' \leqqcka e_0$.
    This allows us to derive that $\lefts \cdot \rights = e_0^\star \cdot \lefts' \cdot \rights' \cdot e_0^\star \leqqcka e_0^\star \cdot e_0 \cdot e_0^\star \leqqcka e_0^\star = e$.
    \qedhere
\end{itemize}
\end{proof}

\subsection{Right-hand remainders}

\remaindersfinite*
\begin{proof}
Let $R^+(e)$ denote $R(e) \setminus \{ e \}$.
We first prove a number of auxiliary claims, to wit:
\begin{enumerate}[(i)]
    \item $R^+(0) = \emptyset$
    \item $R^+(1) = \{ 1 \}$
    \item for $a \in \alphabet$, it holds that $R^+(a) = \{ a, 1 \}$
    \item for $e, f \in \terms$, it holds that $R^+(e + f) \subseteq R(e) \cup R(f)$.
    \item for $e, f \in \terms$, it holds that $R^+(e \cdot f) \subseteq \{ e' \cdot f : e' \in R(e) \} \cup R(f)$
    \item for $e, f \in \terms$, it holds that $R^+(e \parallel f) \subseteq \{ e' \parallel f' : e' \in R(e), f' \in R(f) \}$
    \item for $e \in \terms$, it holds that $R^+(e^\star) \subseteq \{ 1, e^\star \} \cup \{ e' \cdot e^\star : e' \in R(e) \} $
\end{enumerate}
To prove a claim of the form $R^+(g) \subseteq T$ for some $g \in \terms$ and $T \subseteq \terms$, it suffices to show that if $\lefts, \rights \in \terms$ such that $\lefts \ssplit{g} \rights$, then $\rights \in T$, and moreover that $T$ is closed under taking right-remainders, i.e., if $h \in T$ and $\lefts, \rights \in \terms$ such that $\lefts \ssplit{h} \rights$, then $\rights \in T$.
We treat the claims one-by-one.
\begin{enumerate}[(i)]
    \item\label{subclaim-remainder:case-zero}
    If $g = 0$ and $T = \emptyset$, then the claim holds vacuously --- there are no $\lefts, \rights \in \terms$ such that $\lefts \ssplit{0} \rights$, and $\emptyset$ is immediately closed under taking right-remainders.

    \item\label{subclaim-remainder:case-one}
    If $g = 1$ and $T = \{ 1 \}$, suppose that $\lefts, \rights \in \terms$ such that $\lefts \ssplit{1} \rights$.
    By definition of $\ssplit{1}$, we then find that $\lefts = \rights = 1$; it then follows that $\rights \in T$.
    By the same argument, $T$ is closed under taking right-remainders.

    \item\label{subclaim-remainder:case-primitive}
    If $g = a$ and $T = \{ a, 1 \}$, suppose that $\lefts, \rights \in \terms$ such that $\lefts \ssplit{a} \rights$.
    By definition of $\ssplit{a}$, we then find that either $\lefts = 1$ and $\rights = a$, or $\lefts = a$ and $\rights = 1$; in both cases, $\rights \in T$.
    By an argument similar to the above, as well as the reasoning for the previous case, $T$ is closed under taking right-remainders.

    \item\label{subclaim-remainder:case-choice}
    If $g = e + f$ and $T = R(e) \cup R(f)$, suppose that $\lefts, \rights \in \terms$ such that $\lefts \ssplit{e + f} \rights$.
    By definition of $\ssplit{e + f}$, we then find that either $\lefts \ssplit{e} \rights$ or $\lefts \ssplit{f} \rights$.
    In the former case, $\rights \in R(e)$, while in the latter case $\rights \in R(f)$; in either case, $\rights \in T$.
    Lastly, $T$ is closed under taking right-remainders because both $R(e)$ and $R(f)$ are, individually.

    \item\label{subclaim-remainder:case-sequential}
    If $g = e \cdot f$ and $T = \{ e' \cdot f : e' \in R(e) \} \cup R(f)$, suppose that $\lefts, \rights \in \terms$ such that $\lefts \ssplit{e \cdot f} \rights$.
    By definition of $\ssplit{e \cdot f}$, we then find that either $\lefts = e \cdot \lefts'$ and $\lefts' \ssplit{f} \rights$, or that $\rights = \rights' \cdot f$ and $\lefts \ssplit{e} \rights'$.
    In the former case, $\rights \in R(f)$; in the latter case, $\rights' \in R(e)$, and thus $\rights \in \{ e' \cdot f : e' \in R(e) \}$; in either case, $\rights \in T$.

    To see that $T$ is closed under taking right-remainders, it suffices to consider the case where $h = e' \cdot f$ for some $e' \in R(e)$.
    If $\lefts, \rights \in \terms$ are such that $\lefts \ssplit{h} \rights$, then either $\lefts = e' \cdot \lefts'$ and $\lefts' \ssplit{f} \rights$, or $\rights = \rights' \cdot f$ and $\lefts \ssplit{e'} \rights'$.
    In the former case, $\rights \in R(f)$, while in the latter case $\rights' \in R(e') \subseteq R(e)$, and thus $\rights \in \{ e' \cdot f : e' \in R(e) \}$; in either case, $\rights \in T$.

    \item\label{subclaim-remainder:case-parallel}
    If $g = e \parallel f$ and $T = \{ e' \parallel f' : e' \in R(e), f' \in R(f) \}$, suppose that $\lefts, \rights \in \terms$ such that $\lefts \ssplit{e \parallel f} \rights$.
    By definition of $\ssplit{e \parallel f}$, we find that $\lefts = \lefts_e \parallel \lefts_f$ and $\rights = \rights_e \parallel \rights_f$ such that $\lefts_e \ssplit{e} \rights_e$ and $\lefts_f \ssplit{f} \rights_f$.
    In that case, $\rights_e \in R(e)$ and $\rights_f \in R(f)$, and thus $\rights \in T$.

    To see that $T$ is closed under taking right-remainders, an argument similar to the above applies.

    \item\label{subclaim-remainder:case-star}
    If $g = e^\star$ and $T = \{ 1, e^\star \} \cup \bigcup_{e' \in R(e)} R(e' \cdot e^\star)$, suppose that $\lefts, \rights \in \terms$ such that $\lefts \ssplit{e^\star} \rights$.
    By definition of $\ssplit{e^\star}$, we find that either $\lefts = \rights = 1$, or $\lefts = e^\star \cdot \lefts'$ and $\rights = \rights' \cdot e^\star$ with $\lefts' \ssplit{e} \rights'$.
    In the former case, $\rights \in T$ immediately; in the latter case, we find that $\rights' \in R(e)$, and thus $\rights \in \{ e' \cdot e^\star : e' \in R(e) \} \subseteq T$.

    To see that $T$ is closed under taking right-remainders, note that the case for $h = 1$ is covered by~\eqref{subclaim-remainder:case-one}, and the case where $h = e^\star$ is discussed above.
    It therefore suffices to consider the case where $h = e' \cdot e^\star$ for some $e' \in R(e)$.
    Suppose that $\lefts, \rights \in \terms$ such that $\lefts \ssplit{e' \cdot e^\star} \rights$; by definition of $\ssplit{e' \cdot e^\star}$, we know that either $\lefts = e' \cdot \lefts'$ and $\lefts' \ssplit{e^\star} \rights$, or $\rights = \rights' \cdot e^\star$ and $\lefts \ssplit{e'} \rights'$.
    In the former case, $\rights \in T$ by the argument for $g = e^\star$ above.
    In the latter case, $\rights = \rights' \cdot e^\star \in \{ e'' \cdot e^\star : e'' \in R(e') \} \subseteq \{ e'' \cdot e^\star : e'' \in R(e) \} \subseteq T$.
\end{enumerate}

We can use these observations to show that $R(e) = R^+(e) \cup \{ e \}$ is finite, by induction on $e$.
In the base, where $e = 0$, $e = 1$ or $e = a$, we have that $R(e)$ is finite by~\eqref{subclaim-remainder:case-one}--\eqref{subclaim-remainder:case-primitive}.
In the inductive step, assume that the claim holds for all proper subterms of $e$.
We now have that $e = e_0 + e_1$, $e = e_0 \cdot e_1$, $e = e_0 \parallel e_1$ or $e = e_0^\star$ for some $e_0, e_1 \in \terms$.
It then follows that $R(e)$ is finite by~\eqref{subclaim-remainder:case-choice}--\eqref{subclaim-remainder:case-star} and the induction hypothesis.
\end{proof}

\fi%

\section{Worked example: a non-trivial closure}%
\label{appendix:example-solve}

In this appendix, we solve an instance of a linear system as defined in \Cref{definition:closure-system} for a given parallel composition.
For the sake of brevity, the steps are somewhat coarse-grained; the reader is encouraged to reproduce the steps by hand.

Consider the expression $e \parallel f = a^* \parallel b$.
The linear system $\ls{L}_{e, f}$ that we obtain from this expression consists of six inequations; in matrix form (with zeroes omitted), this system is summarised as follows:%
\footnote{%
    Actually, the system obtained from $a^\star \parallel b$ as a result of \Cref{definition:closure-system} is slightly larger; it also contains rows and columns labelled by $1 \cdot a^\star \parallel 1$ and $1 \cdot a^\star \parallel b$; these turn out to be redundant.
    We omit these rows from the example for simplicity.
}
\[
\begin{array}{*6{r}}
\color{gray} 1 \parallel 1 \\
\color{gray} 1 \parallel b\hspace{0.26mm} \\
\color{gray} a \cdot a^\star \parallel 1 \\
\color{gray} a^\star \parallel 1 \\
\color{gray} a \cdot a^\star \parallel b\hspace{0.26mm} \\
\color{gray} a^\star \parallel b\hspace{0.26mm}
\end{array}
\left(%
\setlength{\arraycolsep}{0.7em}
\begin{array}{*6{c};{2pt/2pt}c}
1 &   &   &   &   &   & 1 \\
b & 1 &   &   &   &   & b \\
a &   & a^\star & a \cdot a^\star &   &   & a \cdot a^\star \\
1 &   & a^\star & a^\star \cdot a &   &   & a^\star \\
a \parallel b & a & a^\star \parallel b & a \cdot a^\star \parallel b & a^\star & a \cdot a^\star & a \cdot a^\star \parallel b \\
b & 1 & a^\star \parallel b & a \cdot a^\star \parallel b & a^\star & a \cdot a^\star & a^\star \parallel b \\
\end{array}
\right)
\]

Let us proceed under the assumption that $x$ is a solution to the system; the constraint imposed on $x$ by the first two rows is given by the inequations
\begin{align}
x(1 \parallel 1) + 1 &\leqqcka x(1 \parallel 1) \\
b \cdot x(1 \parallel 1) + x(1 \parallel b) + b &\leqqcka x(1 \parallel b)
\end{align}
Because these inequations do not involve the other positions of the system, we can solve them in isolation, and use their solutions to find solutions for the remaining positions; it turns out that choosing $x(1 \parallel 1) = 1$ and $x(1 \parallel b) = b$ suffices here.

We carry on to fill these values into the inequations given by the third and fourth row of the linear system.
After some simplification, these work out to be
\begin{align}
a \cdot a^{\star} + a \cdot a^{\star} \cdot x(a^{\star} \parallel 1) + a^{\star} \cdot x(a \cdot a^{\star} \parallel 1) &\leqqcka x(a \cdot a^{\star} \parallel 1) \label{ineq:third-row}\\
a^{\star} + a^{\star} \cdot a \cdot x(a^{\star} \parallel 1) + a^{\star} \cdot x(a \cdot a^{\star} \parallel 1) &\leqqcka x(a^{\star} \parallel 1) \label{ineq:fourth-row}
\end{align}
Applying the least fixpoint axiom to~\eqref{ineq:third-row} and simplifying, we obtain
\begin{equation}
a \cdot a^\star + a \cdot a^\star \cdot x(a^\star \parallel 1) \leqqcka x(a \cdot a^\star \parallel 1)
\end{equation}
Substituting this into~\eqref{ineq:fourth-row} and simplifying, we find that
\begin{equation}
a^{\star} + a \cdot a^\star \cdot x(a^{\star} \parallel 1) \leqqcka x(a^{\star} \parallel 1)
\end{equation}
This inequation, in turn, gives us that $a^\star \leqqcka x(a^\star \parallel 1)$ by the least fixpoint axiom.
Plugging this back into~\eqref{ineq:third-row} and simplifying, we find that
\begin{equation}
a \cdot a^{\star} + a^{\star} \cdot x(a \cdot a^{\star} \parallel 1) \leqqcka x(a \cdot a^{\star} \parallel 1)
\end{equation}
Again by the least fixpoint axiom, this tells us that $a \cdot a^\star \leqqcka x(a \cdot a^\star \parallel 1)$.
One easily checks that $x(a \cdot a^\star \parallel 1) = a \cdot a^\star$ and $x(a^\star \parallel 1) = a^\star$ are solutions to~\eqref{ineq:third-row} and~\eqref{ineq:fourth-row}; by the observations above, they are also the least solutions.

It remains to find the least solutions for the final two positions.
Filling in the values that we already have, we find the following for the fifth row:
\begin{align}
a \parallel b + a \cdot b + (a^\star \parallel b) \cdot a \cdot a^\star + (a \cdot a^\star \parallel b) \cdot a^\star \hspace{1cm}\nonumber \\
\phantom{0} + a^\star \cdot x(a \cdot a^\star \parallel b) + a \cdot a^\star \cdot x(a^\star \parallel b) + a \cdot a^\star \parallel b &\leqqcka x(a \cdot a^\star \parallel b) \label{ineq:fifth-row}
\end{align}
Applying the exchange law%
\footnote{%
    A caveat here is that applying the exchange law indiscriminately may lead to a term that is not a closure (specifically, it may violate the semantic requirement in \Cref{definition:closure}).
    The algorithm used to solve arbitrary linear systems in \Cref{lemma:linear-system-solution} does not make use of the exchange law to simplify terms, and thus avoids this pitfall.
}
to the first three terms, we find that they are contained in $(a \cdot a^\star \parallel b) \cdot a^\star$, as is the last term;~\eqref{ineq:fifth-row} thus simplifies to
\begin{align}
(a \cdot a^\star \parallel b) \cdot a^\star + a^\star \cdot x(a \cdot a^\star \parallel b) + a \cdot a^\star \cdot x(a^\star \parallel b) &\leqqcka x(a \cdot a^\star \parallel b) \label{ineq:fifth-row-simplified}
\end{align}
By the least fixpoint axiom, we find that
\begin{align}
a^\star \cdot (a \cdot a^\star \parallel b) \cdot a^\star + a \cdot a^\star \cdot x(a^\star \parallel b) &\leqqcka x(a \cdot a^\star \parallel b) \label{ineq:fifth-row-fixpoint}
\end{align}

For the sixth row, we find that after filling in the solved positions, we have
\begin{align}
b + b + (a^\star \parallel b) \cdot a \cdot a^\star + (a \cdot a^\star \parallel b) \cdot a^\star \hspace{14mm}\nonumber\\
\phantom{0} + a^\star \cdot x(a \cdot a^\star \parallel b) + a \cdot a^\star \cdot x(a^\star \parallel b) + a^\star \parallel b &\leqqcka x(a^\star \parallel b)
\end{align}
Simplifying and applying the exchange law as before, it follows that
\begin{align}
(a^\star \parallel b) \cdot a^\star + a^\star \cdot x(a \cdot a^\star \parallel b) + a \cdot a^\star \cdot x(a^\star \parallel b) &\leqqcka x(a^\star \parallel b) \label{ineq:sixth-row-simplified}
\end{align}
We then subsitute~\eqref{ineq:fifth-row-fixpoint} into~\eqref{ineq:sixth-row-simplified} to find that
\begin{align}
(a^\star \parallel b) \cdot a^\star + a \cdot a^\star \cdot x(a^\star \parallel b) &\leqqcka x(a^\star \parallel b)
\end{align}
which, by the least fixpoint axiom, tells us that $a^\star \cdot (a^\star \parallel b) \cdot a^\star \leqqcka x(a^\star \parallel b)$.
Plugging the latter back into~\eqref{ineq:fifth-row-simplified}, we find that
\begin{align}
a^\star \cdot (a \cdot a^\star \parallel b) \cdot a^\star + a \cdot a^\star \cdot a^\star \cdot (a^\star \parallel b) \cdot a^\star &\leqqcka x(a \cdot a^\star \parallel b)
\end{align}
which can, using the exchange law, be reworked into
\begin{align}
a^\star \cdot (a \cdot a^\star \parallel b) \cdot a^\star &\leqqcka x(a \cdot a^\star \parallel b)
\end{align}

Now, if we choose $x(a \cdot a^\star \parallel b) = a^\star \cdot (a \cdot a^\star \parallel b) \cdot a^\star$ and $x(a^\star \parallel b) = a^\star \cdot (a^\star \parallel b) \cdot a^\star$, we find that these choices satisfy~\eqref{ineq:fifth-row-simplified} and~\eqref{ineq:sixth-row-simplified} --- making them part of a solution; by construction, they are also the least solutions.

In summary, $x$ is a solution to the linear system, and by construction it is also the least solution.
The reader is encouraged to verify that our choice of $x(a^\star \parallel b)$ is indeed a closure of $a^\star \parallel b$.

\bibliography{bibliography}

\end{document}